\def\0{{\mathbf 0}}
\def\1{{\mathbf 1}}
\def\d{{\mathbf d}}
\def\e{{\mathbf e}}
\def\f{{\mathbf f}}
\def\h{{\mathbf h}}
\def\p{{\mathbf p}}
\def\q{{\mathbf q}}
\def\s{{\mathbf s}}
\def\u{{\mathbf u}}
\def\v{{\mathbf v}}
\def\x{{\mathbf x}}
\def\y{{\mathbf y}}
\def\z{{\mathbf z}}
\def\A{{\mathbf A}}
\def\C{{\mathbf C}}
\def\D{{\mathbf D}}
\def\F{{\mathbf F}}
\def\G{{\mathbf G}}
\def\F{{\mathbf F}}
\def\H{{\mathbf H}}
\def\I{{\mathbf I}}
\def\L{{\mathbf L}}
\def\M{{\mathbf M}}
\def\P{{\mathbf P}}
\def\Q{{\mathbf Q}}
\def\R{{\mathbf R}}
\def\U{{\mathbf U}}
\def\V{{\mathbf V}}
\def\W{{\mathbf W}}
\def\ie{{\textit{i.e.}}}
\def\eg{{\textit{e.g.}}}
\def\cE{{\mathcal E}}
\def\cG{{\mathcal G}}
\def\cH{{\mathcal H}}
\def\cL{{\mathcal L}}
\def\cN{{\mathcal N}}
\def\cO{{\mathcal O}}
\def\cP{{\mathcal P}}
\def\cS{{\mathcal S}}
\def\cT{{\mathcal T}}
\def\cV{{\mathcal V}}
\def\balpha{{\boldsymbol \alpha}}
\def\bPhi{{\boldsymbol \Phi}}
\def\bSigma{{\boldsymbol \Sigma}}
\def\bTheta{{\boldsymbol \Theta}}
\newtheorem{theorem}{Theorem}
\newtheorem{lemma}{Lemma}
\newtheorem{corollary}{Corollary}
\newcommand{\red}[1] {\textcolor[rgb]{0.0,0.0,0.0}{{#1}}}  %{\textcolor[rgb]{1.0,0.0,0.0}{{#1}}}
\newcommand{\blue}[1] {\textcolor[rgb]{0.0,0.0,0.0}{{#1}}}  %{\textcolor[rgb]{0.0,0.0,1.0}{{#1}}}  
\newcommand\arXiv{1}
\begin{document}

\title{Manifold Graph Signal Restoration using \\
Gradient Graph Laplacian Regularizer}
%
% Single address.
% ---------------
\author{
\IEEEauthorblockN{Fei Chen, \emph{Member, IEEE}, Gene Cheung, \emph{Fellow, IEEE}, Xue Zhang, \emph{Member, IEEE}}
\renewcommand{\baselinestretch}{1.0}
\thanks{The work of F. Chen was supported in part by the National Natural Science Foundation of China (61771141) and the Natural Science Foundation of Fujian Province (2021J01620). The work of G. Cheung was supported in part by the Natural Sciences and Engineering Research Council of Canada (NSERC) RGPIN-2019-06271, RGPAS-2019-00110. The work of X. Zhang was supported in part by the National Natural Science Foundation of China (62302278), the Natural Science Foundation of Shandong Province (ZR2023QF014) and the Natural Science Foundation of Qingdao Municipality (23-2-1-112-zyyd-jch). \textit{(Corresponding author: Xue Zhang)}}
\thanks{F. Chen is with College of Computer and Data Science, Fuzhou University, Fuzhou, China (e-mail:chenfei314@fzu.edu.cn).}
\thanks{ G. Cheung is with the department of EECS, York University, 4700 Keele Street, Toronto, M3J 1P3, Canada (e-mail: genec@yorku.ca).} 
\thanks{X. Zhang is with the College of Computer Science and Engineering, Shandong University of Science and Technology, Qingdao 266590, China (e-mail: xuezhang@sdust.edu.cn).}
}
%
% For example:
% ------------
%\address{School\\
%	Department\\
%	Address}
%
% Two addresses (uncomment and modify for two-address case).
% ----------------------------------------------------------
%\twoauthors
%  {A. Author-one, B. Author-two\sthanks{Thanks to XYZ agency for funding.}}
%	{School A-B\\
%	Department A-B\\
%	Address A-B}
%  {C. Author-three, D. Author-four\sthanks{The fourth author performed the work
%	while at ...}}
%	{School C-D\\
%	Department C-D\\
%	Address C-D}
%

\maketitle
\begin{abstract}
In the graph signal processing (GSP) literature, graph Laplacian regularizer (GLR) was used for signal restoration to promote piecewise smooth / constant reconstruction with respect to an underlying graph. 
However, for signals slowly varying across graph kernels, GLR suffers from an undesirable ``staircase" effect. 
In this paper, focusing on manifold graphs---collections of uniform discrete samples on low-dimensional continuous manifolds---we generalize GLR to gradient graph Laplacian regularizer (GGLR) that promotes planar / piecewise planar (PWP) signal reconstruction.
Specifically, for a graph endowed with sampling coordinates (e.g., 2D images, 3D point clouds), we first define a gradient operator, using which we construct a gradient graph for nodes' gradients in the sampling manifold space.
This maps to a gradient-induced nodal graph (GNG) and a positive semi-definite (PSD) Laplacian matrix with planar signals as the $0$ frequencies. 
%Each local gradient is estimated using structure tensor, which is robust using known pixels in a small neighborhood, alleviating the potential problem of larger noise variance when computing gradient of gradients.
For manifold graphs without explicit sampling coordinates, we propose a graph embedding method to obtain node coordinates via fast eigenvector computation.
We derive the means-square-error minimizing weight parameter for GGLR efficiently, trading off bias and variance of the signal estimate.
Experimental results show that GGLR outperformed previous graph signal priors like GLR and graph total variation (GTV) in a range of graph signal restoration tasks.
\end{abstract}
\begin{IEEEkeywords}
Graph signal processing, graph smoothness priors, graph embedding, quadratic programming
\end{IEEEkeywords}
\section{Introduction}
\label{sec:intro}
%\IEEEPARstart{D}
Due to cost, complexity and limitations of signal sensing, acquired signals are often imperfect with distortions and/or missing samples. 
\textit{Signal restoration} is the task of recovering a pristine signal from corrupted and/or partial observations.
We focus on restoration of \textit{graph signals}---sets of discrete samples residing on graph-structured data kernels---studied in the \textit{graph signal processing} (GSP) field \cite{ortega18ieee,cheung21}. 
Examples of graph signal restoration include denoising \cite{pang17,zeng20,dinesh20}, dequantization \cite{liu17,liu19}, deblurring \cite{bai19}, and interpolation \cite{chen21}.

As an under-determined problem, signal restoration requires appropriate signal priors for regularization. 
Given a graph kernel encoded with pairwise similarities or correlations as edge weights, there exist numerous \textit{graph smoothness priors} that assume a target signal $\x \in \mathbb{R}^N$ is ``smooth" with respect to (w.r.t.) to the underlying graph $\cG$ in various mathematical forms\footnote{A review of graph smoothness priors is presented in Section\;\ref{subsec:smoothness}.} \cite{elmoataz08,couprie13,chen15,pang17,bai19,zeng20,dinesh20,chen21}.
Among them, the most common is the \textit{graph Laplacian regularizer} (GLR) \cite{pang17}, which minimizes $\x^\top \L \x$ assuming signal $\x$ is smooth w.r.t. to a graph $\cG$ specified by \textit{positive semi-definite} (PSD) graph Laplacian matrix $\L \in \mathbb{R}^{N \times N}$. 
GLR is popular partly because of its \textit{spectral interpretation} \cite{pang17}: minimizing $\x^\top \L \x$ means ``promoting"\footnote{\blue{By ``promote" we mean adding a non-negative signal prior $\text{Pr}(\x)$ to a minimization objective $f(\x)$ to \textit{bias} the solution towards $\x$ where $\text{Pr}(\x)=0$. See \cite{Belkin2018ReconcilingMM} for a comprehensive overview on bias-variance tradeoffs.}} (biasing the solution towards)  low graph frequencies---eigenvectors of $\L$ corresponding to small non-negative eigenvalues of $\L$, where the $0$ frequency for a positive connected graph $\cG$ without self-loops is the constant signal. 
It is also popular because of its ease in optimization---when combined with an $\ell_2$-norm fidelity term, it results in a system of linear equations for the optimal solution $\x^*$ that can be computed efficiently using fast numerical methods like \textit{conjugate gradient} (CG) \cite{axelsson1986rate} without matrix inverse. 

Further, a \textit{signal-dependent} (SD) variant of GLR---where each graph edge weight $w_{i,j}$ is inversely proportional to the difference in sought signal samples, $|x_i - x_j|$, similarly done in the \textit{bilateral filter} \cite{tomasi98}---has been shown to promote  
\textit{piecewise constant} (PWC) signal reconstruction \cite{pang17,liu17}. 
SDGLR has been used successfully in a number of practical applications: image denoising \cite{pang17}, JPEG image dequantization \cite{liu17}, 3D point cloud denoising \cite{dinesh20}, etc. 

\begin{figure}
\centering
\begin{tabular}{c}
\includegraphics[width=3.2in]{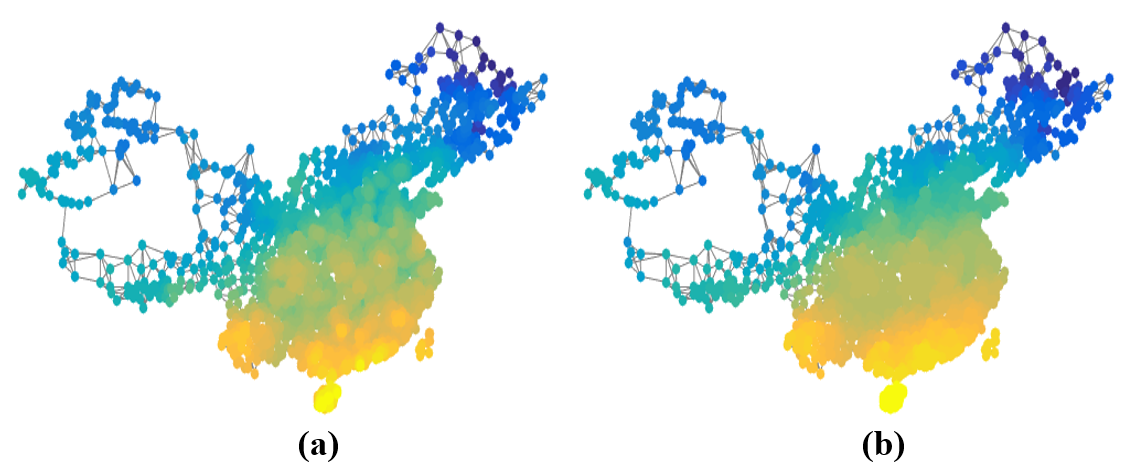}
\vspace{-0.1in}
\end{tabular}
\caption{\small Examples of denoising smooth graph signals: 
%a) natural image patches that are smooth / piecewise smooth; b) European temperature map that is slowly varying with respect to latitude. 
a) result (PSNR:$27.13$dB) using SDGLR on temperature signal on China graph with noise variance $50$; b) result (PSNR:$28.87$dB) using SDGGLR. ``Staircase" effect of SDGLR---artificial boundaries between adjacent constant pieces---compared to SDGGLR is visually obvious.}
\label{fig:PWP_ex}
%\vspace{-0.1in}
\end{figure}

However, like \textit{total variation} (TV) \cite{chambolle1997image} common in image denoising that also promotes PWC signal reconstruction (but at a higher computation cost due to its non-differentiable $\ell_1$-norm \cite{chen21}), GLR suffers from the well-known adverse ``staircase" effect for signals with gradually changing intensity---artificially created boundaries between adjacent constant pieces that do not exist in the original signals. 
Fig.\;\ref{fig:PWP_ex}(a) shows a denoising example of a noise-corrupted temperature signal on a geographical graph of China, where the resulting blocking artifacts between adjacent reconstructed constant pieces are visually obvious. 
For images on 2D grids, as one remedy TV was extended to \textit{total generalized variation} (TGV) \cite{bredies2010total, bredies2015tgv} to promote piecewise linear / planar (PWP) signal reconstruction---a signal with piecewise constant gradients.

Surprisingly, to our knowledge no such generalized graph signal prior exists in GSP to promote PWP graph signals\footnote{\blue{While TV promoting PWC signals has been adapted in graph domains as \textit{graph total variation} (GTV) \cite{elmoataz08,couprie13}, there is no graph-equivalent of TGV.}}.
We believe that the difficulty in defining planar / PWP signals on graphs lies in the reliance on the conventional notion of \textit{graph gradient} computed using the Laplacian operator (to be dissected in Section\;\ref{subsec:whyNot}).
Instead, \textbf{we define planar / PWP graph signals using \textit{manifold gradients} on \textit{manifold graphs}}, where each graph node is a discrete sample on a continuous manifold with a sampling coordinate \cite{costa04,pang17}---node coordinates and graph structure are intimately related via \textit{graph embeddings}\footnote{A graph embedding computes node coordinates so that Euclidean distances between coordinate pairs reflect node-to-node distance in the original graph.} \cite{xu21}.
Manifold gradients are computed at graph nodes using sampling coordinates in a neighborhood, and a prior promoting constant / PWC manifold gradient reconstruction can lead to planar / PWP signal reconstruction.

Specifically, extending GLR, in this paper we propose a higher-order graph smoothness prior called \textit{gradient graph Laplacian regularizer} (GGLR) that promotes PWP signal reconstruction for  restoration tasks\footnote{An earlier conference version of this work focused exclusively on image interpolation \cite{chen21}.}.  
We divide our investigation into two parts.
In the first part, we assume that each graph node $i$ is endowed with a coordinate vector $\p_i \in \mathbb{R}^K$ in a manifold space of dimension $K \ll N$.
Examples include 2D images and 3D point clouds.
Using these node coordinates, for a node $i$ we define a \textit{gradient operator} $\F^i$, using which we compute manifold gradient $\balpha^i$ at node $i$ w.r.t. nodes in a local neighborhood.
To connect $\{\balpha^i\}$, we construct a positive \textit{gradient graph} $\bar{\cG}$ with Laplacian $\bar{\L}^o$, from which we define GGLR $\balpha^\top \bar{\L}^o \balpha = \x^\top \cL \x$, where $\cL$ is the \textit{gradient-induced nodal graph} (GNG) Laplacian.
We prove GGLR's promotion of planar signal reconstruction---specifically, \textbf{planar signals are the $0$ frequencies of PSD matrix $\cL$}.
We then show that its signal-dependent variant SDGGLR promotes PWP signal reconstruction.

%For practical use, we also discuss how gradients $\g^k$ can be computed robustly in the presence of noise via local least-square estimates \cite{chao21}, and how the important weight parameter $\mu$ trading off the fidelity term and GGLR can be estimated via bias-variance analysis \cite{chen2017bias}. 

In the second part, we assume the more general case where nodes in the manifold graph do not have accompanied sampling coordinates. 
We propose a \textit{parameter-free} graph embedding method \cite{Chen2022DSLW}, based on fast eigenvector computation \cite{lobpcg}, to compute manifold coordinates $\p_i$ for nodes $i$, using which we can define GGLR for regularization in a signal restoration task, as done in part one.

Crucial in regularization-based optimization is the selection of the tradeoff parameter $\mu$ that weighs the regularizer against the fidelity term.
Extending the analysis in \cite{chen2017bias}, we derive the means-square-error (MSE) minimizing $\mu$ for GGLR efficiently by trading off bias and variance of the signal estimate. 

We tested GGLR in four different practical applications: image interpolation, 3D point cloud color denoising, age estimation, and player rating estimation. 
Experimental results show that GGLR outperformed previous graph signal priors like GLR \cite{pang17} and graph total variation (GTV) \cite{elmoataz08,couprie13}.

We summarize our technical contributions as follows:
\begin{enumerate}
\item For graph nodes endowed with sampling coordinates, we define the higher-order Gradient Graph Laplacian Regularizer (GGLR) for graph signal restoration, which retains its quadratic form and thus ease of computation under specified conditions (Theorem\;\ref{thm:closedForm}).

\item We prove GGLR's promotion of planar signal reconstruction (Theorem\;\ref{thm:GGLR})---planar signals are $0$ frequencies of PSD GNG Laplacian $\cL$. 
We demonstrate that its signal-dependent variant SDGGLR promotes PWP signal reconstruction.

%\item We provide methodologies to compute gradients robustly in noisy environments, and to estimate the tradeoff parameter between fidelity term and GGLR (Theorem\;\ref{thm:MSE}). 

\item For manifold graphs without sampling coordinates, we propose a fast graph embedding method (Algorithm\;\ref{alg:alg_embedding}) to first obtain node coordinates via fast eigenvector computation before employing SDGGLR.

\item We derive the MSE-minimizing weight parameter $\mu$ for GGLR in a MAP formulation, trading off bias and variance of the signal estimate (Theorem\;\ref{thm:MSE}).

\item We demonstrate the efficacy of GGLR in four real-world graph signal restoration applications, outperforming previous graph smoothness priors. 
\end{enumerate}

The outline of the paper is as follows.
We overview related works in Section\;\ref{sec:related} and review basic definitions in Section\;\ref{sec:prelim}. 
We define GGLR in Section\;\ref{sec:gglr}.
We derive an MSE-minimizing weight parameter $\mu$ in Section\;\ref{sec:optPara}. 
We present our graph embedding method in Section\;\ref{sec:general}.
Experiments and conclusion are presented in Section\;\ref{sec:results} and \ref{sec:conclude}, respectively.

\section{Related Works}
\label{sec:related}
We first overview graph smoothness priors in the GSP literature. 
We next discuss priors in image restoration, in particular, TV and TGV that promote PWC and PWP image reconstruction respectively.
%similar to GLR and GGLR. 
Finally, we overview graph embedding schemes and compare them with our method to compute node coordinates for a manifold graph.

\subsection{Smoothness Priors for Graph Signal Restoration}
\label{subsec:smoothness}

Restoration of graph signals from partial and/or corrupted observations has long been studied in GSP \cite{chen15,cheung21}. 
The most common graph smoothness prior to regularize an inherently ill-posed signal restoration problem is \textit{graph Laplacian regularizer} (GLR) \cite{pang17} $\x^\top \L \x$, where $\L$ is a graph Laplacian matrix corresponding to a similarity graph kernel for signal $\x$. 
Interpreting the adjacency matrix $\W$ as a graph shift operator, \cite{chen15} proposed the \textit{graph shift variation} (GSV) $\|\x - \frac{1}{|\lambda_{\max}|} \W\x\|^2_2$, where $\lambda_{\max}$ is the spectral radius of $\W$. 
Graph signal variations can also be computed in $\ell_1$-norm as \textit{graph total variation} (GTV) \cite{elmoataz08,couprie13}.
Though convex, minimization of $\ell_1$-norm like GTV requires iterative algorithms like \textit{proximal gradient} (PG) \cite{parikh13} that are often computation-expensive.

GLR can be defined in a \textit{signal-dependent} (SD) manner as $\x^\top \L(\x) \x$, where Laplacian $\L(\x)$ is a function of sought signal $\x$.  
It has been shown \cite{pang17,liu17} that SDGLR promotes PWC signal reconstruction. 
Our work extends SDGLR to SDGGLR to promote PWP signal reconstruction, while retaining its convenient quadratic form for fast computation. 

\ifnum\arXiv=0
    Alternatively, one can model generation of graph signals probabilistically via a Gaussian process \cite{venkitaraman20,miao22} for prediction. In contrast, the aforementioned approaches are deterministic, requiring little parameterization and no training.
\fi

\subsection{Priors for Image Restoration} 

Specifically for image restoration, there exist a large variety of signal priors through decades of research \cite{milanfar13}. 
Edge-guided methods such as \textit{partial differential equations} (PDE) \cite{folland1995introduction} provide smooth image interpolation, but perform poorly when missing pixels are considerable. 
\textit{Sparse coding} \cite{elad06} that reconstructs sparse signal representations by selecting a few atoms from a trained over-complete dictionary was prevalent, but computation of sparse code vectors via minimization of $\ell_0$- or $\ell_1$-norms can be expensive. 
\textit{Total variation} (TV) \cite{chambolle1997image} was a popular image prior due to available algorithms in minimizing convex but non-differentiable $\ell_1$-norm. 
Its generalization, \textit{total generalized variation} (TGV) \cite{bredies2010total, bredies2015tgv}, better handles the known staircase effect, but retains the non-differentiable $\ell_1$-norm that requires iterative optimization. 
Like TGV, GGLR also promotes PWP signal reconstruction, but is non-convex and differentiable, leading to fast optimization.

\subsection{Graph Embeddings}

A graph embedding computes a latent coordinate $\p_i \in \mathbb{R}^K$ for each node $i$ in a graph $\cG$, so that pairwise relationships $(i,j)$ in $\cG$ are retained in the $K$-dimensional latent space as Euclidean distances.  
Existing embedding methods can be classified into three categories \cite{xu21}: matrix factorization, random walk, and deep learning. 
Matrix factorization-based methods, such as \textit{locally linear embedding} (LLE) \cite{LLE} and \textit{Laplacian eigenmap} (LE) \cite{LE}, obtain an embedding by decomposing the large sparse adjacency matrix. 
Complexities of these methods are typically $\cO(N^2)$ and thus are not scalable to large graphs.
Random walk-based methods like Deepwalk \cite{Deepwalk} and node2vec \cite{node2vec} use a random walk process to encode the co-occurrences of nodes to obtain scalable graph embeddings. 
These schemes typically have complexity $\cO(N\log{N})$.

Deep learning approaches, especially autoencoder-based methods \cite{NetWalk} and \textit{graph convolutional network} (GCN) \cite{EGCN}, are also widely studied for graph embedding. 
However, pure deep learning methods require long training time and large memory footprint to store a sizable set of trained parameters.

%More generally, one can interpret graph embedding as a task in \textit{representation learning} given an input graph structure. \cite{hamilton17} formalized this notion in an \textit{encoder-decoder} framework: i) an encoder first maps all graph nodes into $K$-dimensional vectors $\cV \mapsto \mathbb{R}^K$, then ii) a decoder maps every pair of graph embeddings back to a proximity metric, \ie, $\mathbb{R}^K \times \mathbb{R}^K \mapsto \mathbb{R}^+$, that reflects the closeness of two nodes in the original graph. A \textit{loss function} can be subsequently defined to measure the quality of the decoder's reconstructed proximities for all node pairs. All the aforementioned graph embedding schemes are thus optimizations of encoder parameters to minimize the defined loss function. 

%In even more general terms, a node $i$'s proximity information to all other nodes can be directly inputted to an autoencoder for compact representation learning \cite{cao2016deep,wang2016structural}. Moreover, each node may be endowed with \textit{attributes} that can influence the notion of node-pair proximity, and GCN can be used to learn representations using a node $i$'s local neighborhood and attributes within \cite{kipf2016semi,schlichtkrull2018modeling}. 

%Instead of these recent approaches that progressively increase the generality of the graph embedding problem---resulting in even more complex algorithms---we take the opposite approach of narrowing our focus to embeddings of the important subclass of manifold graphs only. 
Our goal is narrower than previous graph embedding works in that we seek sampling coordinates for nodes in a manifold graph solely for the purpose of defining GGLR for regularization. 
To be discussed in Section\;\ref{sec:general}, a manifold graph means that each node is a uniform discrete sample of a low-dimensional continuous manifold, and thus \textit{graph embedding translates to the discovery of point coordinates in this low-dimensional manifold space}. 
This narrowly defined objective leads to an efficient method based on fast computation of extreme eigenvectors \cite{lobpcg}. 

\section{Preliminaries}
\label{sec:prelim}
We first provide graph definitions used in our formulation, including a review of the common graph smoothness prior GLR \cite{pang17,liu17} and a notion of manifold graph \cite{pang17}.
We then review the definition of a hyperplane in $K$-dimensional Euclidean space and \textit{Gershgorin circle theorem} (GCT) \cite{varga04}.
%Finally, we discuss manifold graphs---our graphs of interest.

\subsection{Graph Definitions}
\label{subsec:defn}

An undirected graph $\cG(\cV,\cE,\W)$ is defined by a set of $N$ nodes $\cV = \{1, \ldots, N\}$, edges $\cE = \{(i,j)\}$, and a real symmetric \textit{adjacency matrix} $\W \in \mathbb{R}^{N \times N}$. 
$W_{i,j} = w_{i,j} \in \mathbb{R}$ is the edge weight if $(i,j) \in \cE$, and $W_{i,j} = 0$ otherwise. 
Edge weight $w_{i,j}$ can be positive or negative; a graph $\cG$ containing both positive and negative edges is called a \textit{signed graph} \cite{su17,cheung18tsipn,yang21}. 
%We do not consider self-loops in this paper, and thus $W_{i,i} = 0, \forall i$. 
%Self-loops may exist, in which case $W_{i,i} \in \mathbb{R}^+$ is the self-loop weight for node $i$.
\textit{Degree matrix} $\D$ is a diagonal matrix with entries $D_{i,i} = \sum_{j} W_{i,j}, \forall i$. 
A \textit{combinatorial graph Laplacian matrix} $\L$ is defined as $\L \triangleq \D - \W$. 
If $\cG$ is a \textit{positive graph} (all edge weights are non-negative, \ie, $W_{i,j} \geq 0, \forall i,j$), then $\L$ is \textit{positive semi-definite} (PSD), \ie, $\x^\top \L \x \geq 0, \forall \x$ \cite{cheung18}. 
%If self-loops exist, then the \textit{generalized graph Laplacian matrix} $\cL$ that accounts for self-loop weights is $\cL \triangleq \D - \W + \text{diag}(\W)$. 

The \textit{graph spectrum} of $\cG$ is defined by the eigen-pairs of $\L$.
First, eigen-decompose $\L$ into $\L = \V \bSigma \V^{\top}$, where $\bSigma = \text{diag}(\{\lambda_k\})$ is a diagonal matrix with $N$ ordered eigenvalues, $\lambda_1 \leq \lambda_2 \leq \ldots \leq \lambda_N$, along its diagonal, and $\V = [\v_1 \ldots \v_N]$ contains corresponding eigenvectors $\v_k$'s as columns. 
The $k$-th eigen-pair, $(\lambda_k,\v_k)$, defines the $k$-th graph frequency and Fourier mode. 
$\V^{\top}$ is the \textit{graph Fourier transform} (GFT) that transforms a graph signal $\x \in \mathbb{R}^N$ to its graph frequency representation via $\tilde{\x} = \V^\top \x$ \cite{ortega18ieee}.

\subsubsection{Manifold Graph}
\label{subsubsec:manifold}

\blue{
It is common to connect a graph (a discrete object) to a continuous manifold
via the common “manifold assumption” \cite{coifman06,hein06}. 
For example, assuming that graph nodes are projected samples chosen randomly from
a uniform distribution over a manifold domain, \cite{hein06} showed that the graph Laplacian matrix converges to the continuous Laplace-Beltrami operator at an estimated rate as the number of points approaches infinity and the distances between points go to zero. 
Under the same uniform sampling assumption, \cite{hein06} showed that a Laplacian regularization term converges to a
continuous penalty functional.
}

We make a similar assumption here, namely that
a manifold graph $\cG(\cV,\cE,\W,\P)$ is a set of $N$ connected nodes that are discrete samples from a uniform distribution on a $K$-dimensional continuous manifold, each node $i \in \cV$ with coordinate $\p_i \in \mathbb{R}^K$. 
Denote by $d(i,j) \triangleq \|\p_i - \p_j\|_2$ the Euclidean distance between nodes $i$ and $j$ in the manifold space.
\blue{We assume that $\cG$ satisfies the following two properties that are common in graph embedding \cite{Cucuringu2015OrdinalEO}}:
\begin{enumerate}
\item $w_{i,j} < w_{i,k}$ implies $d(i,j) > d(i,k)$.
\item If two nodes $j$ and $k$ connected by edge $(j,k) \in \cE$ are respective $h$- and $(h+1)$-hop neighbors of node $i$ in $\cG$, then $d(i,j) < d(i,k)$. 
\end{enumerate}
The first property means that edge weight is inversely proportional to manifold distance, \ie, $w_{i,j} \propto d(i,j)^{-1}$---a reasonable assumption since an edge weight typically captures pairwise similarity / correlation. 
The second property implies that a $(h+1)$-hop node $k$ cannot be closer to $i$ than $h$-hop node $j$ if $\exists (j,k) \in \cE$. 
This is also reasonable given that hop count should reflect some notion of distance.
We invoke these properties in the DAG construction procedure in Section\;\ref{subsec:gradOp} and the graph embedding algorithm in Section\;\ref{subsec:embeddingObj}. 
Examples include $k$-nearest-neighborhood (kNN) graphs for pixels on a 2D image and for 3D points in a point cloud.

\subsubsection{Signal-Independent GLR} 
A signal $\x \in \mathbb{R}^N$ is smooth w.r.t. graph $\cG$ if its GLR, $\x^{\top} \L \x$, is small~\cite{pang17}:
\begin{align}
\x^{\top} \L \x = \sum_{(i,j) \in \cE} w_{i,j} (x_i - x_j)^2 
= \sum_k \lambda_k \tilde{x}_k^2 
\label{eq:GLR}
\end{align}
where $\tilde{x}_k$ is the $k$-th GFT coefficient of $\x$. 
In the nodal domain, a small GLR means a connected node pair $(i,j)$ has similar sample values $x_i$ and $x_j$ for large $w_{i,j}$.
In the spectral domain, a small GLR means signal energies $\tilde{x}_k^2$ mostly reside in low frequencies $\lambda_k$, \ie, $\x$ is roughly a low-pass signal. 

For a connected positive graph without self-loops, from \eqref{eq:GLR} we see that $\x^\top \L \x = 0$ iff $\x$ is a constant signal $c \1$ where $\1$ is an all-one vector, \ie, $x_i = x_j = c, \forall i,j$.
Constant signal $c \1$ is the (unnormalized) eigenvector $\v_1$ of $\L$ to first eigenvalue $\lambda_1 = 0$ (\ie, the $0$ frequency). 
Thus, signal-independent GLR (SIGLR), where graph $\cG$ is defined independent of signal $\x$, promotes constant signal reconstruction.

\subsubsection{Signal-Dependent GLR}

One can alternatively define a \textit{signal-dependent} GLR (SDGLR), where edge weights of graph $\cG$ are defined as a function of sought signal $\x$ \cite{pang17,liu17}.
Specifically, we write
\begin{align}
\x^{\top} \L(\x) \x &= 
\sum_{(i,j) \in \cE} w_{i,j}(x_i,x_j) \, (x_i - x_j)^2 
\label{eq:SDGLR} \\
w_{i,j}(x_i,x_j) &= 
\exp \left( 
- \frac{\| \f_i-\f_j \|^2_2}{\sigma_f^2}
- \frac{|x_i-x_j|^2}{\sigma_x^2} \right)
\label{eq:edgeWeight0}
\end{align}
where $\f_i \in \mathbb{R}^M$ is the \textit{feature vector} of node $i$. 
$w_{i,j}$ is \textit{signal-dependent}---it is a Gaussian function of signal difference $|x_i - x_j|$---and hence $\L(\x)$ is a function of sought signal $\x$.
This edge weight definition \eqref{eq:edgeWeight0} is analogous to \textit{bilateral filter} weights with domain and range filters \cite{tomasi98}.
Hence, each term in the sum \eqref{eq:SDGLR} is minimized when i) $|x_i - x_j|$ is very small (when $(x_i-x_j)^2$ is small), or ii) $|x_i - x_j|$ is very large (when $w_{i,j}$ is small).
Thus, minimizing GLR would force $|x_i - x_j|$ to approach either $0$ or $\infty$, promoting \textit{piecewise constant} (PWC) signal reconstruction as shown in \cite{pang17,liu17}.

We examine SDGLR more closely to develop intuition behind this PWC reconstruction \cite{bai19}. 
For notation simplicity, define $\Delta_{i,j} \triangleq (x_i - x_j)^2$ and $\gamma^\f_{i,j} \triangleq \exp \left( -\|\f_i - \f_j\|^2_2/\sigma_f^2 \right)$. 
Then, for a single term in the sum \eqref{eq:SDGLR} corresponding to $(i,j) \in \cE$, we write

\vspace{-0.05in}
\begin{small}
\begin{align}
w_{i,j}(x_i,x_j)(x_i - x_j)^2 &= \gamma^\f_{i,j} \exp \left( - \frac{\Delta_{i,j}}{\sigma_x^2} \right) \Delta_{i,j} \\
\frac{\partial w_{i,j}(x_i, x_j)(x_i - x_j)^2}{\partial \Delta_{i,j}} &= \gamma^\f_{i,j} \left(1 - \frac{\Delta_{i,j}}{\sigma_x^2} \right) \exp \left( - \frac{\Delta_{i,j}}{\sigma_x^2} \right) .
\label{eq:SDGLR_deriv}
\end{align}
\end{small}\noindent
We see that derivative \eqref{eq:SDGLR_deriv} w.r.t. $\Delta_{i,j}$ increases from $0$ till $\sigma_x^2$ then decreases, and thus $\x^\top \L(\x) \x$ is in general non-convex. 
Further, because the derivative is uni-modal, in an iterative algorithm $\Delta_{i,j}$ is promoted to either $0$ (\textit{positive diffusion}) or $\infty$ (\textit{negative diffusion}) depending on initial value \cite{pang17}. 

As an illustrative example, consider a 5-node line graph $\cG$ in Fig.\;\ref{fig:5node_graph} with initial noisy observation $\y = [2 ~2 ~1.8 ~1.2 ~1]^\top$.  
Suppose we use an iterative algorithm that alternately computes signal via $\min_{\x} \|\y-\x\|^2_2 + \mu \x^\top \L \x$ and updates edge weights $w_{i,j} = \exp(-|x_i - x_j|^2/\sigma_x^2)$ in $\L$ (initial edge weights are computed using observation $\y$).
Assume also that $\sigma_x^2 = 0.1$ and $\mu = 1$. 
It will converge to solution $\x^* = [1.92 ~1.92 ~1.92 ~1.12 ~1.12]$ after $8$ iterations; see Fig.\;\ref{fig:5node_graph}(c) for an illustration.
We see that $\x^*$ is an approximate PWC signal that minimizes SDGLR $\x^\top \L(\x) \x =  0.001$ for a (roughly) disconnected line graph\footnote{The second eigenvalue is also called the \textit{Fiedler number} that reflects the connectedness of the graph \cite{Fiedler1975APO}.}  with $\lambda_2 \approx 0$. 
This demonstrates that SDGLR does promote PWC signal reconstruction. 

\begin{figure}
\centering
\begin{tabular}{c}
\includegraphics[width=3.2in]{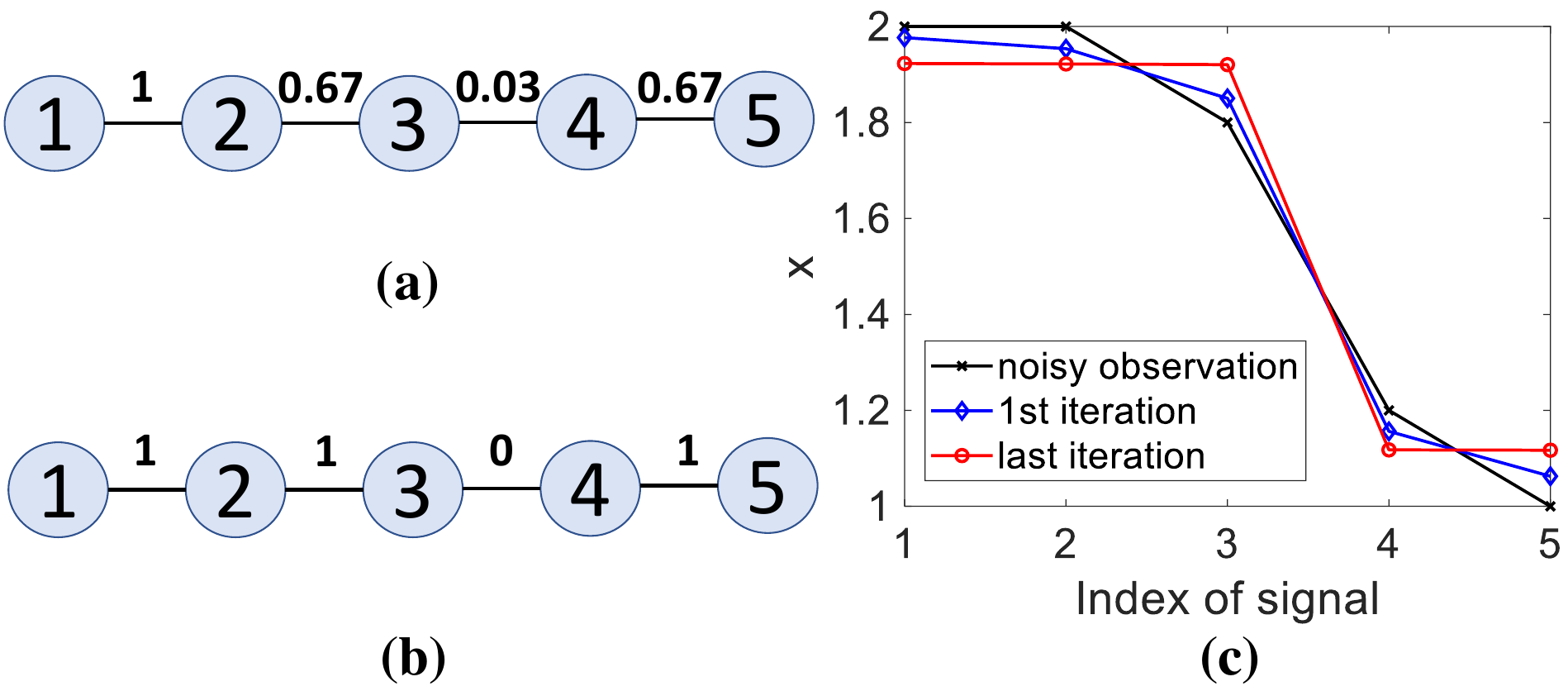}
\vspace{-0.1in}
\end{tabular}
\caption{\small Example of (a) a $5$-node line graph $\cG$ with initial edge weights, (b) the same line graph with converged edge weights, and (c) output $\x$'s during iterations, where the converged signal $\x^*$ in red is PWC.}
\label{fig:5node_graph}
%\vspace{-0.1in}
\end{figure}

\begin{figure}
\centering
\begin{tabular}{c}
\includegraphics[width=3.2in]{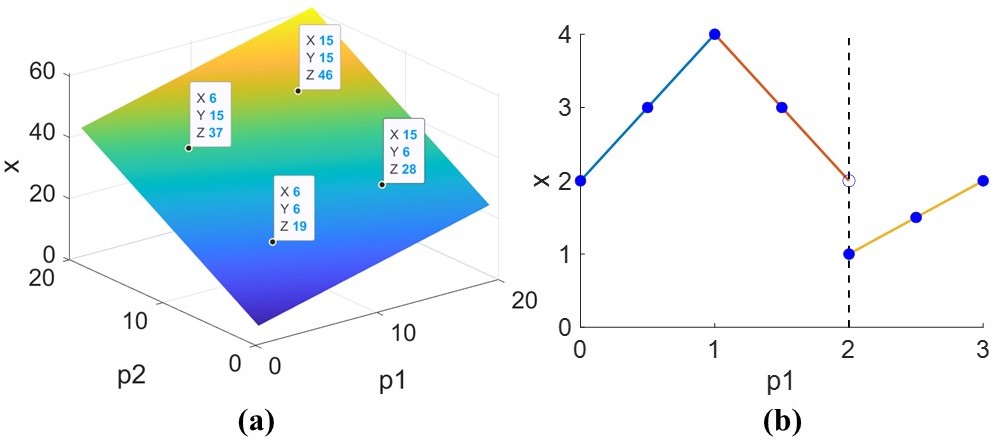}
\vspace{-0.1in}
\end{tabular}
\caption{\small Examples of (a) a planar signal $\x$ with $4$ discrete points residing on a 2D plane $x = 1 + p_1 + 2 p_2$, and (b) a 1D 3-piece linear signal \eqref{eq:3piece} with parameters $(\alpha_0^{(1)}=2, ~\alpha_1^{(1)}=2)$, $(\alpha_0^{(2)}=6, ~\alpha_1^{(2)}=-2)$, and $(\alpha_0^{(3)}=-1, ~\alpha_1^{(3)}=1)$. Note that in (b) the signal is continuous at $p_1=1$ and non-continuous at $p_1 = 2$.}
\label{fig:PlanarSignal}
%\vspace{-0.1in}
\end{figure}

\subsection{$K$-dimensional Planar Signal}
\label{subsec:planar}

%Following the convention in \red{add ref}, 
A discrete signal $\x \in \mathbb{R}^{N}$, where each sample $x_i$ is associated with coordinate vector $\p_i = [p_{i,1}, \ldots p_{i,K}]^\top$, is a \textit{$K$-dimensional (hyper)planar signal}, if each sample $x_i$ can be written as
\begin{align}
x_i = \alpha_0 + \sum_{k=1}^K \alpha_k p_{i,k}  
\label{eq:planar}
\end{align}
where $\{\alpha_k\}_{k=0}^K$ are parameters of the hyperplane.
In other words, all discrete points $\{(\p_i, x_i)\}_{i=1}^N$ of signal $\x$ reside on a hyperplane defined by $x = \alpha_0 + \sum_{k=1}^K \alpha_k p_k$.  
We call $\balpha = [\alpha_1 \cdots \alpha_K]^\top$ the \textit{gradient} of the hyperplane. 
For example, 
%in Fig.\;XXX\,(a) \red{add figure} a $1$-dimensional planar (linear) signal $\x$ has $4$ discrete points $\{(p_{i,1},x_i)\}_{i=1}^N$ residing on a line $x = -1 + 2 p_1$. 
in Fig.\;\ref{fig:PlanarSignal}(a) a $2$-dimensional planar signal $\x$ has $4$ discrete points $\{(\p_i,x_i)\}_{i=1}^4$ residing on a 2D plane $x = 1 + p_1 + 2 p_2$. 

Given a $K$-dimensional planar signal $\x$ with $N > K$ points, one can compute gradient $\balpha$ via a system of linear equations. 
Specifically, using $K+1$ points $(\{(\p_i, x_i)\}_{i=1}^{K+1}$, we can write $K$ linear equations by subtracting \eqref{eq:planar} for point $(\p_i,x_i)$, $2 \leq i \leq K+1$, from \eqref{eq:planar} for point $(\p_1,x_1)$:

\vspace{-0.05in}
\begin{scriptsize}
\begin{align}
\underbrace{\left[ \begin{array}{ccc}
p_{1,1} - p_{2,1} & \ldots & p_{1,K} - p_{2,K} \\
p_{1,1} - p_{3,1} & \ldots & p_{1,K} - p_{3,K} \\
\vdots & & \vdots \\
p_{1,1} - p_{K+1,1} & \ldots & p_{1,K} - p_{K+1,K}
\end{array} \right]}_{\C} 
\left[ \begin{array}{c}
\alpha_1 \\
\alpha_2 \\
\vdots \\
\alpha_K
\end{array} \right] &= 
\underbrace{\left[ \begin{array}{c}
x_1 - x_2 \\
x_1 - x_3 \\
\vdots \\
x_1 - x_{K+1}
\end{array} \right]}_{\Delta \x} 
\nonumber \\
\C \balpha &= \Delta \x .
\label{eq:alpha_linear}
\end{align}
\end{scriptsize}\noindent
\eqref{eq:alpha_linear} produces unique $\balpha$ if \textit{coordinate matrix} $\C \in \mathbb{R}^{K \times K}$ is full column-rank;
%i) no two points have the same coordinates, \ie, $\p_i \neq \p_j, \forall i \neq j$ (resulting a zero-row), and ii) no one coordinate $k$ is the same for all points, \ie, $\nexists k \,|\, p_{1,k} = p_{j,k}, \forall j$ (resulting in a zero-column).
we make this assumption in the sequel. 

\vspace{0.05in}
\noindent
\textbf{Assumption 1}: Coordinate matrix $\C$ is full column-rank.
\vspace{0.05in}

This means that a subset of $M \leq K$ points cannot reside on a $(M-2)$-dimensional hyperplane; this can be ensured via prudent sampling (see our proposed DAG construction procedure in Section\;\ref{subsec:gradOp} to select points for computation of gradient $\balpha^i$).

If $n > K+1$ points are used, then $\C \in \mathbb{R}^{(n-1) \times K}$ is a tall matrix, and gradient $\balpha$ minimizing the \textit{least square error} (LSE) between $n$ points and the best-fitted hyperplane can be obtained using \textit{left pseudo-inverse}\footnote{Left pseudo-inverse $\C^\dagger$ is well defined if $\C$ is full column-rank.} $\C^\dagger \triangleq (\C^\top \C)^{-1} \C^\top$ 
\cite{strang15}:
\begin{align}
\balpha &= \C^\dagger \Delta \x .
\end{align}

Finally, we consider the special case where points $x_i$'s are on a regularly sampled $K$-dimensional grid; for example, consider using image patch pixels on a 2D grid to compute gradient $\balpha = [\alpha_1 ~\alpha_2]^\top$ for a best-fitted 2D plane. 
In this case, computing individual $\alpha_i$'s can be done \textit{separately}. 
For example, a row of image pixels $x_i$'s have the same $y$-coordinate $p_2$, and thus using \eqref{eq:planar} one obtains $x_1 - x_i = \alpha_1 (p_{1,1} - p_{i,1})$ for each $i$, and the optimal $\alpha_1$ minimizing least square can be computed separately from $\alpha_2$ (and vice versa). 
Computing $\alpha_i$'s separately has a complexity advantage, since no computation of matrix (pseudo-)inverse of $\C$ is required.

\subsubsection{Piecewise Planar Signal}
\label{subsubsec:PWP}

\blue{
One can define a \textit{piecewise planar signal} (PWP) as a combination of planar signals in adjacent \textit{sub-domains}; similarly defined in the \textit{nodal domain theorem} \cite{Bykoglu2005NodalDT}, graph nodes $\cV$ are first partitioned into non-overlapping sub-domains of connected nodes, and two sub-domains $\cV_s$ and $\cV_t$ are adjacent if $\exists i \in \cV_s, j \in \cV_t$ such that $(i,j) \in \cE$. 
}
For example, in Fig.\;\ref{fig:PlanarSignal}(b) we see a $3$-piece 1D linear signal $x$ with $p_1 \in [0, 3]$ defined as
\begin{align}
x = \left\{ \begin{array}{ll}
\alpha_0^{(1)} + \alpha_1^{(1)} p_1 & \mbox{if}~ 0 \leq p_1 < 1 \\
\alpha_0^{(2)} + \alpha_1^{(2)} p_1 & \mbox{if}~ 1 \leq p_1 < 2 \\
\alpha_0^{(3)} + \alpha_1^{(3)} p_1 & \mbox{if}~ 2 \leq p_1 \leq 3
\end{array} \right. 
\label{eq:3piece}
\end{align}
where the first, second, and third linear pieces are characterized by parameters $(\alpha_0^{(1)} ~\alpha_1^{(1)})$, $(\alpha_0^{(2)} ~\alpha_1^{(2)})$, and $(\alpha_0^{(3)} ~\alpha_1^{(3)})$, respectively.
If one further assumes signal $x$ is \textit{continuous}, then at each boundary, the two adjacent linear pieces must coincide.
For example, $x$ is continuous at $p_1 = 1$ if $\alpha_0^{(1)} + \alpha_1^{(1)}  = \alpha_0^{(2)} + \alpha_1^{(2)}$. 
The $3$-piece linear signal in Fig.\;\ref{fig:PlanarSignal}(b) is continuous at $p_1 = 1$ but not at $p_1 = 2$. 
Signal discontinuities are common in practice such as 2D natural and depth images (\eg, boundary between a foreground object and background).
We discuss our assumption on signal continuity in Section\;\ref{subsec:PWP}.

\subsection{Gershgorin Circle Theorem}

Given a real symmetric square matrix $\M \in \mathbb{R}^{N \times N}$, corresponding to each row $i$ is a \textit{Gershgorin disc} $i$ with center $c_i \triangleq M_{i,i}$ and radius $r_i \triangleq \sum_{j \,|\, j \neq i} |M_{i,j}|$. 
By GCT \cite{varga04}, each eigenvalue $\lambda$ of $\M$ resides in at least one Gershgorin disc, \ie, $\exists i$ such that 
\begin{align}
c_i - r_i \leq \lambda \leq c_i + r_i .
\label{eq:GCT}
\end{align}
A corollary is that the smallest Gershgorin disc left-end $\lambda^-_{\min}(\M)$ is a lower bound of the smallest eigenvalue $\lambda_{\min}(\M)$ of $\M$, \ie,
\begin{align}
\lambda^-_{\min}(\M) \triangleq \min_i c_i - r_i \leq \lambda_{\min}(\M) .
\label{eq:GCT}
\end{align}
We employ GCT to compute suitable parameters for our graph embedding method in Section\;\ref{subsec:embeddingObj}.

\section{GGLR for Graph with Coordinates}
\label{sec:gglr}
\begin{figure}[t]
\begin{center}
   \includegraphics[width=0.9\linewidth]{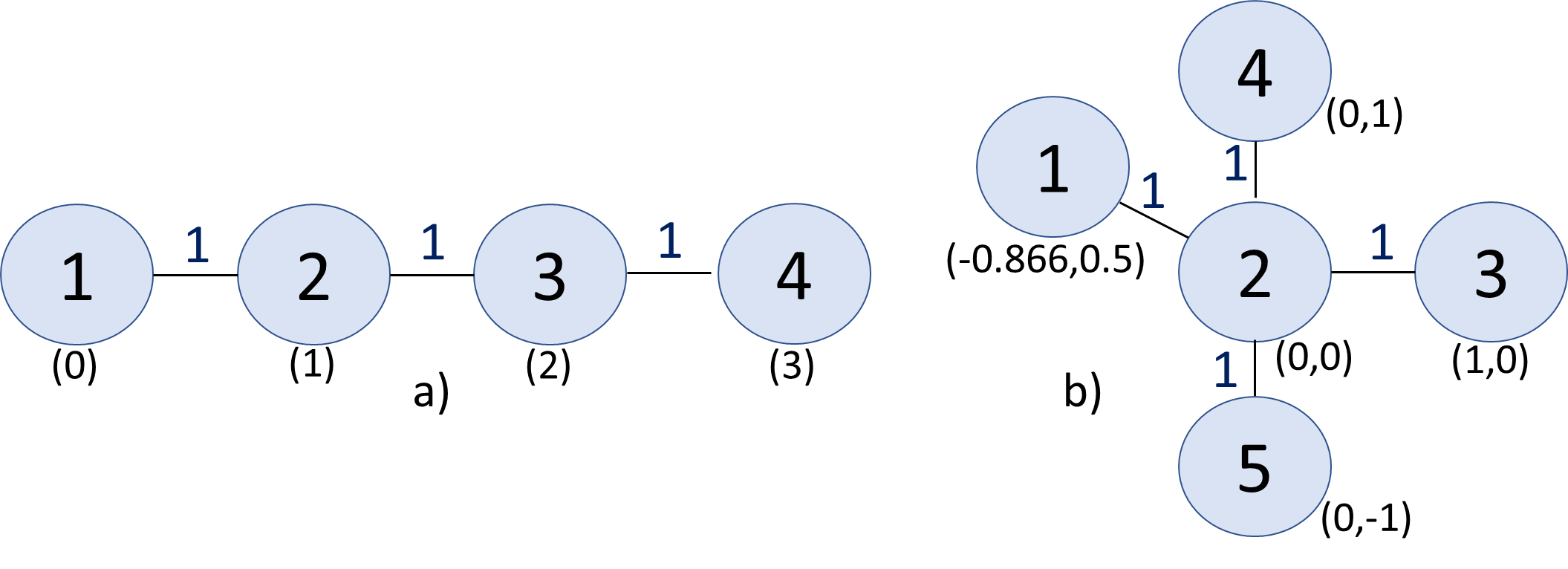}
\end{center}
\vspace{-0.2in}
\caption{Examples of a 1D 4-node graph in (a) and a 2D 5-node graph in (b). Numbers in brackets are sampling coordinates in 1D / 2D space respectively.} 
\label{fig:LapGraphEx}
\end{figure}

%\vspace{-0.3in}

\subsection{Why Graph Laplacian cannot promote Planar Signals}
\label{subsec:whyNot}

As motivation, we first argue that conventional notion of ``\blue{graph gradients}", computed using Laplacian\footnote{A similar argument can be made using normalized Laplacian $\L_n$ to compute gradients for a positive connected graph; 
\ifnum\arXiv=1
    see Appendix\;\ref{append:normL}.
\else 
    see Appendix\;A in \cite{chen23manifold}.
\fi
} $\L \triangleq \D - \W$, is ill-suited to define planar graph signals.
Specifically, if $\L \x$ computes \blue{graph gradients} \cite{ortega18ieee}, then constant graph gradient across the signal---analogous to constant slope across a linear function on a regular 1D kernel---would imply a kind of linear / planar graph signal. 
However, a signal $\x$ satisfying $\L \x = c\1$ for some constant $c$ is not a general planar signal at all; $\1$ is in the \textit{left-nullspace} of $\L$ (\ie, $\L^\top \1 = \0$), which means $\1$ is not in the orthogonal \textit{column space} of $\L$, \ie, $\not\exists \x ~\mbox{s.t.}~ \L \x = \1$. 
%\cite{strang15}. 
Hence, the only signal $\x$ satisfying $\L \x = c\1$ is the constant signal $\x = c' \1$ for some $c' \neq 0$ and $c=0$.
That means a prior like $(\L\x)^\top \L (\L\x)$ 
%\red{(or $(\L\x)^\top \L_n (\L \x)$ using normalized Laplacian $\L_n \triangleq \D^{-1/2} \L \D^{-1/2}$)} 
promotes only a constant signal as $0$ frequency, no different from GLR $\x^\top \L \x$ \cite{pang17}. 

%then a prior like $(\L\x)^\top \L (\L\x)$ would promote smoothness of graph gradients for signal $\x$, replacing GLR $\x^\top \L \x$ \cite{pang17}. 
To understand why graph gradients $\L\x$ cannot define planar signals in \eqref{eq:planar}, consider a 4-node line graph with all edge weights $w_{i,i+1} = 1$ and a signal $\x = [4, 2, 0, -2]^\top$ \textit{linear} w.r.t. node indices. 
See Fig.\;\ref{fig:LapGraphEx}(a) for an illustration. 
$\x$ is linear, since the three computable slopes $(x_{i+1} - x_i)/1, i \in \{1, 2, 3\}$ are all $-2$. 
Note that a 4-node line graph only has three uniquely computable slopes, a property we will revisit in the sequel.

Computing $\L \x$ for linear $\x$, we get

\vspace{-0.05in}
\begin{small}
\begin{align*}
\L \x = \left[ \begin{array}{cccc}
1 & -1 & 0 & 0 \\
-1 & 2 & -1 & 0 \\
0 & -1 & 2 & -1 \\
0 & 0 & -1 & 1
\end{array} \right] 
\left[ \begin{array}{c}
4 \\
2 \\
0 \\
-2 
\end{array} \right] = 
\left[ \begin{array}{c}
2 \\
0 \\
0 \\
-2
\end{array} \right] 
\end{align*}
\end{small}\noindent 
which is not a constant vector.
Examining $\L \x$ closely, we see that $(\L\x)_i$ computes the difference between $x_i$ and \textit{all} its connected neighbors $x_j$ for $(i,j) \in \cE$. 
Depending on the varying number and locations of its neighbors, $(\L\x)_i$ is computing different quantities: 
$(\L\x)_1$ and $(\L\x)_4$ compute negative / positive slope, while $(\L \x)_2$ and $(\L\x)_3$ compute the \textit{difference} of slopes (discrete second derivative). 
Clearly, entries in $\L\x$ are not computing slopes $(x_{i+1} - x_i)/1$ consistently.

\blue{
Can one compute the difference of slopes for ``interior" nodes only?
First, it is difficult to define and identify ``boundary" nodes in a general graph with nodes of varying degrees.} 
But suppose ``interior" rows in $\L$ can be identified to compute $\L\x$ as discrete second derivatives for minimization, \eg,  using $\|\text{diag}(\h) \L \x\|^2_2$ as a planar signal prior, where $\text{diag}(\h)$ selects the interior entries of $\L\x$ (in the line graph example, $\h = [0, 1, 1, 0]^\top$). 
Consider the 2D 5-node example in Fig.\;\ref{fig:LapGraphEx}(b). 
Given a 2D planar signal model \eqref{eq:planar}, namely $x = p_1 + 2 p_2$, $\x = [0.134, 0, 1, 2, -2]^\top$ is planar for 2D coordinates in Fig.\;\ref{fig:LapGraphEx}(b). 
However, $(\L\x)_2 \neq 0$. 
The reason is that, unlike Fig.\;\ref{fig:LapGraphEx}(a), the coordinates of four neighboring nodes are not symmetric around interior node 2, and Laplacian $\L$ alone does not encode sufficient node location information.  
This shows that \textit{the Laplacian operator for a graph with only nodes and edges is insufficient to define planar signals in \eqref{eq:planar}}.
%This motivates our derivation to be discussed next.

\subsection{Gradient Operator}
\label{subsec:gradOp}

We derive a new graph smoothness prior, \textit{gradient graph Laplacian regularizer} (GGLR), for signal restoration for the case when each graph node is endowed with a low-dimensional sampling space coordinate. 
%We consider the case when the input manifold graph provides no latent coordinates \textit{a priori} in Section\;\ref{sec:general}.
See Table\;\ref{tab:notations} for notations.

\begin{figure}[t]
\begin{center}
\includegraphics[width=0.85\linewidth]{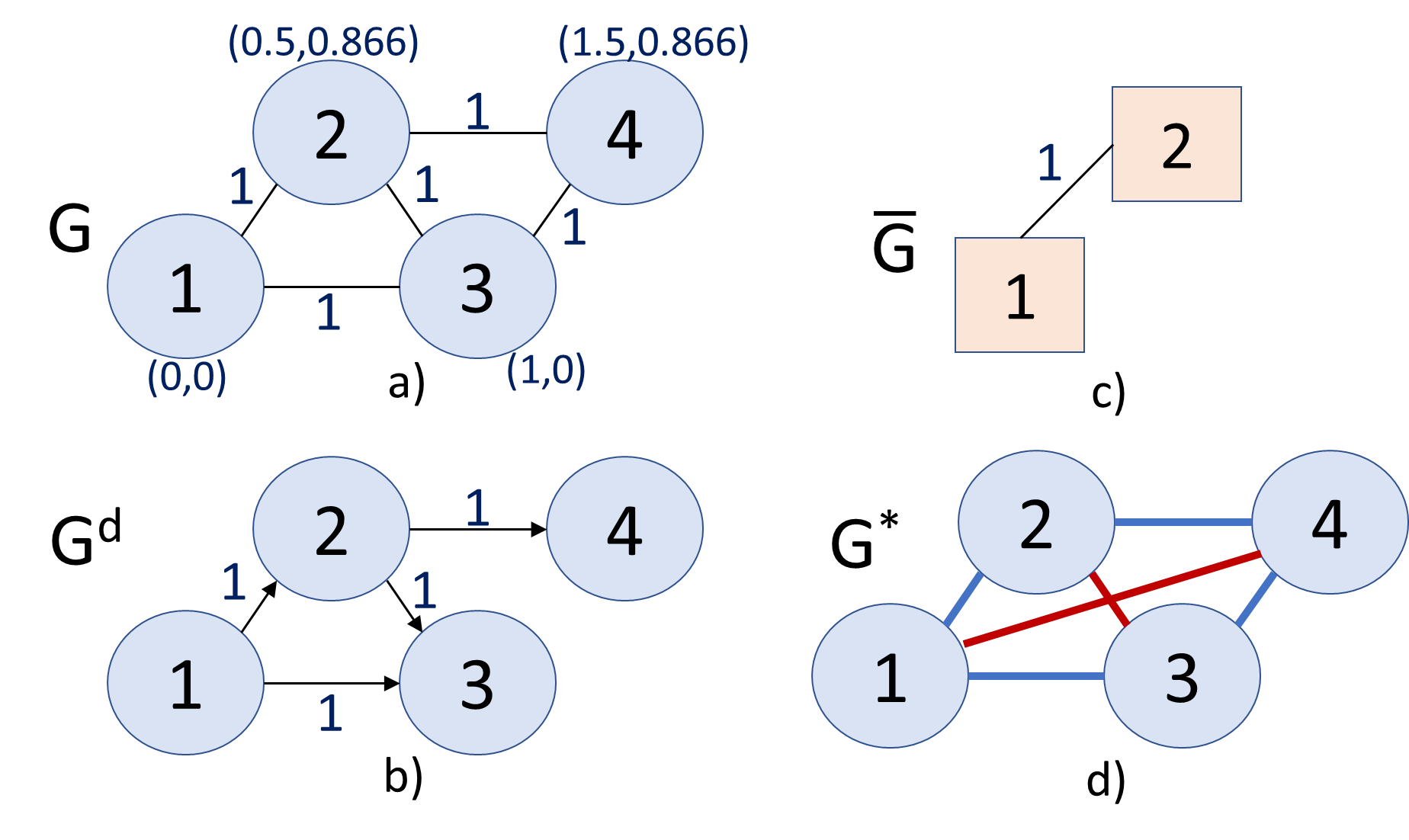}
\end{center}
\vspace{-0.25in}
\caption{An example 4-node graph $\cG$ with 2D coordinates in (a), the corresponding DAG $\cG^d$ in (b), gradient graph $\bar{\cG}$ in (c), and resulting GNG $\cG^*$ (d), where blue/red edges are positive/negative, each with magnitude $1.334$.} 
\label{fig:gradGraphEx}
\end{figure}

\begin{table}[t!]
\caption{Notations in GGLR Derivation %\blue{from Xue: I guess it's $\cG(\cV, \cE, \W)$ in the first row?}
}
\centering
\begin{scriptsize}
\begin{tabular}{|c|c|} \hline 
$\cG(\cV, \cE, \W)$ & graph with nodes $\cV$, edges $\cE$, adjacency matrix $\W$\\
$\p_i$ & $K$-dimensional coordinate for node $i$ \\
$K^+$ & number of out-going edges for each node in DAG \\
$\cG^d(\cV,\cE^d,\W^d)$ & DAG with nodes $\cV$, edges $\cE^d$, adjacency matrix $\W^d$ \\
$\F^i$ & gradient operator for node $i$ \\
$\W^i$ & diagonal matrix containing weights of edges $[i,j]$ \\
$\G$ & gradient computation matrix with $\{(\W^i \C^i)^\dagger \W^i \F^i\}$ \\
$\R$ & reordering matrix to define $\balpha$ \\
$\balpha^i$ & manifold gradient for node $i$ \\
$\bar{\cG}(\bar{\cV}, \bar{\cE}, \bar{\W})$ & gradient graph with nodes $\bar{\cV}$, edges $\bar{\cE}$, adj. matrix $\bar{\W}$ \\
$\bar{\L}$ & gradient graph Laplacian matrix for $\bar{\cG}$ \\
$\bar{\L}^o$ & $\text{diag}(\bar{\L}, \ldots, \bar{\L})$ to define $\cL$ \\
$\G^*(\cV, \cE^*, \W^*)$ & gradient-induced nodal graph (GNG) \\
$\cL$ & GNG Laplacian matrix \\ \hline
\end{tabular}
\end{scriptsize}
\label{tab:notations}
\end{table}

Given an undirected graph $\cG = (\cV, \cE, \W)$ with $|\cV| = N$ nodes, we assume here that each node $i \in \{1, \ldots, N\}$ is endowed with a coordinate vector $\p_i \in \mathbb{R}^K$ in $K$-dimensional sampling space, where $K \ll N$.
Examples of such graphs include kNN graphs constructed for 2D images or 3D point clouds, where the node coordinates are the sample locations in Euclidean space \cite{pang17,liu17,dinesh20}. 

For each node $i \in \cV$, we compute manifold gradient $\balpha^i$ when possible. 
First, define a \textit{directed acyclic graph} (DAG) $\cG^d(\cV,\cE^d,\W^d)$, where $K^+$ directed edges $[i,j] \in \cE^d$, $K^+ \geq K$, are drawn for each node $i$ (if possible), following the DAG construction procedure below.
Denote by $\bar{\cV} \subset \cV$ the set of nodes with $K^+$ out-going edges; \red{$|\bar{\cV}|$ is the number of computable manifold gradients.}
Given $K^+$ out-going edges $[i,j]$ stemming from node $i$, define a \textit{gradient operator} $\F^i \in \mathbb{R}^{K^+ \times N}$. 
Finally, compute gradient $\balpha^i$ for node $i \in \bar{\cV}$ using $\F^i$, weights of $K^+$ directed edges $[i,j]$, and coordinates $\p_j$ of $K^+$ connected nodes $j$'s.
%The acyclic nature of DAG is important to ensure that the same local gradient would not be computed more than once.
$K^+$ is a parameter that trades off robustness of computed gradient (larger $K^+$) with gradient locality (smaller $K^+$).

\vspace{0.05in}
\noindent
\textbf{DAG Construction Procedure}:
\begin{enumerate}
\item Initialize a \textit{candidate list} $\cS$ with 1-hop neighbor(s) $j$ of node $i$ satisfying the \textit{acylic condition}: ~$\exists k^o \geq 1$ such that
\begin{enumerate}[(1)]
\item $p_{j,k^o} > p_{i,k^o}$, and
\item $p_{j,k} = p_{i,k}$, for all $k$ s.t. $k^o > k \geq 1$.
\end{enumerate} 
\item If the number of directed edges from $i$ is $< K^+$ and $\cS \neq \emptyset$, then
\begin{enumerate}
\item Construct edge $[i,j^*]$ from $i$ to the closest node $j^*$ in $\cS$ in Euclidean distance with weight $w^d_{i,j^*} \triangleq \prod_{(s,t) \in \cP_{i,j^*}} w_{s,t}$, where $\cP_{i,j^*}$ is the shortest path from $i$ to $j^*$. 
Remove $j^*$ from list $\cS$.
\item Add 1-hop neighbor(s) $l$ of $j^*$ that satisfy the acylic condition, and are not already selected for connection to $i$, to list $\cS$. Repeat Step 2.
\end{enumerate}
\end{enumerate}

\vspace{0.05in}
After the procedure, the $K^+$ identified directed edges $\{[i,j]\}$ are drawn for $\cG^d$ and compose $\cE^d$, with computed edge weights $\{w^d_{i,j}\}$ composing weight matrix $\W^d$. 
Fig.\;\ref{fig:gradGraphEx}(a) shows an example $4$-node graph with 2D coordinates in brackets. 
The corresponding graph $\cG^d$ with directed edges drawn using the DAG construction procedure is shown in Fig.\;\ref{fig:gradGraphEx}(b). 
In this example, only two manifold gradients $\balpha^1$ and $\balpha^2$ are computable, \ie, $\bar{\cV} = \{1, 2\}$. 

The acylic condition ensures that $\cE^d$ constitute a graph $\cG^d$ without cycles.

\begin{lemma}
A graph constructed with directed edges satisfying the acylic condition has no cycles.
\label{lemma:acyclic}
\end{lemma}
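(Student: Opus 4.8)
The plan is to argue by contradiction: suppose the constructed directed graph contains a cycle, and extract from the acyclic condition a monotonicity invariant along directed edges that cannot return to its starting point. Concretely, for a directed edge $[i,j]$, the acylic condition supplies an index $k^o = k^o(i,j)$ such that $p_{j,k^o} > p_{i,k^o}$ while $p_{j,k} = p_{i,k}$ for all $k < k^o$. In other words, the coordinate vector $\p_j$ is strictly larger than $\p_i$ in the \emph{lexicographic} order on $\mathbb{R}^K$ (compare first coordinates; if tied, compare second; and so on). So the key observation to state and verify is: \emph{every directed edge $[i,j]$ in $\cG^d$ satisfies $\p_i \prec_{\text{lex}} \p_j$.}

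Next I would note that $\prec_{\text{lex}}$ is a strict total order on $\mathbb{R}^K$, hence in particular a strict partial order: it is irreflexive and transitive. Given a putative cycle $i_0 \to i_1 \to \cdots \to i_m = i_0$ with $m \geq 1$, applying the edge property to each directed edge and chaining via transitivity yields $\p_{i_0} \prec_{\text{lex}} \p_{i_1} \prec_{\text{lex}} \cdots \prec_{\text{lex}} \p_{i_m} = \p_{i_0}$, so $\p_{i_0} \prec_{\text{lex}} \p_{i_0}$, contradicting irreflexivity. Therefore no cycle exists, and $\cG^d$ is acyclic. One small subtlety worth a sentence: a ``cycle'' here should be understood as a directed closed walk of positive length, which is exactly what would violate the DAG property; the argument above handles closed walks of any length $m \geq 1$, so it covers self-loops ($m=1$) and multi-edges as well.

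A second subtlety to address is that the DAG construction procedure draws edges $[i,j^*]$ where $j^*$ need not be a $1$-hop neighbor of $i$ in $\cG$ — it is the closest node in the candidate list $\cS$, which is grown by repeatedly adding $1$-hop neighbors of already-selected nodes. I would point out that this does not affect the argument: the acylic condition is checked (in Step 1 and Step 2(b)) for \emph{every} node that enters $\cS$, hence for every node $j^*$ that ultimately receives a directed edge from $i$. So regardless of how $j^*$ was reached in the search, the edge $[i,j^*]$ that is actually drawn satisfies the acylic condition, and the lexicographic-increase property applies to it. Thus the main work is simply recognizing the acylic condition as lexicographic strict increase; after that, the proof is a two-line order-theoretic argument, and I do not expect any real obstacle — the only thing to be careful about is phrasing the definition of ``cycle'' so that the chaining argument cleanly applies.
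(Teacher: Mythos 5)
Your proof is correct and rests on the same key observation as the paper's: the acyclic condition says exactly that every directed edge $[i,j]$ strictly increases the node coordinate in lexicographic order, so no directed closed walk can exist. The paper packages this as a displacement-vector sum whose first non-zero coordinate must be strictly positive over any cycle, whereas you chain the strict lexicographic order via transitivity and irreflexivity; this is a presentational difference only, and your handling of the subtlety that $j^*$ need not be a $1$-hop neighbor of $i$ is a welcome clarification.
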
\noindent
See Appendix\;\ref{append:acyclic} for a proof.

\vspace{0.05in}
\red{
Typically, the construction procedure results in a connected DAG $\cG^d$ with a single \textit{root node} (with no in-coming edges), and $N-K^+$ nodes have $K^+$ out-going edges. For example, for $K^+=K=1$, $N-1$ nodes have one out-going edge each.
We assume this for the constructed DAG in the sequel.
}

\vspace{0.05in}
\noindent
\textbf{Assumption 2}: DAG $\cG^d$ is a connected graph with a single root node, where $|\bar{\cV}|K \geq N-1$.

\vspace{0.05in}
Using the two properties of a manifold graph discussed in Section\;\ref{subsubsec:manifold}, one can prove the optimality of constructed directed edges $[i,j]$ for each node $i \in \red{\bar{\cV}}$.  
\begin{lemma}
For each node $i \in \bar{\cV}$, the DAG construction procedure finds the closest $K^+$ nodes $j$ to node $i$ in Euclidean distance satisfying the acyclic condition.
\label{lemma:localGradient}
\end{lemma}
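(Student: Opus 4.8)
The plan is to argue by induction on the number of out-going edges already constructed from node $i$, showing that at each step the candidate list $\cS$ contains exactly those acyclic-reachable nodes that are "on the frontier," and that the greedy choice of the latent-space-closest node in $\cS$ is globally optimal among all not-yet-selected acyclic-reachable nodes. The key structural fact I would lean on is Property 2 of a manifold graph (Section\;\ref{subsubsec:manifold}): if $(j,k) \in \cE$ with $j$ an $h$-hop neighbor and $k$ an $(h+1)$-hop neighbor of $i$, then $d(i,j) < d(i,k)$. This is precisely what prevents a node deeper in the hop-expansion from "sneaking in" closer to $i$ than a shallower node that the procedure has not yet examined.

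First I would set up the notion of an \emph{acyclic-reachable} node: a node $j$ such that there is a path from $i$ to $j$ in $\cG$ all of whose edges, oriented away from $i$, satisfy the acyclic condition. The procedure only ever adds to $\cS$ nodes reached by such paths (1-hop neighbors of $i$, then 1-hop neighbors of already-selected nodes, each filtered by the acyclic condition), so every selected node is acyclic-reachable; this direction is immediate. The substance is the converse optimality claim. Suppose, for contradiction, that at some step the procedure selects $j^*$ but there exists an unselected acyclic-reachable node $m$ with $d(i,m) < d(i,j^*)$. Take a shortest acyclic path from $i$ to $m$; walking along it from $i$, let $j'$ be the \emph{first} node on that path not yet selected by the procedure. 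Its predecessor on the path is selected, so when that predecessor was selected, Step 2(b) added $j'$ (or $j'$ was already in $\cS$) — hence $j'$ is currently in $\cS$ and was available for selection, yet the procedure chose $j^*$, so $d(i,j^*) \le d(i,j')$. Now I need $d(i,j') \le d(i,m)$ to reach the contradiction $d(i,j^*) \le d(i,m) < d(i,j^*)$.

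The step $d(i,j') \le d(i,m)$ is where Property 2 enters and is the main obstacle. The natural argument is that $j'$ lies on a shortest path from $i$ to $m$, hence is a $h$-hop neighbor with $h \le (\text{hop distance to } m)$, and successive nodes along a shortest path are consecutive-hop neighbors joined by edges, so iterating Property 2 along the path segment from $j'$ to $m$ gives a strictly increasing chain of distances $d(i,j') < \cdots < d(i,m)$. I would need to be a little careful that the relevant path is genuinely a shortest (hop-minimal) path in $\cG$ so that consecutive nodes really are $h$- and $(h+1)$-hop neighbors — if one insists only on an acyclic path, the hop-monotonicity could fail, so the cleanest route is to first observe that among all acyclic paths from $i$ to $m$ one may assume a hop-minimal one is itself acyclic (or to restrict attention to acyclic-reachable nodes whose \emph{hop-shortest} path in $\cG$ is acyclic, matching how the procedure actually explores). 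There is also the mild bookkeeping point that Property 2 is stated for a single edge, so the chain argument is an easy induction on path length, applied to the sub-path from $j'$ onward.

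Finally I would package these pieces: since the procedure iterates Step 2 as long as fewer than $K^+$ edges have been drawn and $\cS \neq \emptyset$, and since (by the argument above) each selection is the globally closest available acyclic-reachable node, the $K^+$ selected nodes are exactly the $K^+$ closest acyclic-reachable nodes to $i$ — which is the assertion of Lemma\;\ref{lemma:localGradient} once one notes that $i \in \bar{\cV}$ means at least $K^+$ such nodes exist so the procedure does not terminate early. Ties in latent distance can be broken arbitrarily without affecting the count, so I would simply remark that the statement holds up to an arbitrary tie-breaking rule. I expect the Property-2 chaining step, and pinning down exactly which class of paths the hop-monotonicity applies to, to be the only genuinely delicate part; everything else is the standard correctness proof for a Dijkstra/BFS-style greedy frontier expansion.
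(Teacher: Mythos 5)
Your proposal is correct and takes essentially the same route as the paper's own proof in Appendix~A: a greedy-frontier (Dijkstra/BFS-style) correctness argument whose engine is Property~2 of a manifold graph, guaranteeing the candidate list $\cS$ always contains the globally closest not-yet-selected node. Your write-up is in fact more careful than the paper's terse version --- in particular you make explicit the chaining of Property~2 along a hop-minimal path and the distinction between ``satisfying the acyclic condition'' and being \emph{acyclic-reachable} by the expansion, both of which the paper glosses over.
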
\noindent 
See Appendix\;\ref{append:locality} for a proof.

\vspace{0.05in}
\noindent
\textbf{Remarks}:
Just like the 4-point line graph in Fig.\;\ref{fig:LapGraphEx}(a) having only three unique computable slopes, the acyclic nature of the constructed directed graph (ensured by Lemma\;\ref{lemma:acyclic}) guarantees that each computed gradient $\balpha^i$ is distinct.
Lemma\;\ref{lemma:localGradient} ensures that the computed gradient $\balpha^i$ is as local to node $i$ as possible, given $K^+$ closest neighbors satisfying the acyclic condition are used in the gradient computation.

%\begin{figure}
%\centering
%\includegraphics[width=2.8in]{}
%\vspace{-0.1in}
%\caption{\small Illustration of a finite graph as uniform samples on a convex region of a smooth continuous manifold. 
%Boundary nodes (in yellow) have smaller degrees than interior nodes (in blue).
%Nodes are mapped into a $K$-dimensional latent vector space.}
%\label{fig:finite_graph2}
%\vspace{-0.1in}
%\end{figure}

Denote by $\cE^d(i,k)$ the designation node of the $k$-th directed edge stemming from node $i$ in $\cG^d$.
Given DAG $\cG^d$, we can define entries in gradient operator $\F^i \in \mathbb{R}^{K^+ \times N}$ as
\begin{align}
F^i_{m,n} &= \left\{ \begin{array}{ll}
1 & \mbox{if}~ n = i \\
-1 & \mbox{if}~ n = \cE^d(i,m) \\
0 & \mbox{o.w.}
\end{array} \right. .
\label{eq:gradOp}
\end{align}

Continuing our $4$-node example graph in Fig.\;\ref{fig:gradGraphEx}(a), gradient operators $\F^1, \F^2 \in \mathbb{R}^{2 \times 4}$ corresponding to $\cG^d$ in (b) are

\vspace{-0.05in}
\begin{small}
\begin{align*}
%\F^1 = \left[ \begin{array}{ccc}
%1 & - 1 & 0
%\end{array} \right], ~~
%\F^2 = \left[ \begin{array}{ccc}
%0 & 1 & -1
%\end{array} \right] .
\F^1 = \left[ \begin{array}{cccc}
1 & -1 & 0 & 0 \\
1 & 0 & -1 & 0
\end{array} \right], ~~
\F^2 = \left[ \begin{array}{cccc}
0 & 1 & -1 & 0 \\
0 & 1 & 0 & -1
\end{array} \right] .
\label{eq:ex_F}
\end{align*}
\end{small}

\subsection{Gradient Graph}

Accompanying $\F^i$ is a \textit{coordinate matrix} $\C^i \in \mathbb{R}^{K^+ \times K}$ defined as
\begin{align}
C^i_{m,n} = 
p_{i,n} - p_{\cE^d(i,m),n} .
\end{align}
In \eqref{eq:alpha_linear}, $\C$ is an example coordinate matrix.
Continuing our example graph in Fig.\;\ref{fig:gradGraphEx}(a), $\C^1$ corresponding to $\F^1$ is

\vspace{-0.05in}
\begin{small}
\begin{align*}
\C^1 = \left[ \begin{array}{cc}
p_{1,1} - p_{2,1} & p_{1,2} - p_{2,2} \\
p_{1,1} - p_{3,1} & p_{1,2} - p_{3,2}
\end{array} \right] =
\left[ \begin{array}{cc}
-0.5 & -0.866 \\
-1 & 0
\end{array} \right] .
\end{align*}
\end{small}

Given input signal $\x \in \mathbb{R}^N$, we can write linear equations for gradient $\balpha^i = [\alpha_{i,1} \cdots \alpha_{i,K}]^\top \in \mathbb{R}^K$ for node $i \in \bar{\cV}$, similar to \eqref{eq:alpha_linear} for a $K$-dimensional planar signal:
\begin{align}
\C^i \balpha^i = \F^i \x . 
\label{eq:alpha}
\end{align}
If $K^+ = K$ and $\C^i$ is full rank, then $\balpha^i$ can be computed as $\balpha^i = (\C^i)^{-1} \F^i \x$.

If $K^+ > K$, then $\C^i$ is a tall matrix, and we compute $\balpha^i$ that minimizes the following \textit{weighted least square} (WLS):
\begin{align}
\min_{\balpha^i} \| \W^i (\F^i \x - \C^i \balpha^i) \|^2_2  ,  
\label{eq:WLS}
\end{align}
where $\W^i = \text{diag}(\{w^d_{i,\cE^d(i,m)}\}_{m=1}^{K^+})$ is a diagonal matrix containing weights of edges connecting nodes $i$ and $\cE^d(i,m)$ as diagonal terms.
\eqref{eq:WLS} states that square errors for nodes $\cE^d(i,m)$ are weighted according to edge weights.  
Solution to \eqref{eq:WLS} is
\begin{align}
\balpha^i = (\W^i \C^i)^\dagger \W^i \F^i \x    
\label{eq:alphai}
\end{align}
where $(\W^i\C^i)^\dagger$ is the left pseudo-inverse of $\W^i\C^i$. 
By Assumption 1 $\C^i$ is full column-rank, hence $\W^i\C^i$ is also full column-rank, and thus $(\W^i\C^i)^\dagger$ is well defined.

Given computed gradient $\balpha^i$, we next construct an undirected \textit{gradient graph} $\bar{\cG}(\bar{\cV},\bar{\cE},\bar{\W})$: 
connect node pair $i,j \in \bar{\cV}$ with an undirected edge if $\exists (i,j) \in \cE$.
Edge $(i,j)$ has positive weight $\bar{w}_{i,j} \in \mathbb{R}^+$, resulting in a positive undirected graph $\bar{\cG}$, with a Laplacian $\bar{\L}$ that is provably PSD \cite{cheung18}. 

To promote planar signal reconstruction, we retain the same edge weights as the original graph $\cG$, \ie, $\bar{w}_{i,j} = w_{i,j}, \forall (i,j) \in \bar{\cE}$.
Alternatively, one can define $\bar{w}_{i,j}$ in a \textit{signal-dependent} way, as a function of gradients $\balpha^i$ and $\balpha^j$, \ie, 
\begin{align}
\bar{w}_{i,j} = \exp \left( -
\frac{\|\balpha^i - \balpha^j\|^2_2}{\sigma_\alpha^2}
\right) , 
\label{eq:gradWeight}
\end{align}
where $\sigma_\alpha > 0$ is a parameter.  
We discuss PWP in detail in Section\;\ref{subsec:PWP}. 

See Fig.\;\ref{fig:gradGraphEx}(c) for the gradient graph $\bar{\cG}$ corresponding to the 4-node DAG $\cG^d$. 
Using $\bar{\cG}$, one can define a PSD \textit{gradient graph Laplacian matrix} $\bar{\L}$ using definitions in Section\;\ref{sec:prelim}. 
Continuing our 4-node graph example in Fig.\;\ref{fig:gradGraphEx}(a), the gradient graph Laplacian $\bar{\L}$ in this case is
\begin{align}
\bar{\L} = \left[ \begin{array}{cc}
1 & -1 \\
-1 & 1
\end{array} \right] \succeq 0 .
\end{align}

For later derivation, we collect gradients $\balpha^i$ for all $i \in \bar{\cV}$ into vector $\balpha \in \mathbb{R}^{|\bar{\cV}|K}$, written as
\begin{align}
\balpha &= \R
\underbrace{\left[ \begin{array}{c}
(\W^1\C^1)^{\dagger} \W^1 \F^1 \\
\vdots \\
(\W^{|\bar{\cV}|} \C^{|\bar{\cV}|})^{\dagger} \W^{|\bar{\cV}|} \F^{|\bar{\cV}|}
\end{array} \right]}_{\G} 
\x 
\label{eq:alpha_vec}
\end{align}
where $\G \in \mathbb{R}^{|\bar{\cV}|K \times N}$ is a gradient computation matrix, and $\R \in \{0,1\}^{|\bar{\cV}|K \times |\bar{\cV}|K}$ is a reordering matrix so that $\balpha = [\alpha^1_1, \ldots, \alpha^{|\bar{\cV}|}_1, \ldots, \alpha^1_K, \ldots, \alpha^{|\bar{\cV}|}_K]^\top$.
Continuing our $4$-node example graph in Fig.\;\ref{fig:gradGraphEx}(a), $\R$ is
\begin{align}
\R = \left[ \begin{array}{cccc}
1 & 0 & 0 & 0 \\
0 & 0 & 1 & 0 \\
0 & 1 & 0 & 0 \\
0 & 0 & 0 & 1
\end{array} \right] .
\end{align}

\subsection{Gradient Graph Laplacian Regularizer}

We can now define GGLR $\text{Pr}(\balpha)$ for $\balpha$ as follows.
Define first $\bar{\L}^o = \text{diag}(\bar{\L}, \ldots, \bar{\L}) \in \mathbb{R}^{|\bar{\cV}|K \times |\bar{\cV}|K}$ that contains $K$ Laplacians $\bar{\L}$ along its diagonal. 
We write
\begin{align}
\text{Pr}(\balpha) &= \balpha^\top \bar{\L}^o \balpha = \left( \R \G \x \right)^{\top} \bar{\L}^o \R \G \x 
\label{eq:GGLR} \\
&= \x^{\top} \underbrace{\G^\top \R^\top \bar{\L}^o \R \G}_{\cL} \x 
= \x^{\top} \cL \x ,
\label{eq:GGLR}
\end{align}
where $\cL$ is a graph Laplacian matrix corresponding to a \textit{gradient-induced nodal graph} (GNG) $\cG^*$ computed from $\bar{\L}^o$.

Continuing our $4$-node example graph in Fig.\;\ref{fig:gradGraphEx}(a), GNG Laplacian $\cL$ is computed as
\begin{align*}
\cL = \G^{\top} \R^\top \bar{\L}^o \R \G = 1.334 \left[ \begin{array}{cccc}
1 & -1 & -1 & 1 \\
-1 & 1 & 1 & -1 \\
-1 & 1 & 1 & -1 \\
1 & -1 & -1 & 1
\end{array} \right] .
\end{align*}

The corresponding GNG $\cG^*$ is shown in Fig.\;\ref{fig:gradGraphEx}(d).
We see that $\cG^*$ is a \textit{signed} graph with a symmetric structure of positive / negative edges, all with magnitude $1.334$. 
Despite the presence of negative edges, $\cL = \G^{\top} \R^\top \bar{\L} \R \G$ is PSD and provably promotes planar signal reconstruction\blue{---\ie, planar signals $\x$ are the $0$ frequencies of $\cL$ and compute to $\x^\top \cL \x = 0$}.
We state the general statement formally below.

\begin{theorem}
\blue{Gradient-induced nodal graph (GNG) Laplacian $\cL$ is PSD with all planar signals as the $0$ frequencies, and thus GGLR $\x^\top \cL \x$ in \eqref{eq:GGLR} promotes planar signal reconstruction, \ie, planar signals $\x$ compute to $\x^\top \cL \x = 0$.} 
\label{thm:GGLR}
\end{theorem}\noindent
See Appendix\;\ref{append:GGLR} for a proof.

\vspace{0.1in}
\noindent
\textbf{Remarks}:
%Theorem\;\ref{thm:GGLR} states that the resulting Laplacian matrix $\cL$, corresponding to a signed graph, is PSD. 
Laplacian matrices for signed graphs were studied previously. 
\cite{su17} studied GFT for signed graphs, where a self-loop of weight $2|w_{i,j}|$ was added to each endpoint of an edge $(i,j)$ with negative weight $w_{i,j} < 0$.
By GCT, this ensures that the resulting graph Laplacian is PSD. 
\cite{cheung18tsipn} ensured Laplacian $\L$ for a signed graph is PSD by shifting up $\L$'s spectrum to $\L + |\lambda_{\min}^-|\, \I$, where $\lambda_{\min}^-$ is a fast lower bound of $\L$'s smallest (negative) eigenvalue.
\cite{yang21} proved that the Gershgorin disc left-ends of Laplacian $\L$ of a \textit{balanced} signed graph (with no cycles of odd number of negative edges) can be aligned at $\lambda_{\min}$ via a similarity transform.
This means that the smallest disc left-end of $\L$, via an appropriate similarity transform, can served as a tight lower bound of $\lambda_{\min}$ to test if $\L$ is PSD.
Theorem\;\ref{thm:GGLR} shows yet another construction of a signed graph
%---which may not be balanced or have self-loops---
with corresponding Laplacian that is provably PSD.

\red{We study the spectral properties of GNG Laplacians next.}

%\vspace{-0.2in}
\subsection{Spectral Properties of GNG Laplacian}

We next prove the following spectral properties of GNG Laplacian $\cL$. 
The first lemma establishes the eigen-subspace for eigenvalue $0$.
The next lemma defines the dimension of this eigen-subspace for eigenvalue $0$.

\red{
\begin{lemma}
Corresponding to eigenvalue $\lambda_1 = 0$, GNG Laplacian $\cL$ has (non-orthogonal) eigenvectors $\{\u_k\}_{k=0}^K$, where $\u_0 = \1$ and $\u_k = [\p_{1,k}, \ldots, \p_{N,k}]^\top, k \in \{1, \ldots, K\}$.
\label{lemma:eVec}
\end{lemma}
}
\begin{proof}
\red{
By definition of $\cL$, $\cL \1 = \G^\top \R^\top \bar{\L}^o \R \G \, \1$. 
Thus,  
\begin{align}
\G \1 &= \left[ \begin{array}{c}
(\W^1 \F^1 \P)^\dagger \W^1 \F^1 \\
\vdots \\
(\W^{|\Bar{\cV}|} \F^{|\Bar{\cV}|} \P)^\dagger \W^{|\Bar{\cV}|} \F^{|\Bar{\cV}|}
\end{array}
\right] \1 \stackrel{(a)}{=} \0
\end{align}
where $(a)$ follows since $\F^i \1 = 0, \forall i$.
Thus, $\u_0 = \1$ is an eigenvector corresponding to $\lambda_1 = 0$. 
}

\red{
Next, we note that $\C^i = \F^i \P$, where $\P \triangleq [\p_1^\top; \ldots; \p_N^\top] \in \mathbb{R}^{N \times K}$ is a matrix containing the $K$-dimensional coordinates of $N$ graph nodes as rows: 
\begin{align}
\C^i &= \F^i \P = \F^i \left[ \begin{array}{c}
\p_1^\top \\
\vdots \\
\p_N^\top
\end{array} \right] .
\end{align}
We can now write
\begin{align}
\R \G \u_k &\stackrel{(a)}{=} \R \left[ \begin{array}{c}
(\W^1 \F^1 \P)^\dagger \W^1 \F^1 \\
\vdots \\
(\W^{|\Bar{\cV}|} \F^{|\Bar{\cV}|} \P)^\dagger \W^{|\Bar{\cV}|} \F^{|\Bar{\cV}|}
\end{array}
\right] \P_k
\nonumber \\
&\stackrel{(b)}{=} \R 
\left[ \begin{array}{c}
\e_k \\
\vdots \\
\e_k 
\end{array} \right]
\stackrel{(c)}{=} \left[ \begin{array}{c}
\0_{(k-1)|\bar{\cV}|} \\
\1_{|\bar{\cV}|} \\
\0_{(K-k)|\bar{\cV}|}
\end{array}
\right]
\end{align}
where $\1_n$ ($\0_n$) is a length-$n$ vector of all ones (zeroes), and $\e_k$ is the length-$K$ canonical column vector of all zeros except for the $k$-th entry that equals to $1$. 
$(a)$ follows since $\u_k = \P_k$, the $k$-th column of $\P$;
$(b)$ follows since $(\W^j\F^j\P)^\dagger$ is the left pseudo-inverse of $\W^j\F^j\P$, $\forall j$;
$(c)$ follows since $\R$ reorders entries of manifold gradients $\balpha^i$'s. 
We see that
\begin{align}
\cL \u_k &= \G^\top \R^\top \bar{\L}^o \R \G \u_k  
\\
&= \G^\top \R^\top 
\text{diag}(\bar{\L}, \ldots, \bar{\L}) 
\left[ \begin{array}{c}
\0_{(k-1)|\bar{\cV}|} \\
\1_{|\bar{\cV}|} \\
\0_{(K-k)|\bar{\cV}|}
\end{array}
\right] \stackrel{(a)}{=} \0_{K|\bar{\cV}|}
\nonumber 
\end{align}
where $(a)$ follows since $\1_{|\bar{\cV}|}$ is the (unnormalized) eigenvector of $\bar{\L}$ corresponding to eigenvalue $0$. 
}
\end{proof}

\red{
\begin{lemma}
GNG Laplacian $\cL$ has eigenvalue $\lambda_1 = 0$ with multiplicity $K+1$.     
\label{lemma:eVal}
\end{lemma}
\begin{proof}
Given $\cL = \G^\top \R^\top \bar{\L}^o \R \G$, where $\R$ is full-rank, $\cL \x = \0$ if i) $\G \x = \0$, or ii) $\bar{\L}^o \R \G \x = \0$.  
Given $\G = [ (\W^1 \C^1)^\dagger \W^1 \F^1; \ldots; (\W^{|\bar{\cV}|} \C^{|\bar{\cV}|})^\dagger \W^{|\bar{\cV}|} \F^{|\bar{\cV}|} ]$, define $\F = [\F^1; \ldots ; \F^{|\bar{\cV}|}] \in \{1,0\}^{|\bar{\cV}|K^+ \times N}$. 
%By Assumption 2 $|\bar{\cV}|K \geq N-1$, and hence $\F$ has at least $N-1$ rows.
Since $\F^i \1 = \0, \forall i$, $\F \1 = \0$. 
Further, by Assumption 2, the constructed DAG connects all $N$ nodes with a single root. 
Hence, corresponding to each node $i$ of $N-1$ non-root nodes is at least one row in $\F$ where the $i$-th entry is $-1$ with all $0$'s to the right.
Thus, with \textit{topological sorting} \cite{clr} of rows and columns in $\F$, $\F^\top$ would be in \textit{row echelon form} \cite{golub12} for all rows corresponding to $N-1$ non-root nodes.
For example, $\F$ and $\F^\top$ for the four-node graph in Fig.\;\ref{fig:gradGraphEx} are
%\vspace{-0.05in}
%\begin{small}
\begin{align*}
\F = \left[ \begin{array}{cccc}
1 & -1 & 0 & 0 \\
1 & 0 & -1 & 0 \\
0 & 1 & -1 & 0 \\
0 & 1 & 0 & -1 \\
\end{array} \right], ~
\F^\top = \left[ \begin{array}{cccc}
1 & 1 & 0 & 0 \\ \hline
-1 & 0 & 1 & 1 \\
0 & -1 & -1 & 0 \\
0 & 0 & 0 & -1
\end{array} \right]
\end{align*}
%\end{small}\noindent
This means $\F$ has rank $N-1$, and thus the dimension of $\F$'s nullspace $\cN(\F)$ is $1$ with $\1$ the lone vector spanning $\cN(\F)$.
By definition, $\bar{\L}^o = \text{diag}(\bar{\L}, \ldots, \bar{\L})$, where $\bar{\L}$ is a combinatorial Laplacian for a positive connected gradient graph without self-loops.  
Hence, each $\bar{\L}$ has eigenvalue $0$ of multiplicity $1$, and $\bar{\L}^o$ has eigenvalue 0 with multiplicity $K$, where the corresponding eigenvectors are $\{\u_k\}_{k=1}^K$ defined in Lemma\;\ref{lemma:eVec}. 
We conclude that $\cL$ has eigenvalue $0$ with multiplicity $K+1$.
\end{proof}
}

\noindent
\textbf{Remarks:}
$\{\u_k\}$ are mutually linearly independent if the coordinate matrix $\C$ is full column-rank---Assumption 1 discussed in Section\;\ref{subsec:planar}.
$\{\u_k\}$ can be orthonormalized via the known Gram-Schmidt procedure \cite{golub12}.  
%\cite{strang15}. 

\red{
One important corollary is that the eigen-subspace for eigenvalue $0$ is the space of planar signals.
\begin{corollary}
Signal $\x$ that is a linear combination of eigenvectors $\{\u_k\}_{k=0}^{K}$ spanning $K+1$-dimensional eigen-subspace corresponding to eigenvalue $0$ is a planar signal.    
\end{corollary}
\begin{proof}
We show that signal $\x = \sum_{k=0}^K a_k \u_k$ for coefficients $\{a_k\}$ has constant gradient, and hence is planar.
By \eqref{eq:alphai}, gradient $\balpha^i$ for each node $i$ in gradient graph $\bar{\cG}$ is
\begin{align}
\balpha^i &= (\W^i\C^i)^\dagger \W^i \F^i \x \stackrel{(a)}{=} (\W^i\F^i \P)^\dagger \W^i \F^i \sum_{k=0}^K a_k \u_k
\nonumber \\
&\stackrel{(b)}{=} \sum_{k=1}^K a_k (\W^i\F^i \P)^\dagger \W^i \F^i \P_k \stackrel{(c)}{=} \sum_{k=1}^K a_k \e_k .
\end{align}
$(a)$ follows since $\C^i = \F^i \P$; $(b)$ follows since $\F^i \u_0 = \0$ and $\u_k = \P_k, \forall k \geq 1$; $(c)$ follows since $(\W^i\F^i\P)^\dagger$ is the left pseudo-inverse of $\W^i\F^i\P$.
We see that gradient $\balpha^i = \sum_k a_k \e_k$ is the same for all $i$, and hence $\x$ is a planar signal. 
Indeed, $\x$ computes to $\x^\top \cL \x = 0$, since each basis component $\u_k$ computes to $\u_k^\top \cL \u_k = 0$. 
\end{proof}
}

\red{
Another corollary is that GGLR defined using GNG Laplacian $\cL$ is invariant to constant shift.
\begin{corollary}
GGLR is invariant to constant shift, \ie, $(\x + c\1)^\top \cL (\x + c\1) = \x^\top \cL \x, \forall \x \in \mathbb{R}^N$.    
\label{lemma:cShift}
\end{corollary}
\begin{proof}
We first write
\begin{align}
(\x + c\1)^\top \cL (\x + c\1) &= \x^\top \cL \x + 2c \, \x^\top \cL \1 + c^2 \1^\top \cL \1 
\nonumber \\
&\stackrel{(a)}{=} \x^\top \cL \x
\end{align}
where $(a)$ follows since $\cL \1 = \0$ by Lemma\;\ref{lemma:eVec}. 
We conclude that $(\x + c\1)^\top \cL (\x + c\1) = \x^\top \cL \x$.
\end{proof}
}

\subsection{Piecewise Planar Signal Reconstruction}
\label{subsec:PWP}

Unlike SDGLR in \cite{pang17,liu17} that promotes PWC signal reconstruction, we seek to promote PWP signal reconstruction via signal-dependent GGLR (SDGGLR).
We first demonstrate that gradient graph weight assignment \eqref{eq:gradWeight} promotes \textit{continuous} PWP reconstruction by extending our analysis for SDGLR in Section\;\ref{subsec:defn}. 
First, define \textit{square difference of manifold gradient} $\delta_{i,j} \triangleq \|\balpha^i - \balpha^j\|^2$.
Signal-dependent edge weight \eqref{eq:gradWeight} is thus $\bar{w}_{i,j} = \exp \left( - \delta_{i,j} / \sigma_\alpha^2 \right) \geq 0$. 
Writing GGLR \eqref{eq:GGLR} as a sum
\begin{align}
\balpha^\top \bar{\L}^o \balpha &= \sum_{(i,j) \in \bar{\cE}} \exp \left( - \frac{\delta_{i,j}}{\sigma_\alpha^2}\right) \delta_{i,j}, 
\end{align}
we see that each term in the sum corresponding to $(i,j) \in \bar{\cE}$ goes to $0$ if either i) $\delta_{i,j} \approx 0$, or ii) $\delta_{i,j} \approx \infty$ (in which case edge weight $\bar{w}_{i,j} \approx 0$). 
Thus, minimizing GGLR \eqref{eq:GGLR} iteratively would converge to a solution $\x^*$ where clusters of connected node-pairs in $\bar{\cG}$ have very small $\delta_{i,j}$, while connected node-pairs across clusters have very large $\delta_{i,j}$. 

As an illustrative example, consider a 5-node line graph $\cG$ with initial noisy signal $\y = [2 ~2.8 ~3.1 ~2.5 ~1.2]^\top$.
Setting $\sigma_g^2 = 0.5$, the corresponding gradient graph $\bar{\cG}$ and GNG are shown in Fig.\;\ref{fig:gradient_graph}(a) and (b), respectively. 
Using an iterative algorithm that alternately computes a signal by minimizing objective $\min_{\x} \|\y-\x\|^2_2 + \mu \x^\top \cL \x$ and updates edge weights $\bar{w}_{i,j} = \exp \left( - \delta_{i,j} / \sigma_g^2 \right)$ in $\bar{\L}$, where $\mu = 0.25$, it converges to solution $\x^* = [2.08 ~ 2.66 ~3.25 ~2.29 ~1.32]$ after $6$ iterations. 
The corresponding gradient graph and GNG are shown in Fig.\;\ref{fig:gradient_graph}(c) and (d), respectively.
In Fig.\;\ref{fig:gradient_graph}(f), we see that the converged signal follows two straight lines that coincide at their boundary (thus continuous). 
In Fig.\;\ref{fig:gradient_graph}(e), neighboring nodes have the same gradients except node-pair $(2,3)$.
The converged signal $\x^*$ that minimizes SDGGLR $\x^\top \cL (\x) \x =  1.7e-04$ is PWP.

\begin{figure}
\centering
\begin{tabular}{cc}
\includegraphics[width=3.4in]{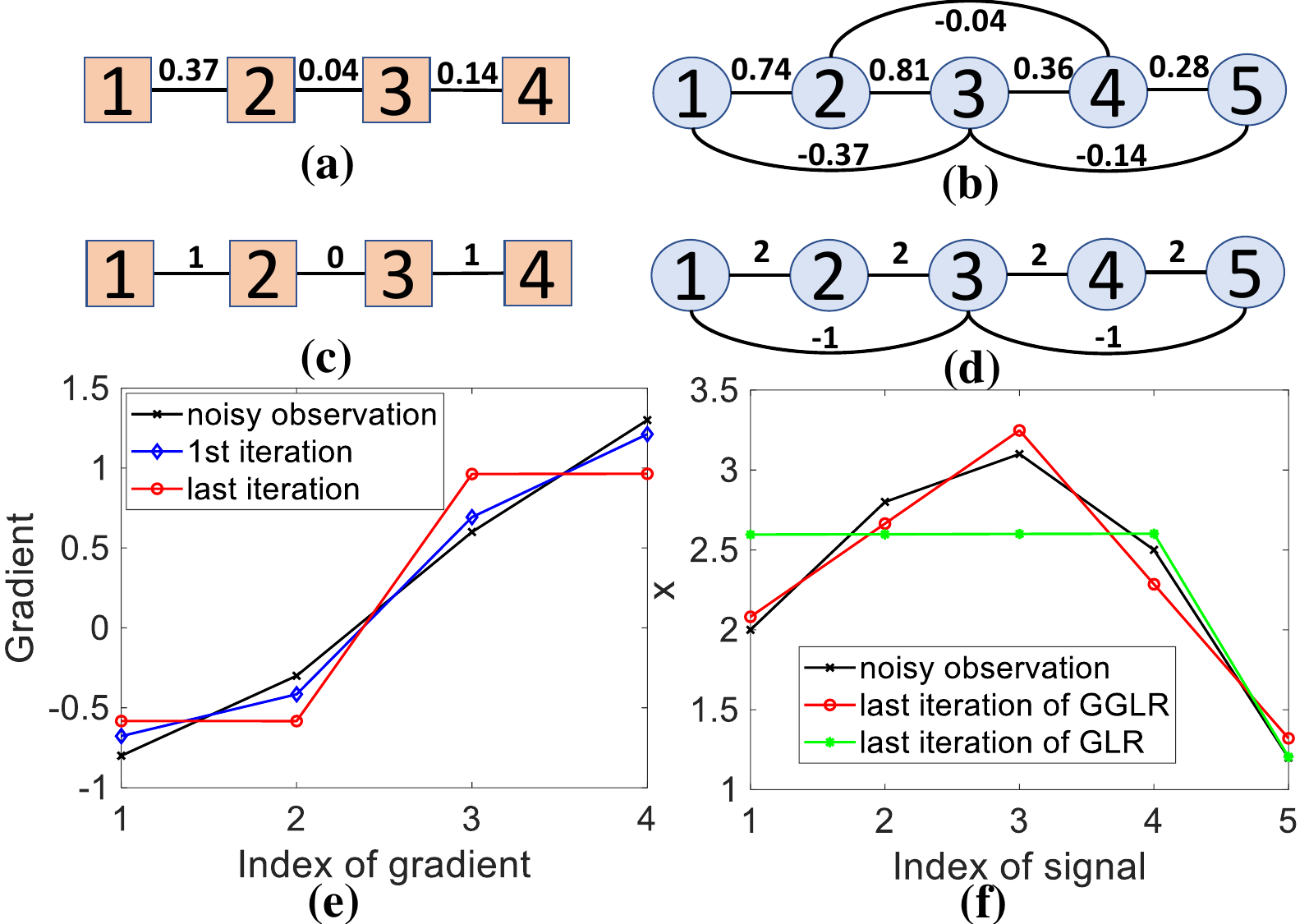}
\vspace{-0.05in}
\end{tabular}
\caption{\small (a) An example gradient graph $\bar{\cG}$ with edge weights initialized from a noisy $5$-node signal on a line graph. 
(b) A corresponding GNG $\cG^*$ to the gradient graph in (a).
(c) A gradient graph with edge weights converged after $6$ iterations.
(d) A corresponding GNG to the gradient graph in (c). 
(e) Computed gradients $\balpha^i$'s after different numbers of iterations. 
(f) Computed signals $\x$'s using GLR or GGLR for regularization after different numbers of iterations.  }
\label{fig:gradient_graph}
%\vspace{-0.1in}
\end{figure}

%The previous discussion describes promotion of \textit{continuous} PWP signal reconstruction, where two neighboring planar pieces coincide at the boundary.   
We further generalize and consider also \textit{non-continuous} PWP signals; 
see Fig.\;\ref{fig:PlanarSignal}(b) for an example where line over domain $[1,2)$ does not coincide with line over $[2,3]$ at boundary $p_1 = 2$. 
Non-continuous PWP signals include PWC signals as a special case, and thus they are more general than continuous PWP signals. 
The difficulty in promoting non-continuous PWP reconstruction is that a discrete gradient computed across two neighboring pieces---called \textit{false gradient}---is not a true gradient of any one individual piece, and thus should be removed from the GGLR calculation.
In Fig.\;\ref{fig:PlanarSignal}(b), given neighboring points $(1.5,3)$ and $(2,1)$ are from two separate linear pieces, gradient computed across them is a false gradient. 

We detect and handle false gradients to promote non-continuous PWP reconstruction as follows.
First, we assume that the given signal is continuous PWP and perform SDGGLR in a small number of iterations, updating gradient edge weights using \eqref{eq:gradWeight} in the process. 
%Clearly, it will cause over-smoothness on the non-continuous region on the edges.
Second, we set a threshold (\eg, twice the average of the computed gradient norm) to identify false gradients and remove corresponding nodes in the gradient graph.
Finally, we perform iterative SDGGLR on the modified gradient graph until convergence. 
We show the importance of detecting and removing false gradients in Section\;\ref{sec:results}.

\subsection{MAP Optimization}
\label{subsec:MAP}

We now formulate a \textit{maximum a posteriori} (MAP) optimization problem using a linear signal formation model and GGLR \eqref{eq:GGLR} as signal prior, resulting in
\begin{align}
\min_\x \|\y - \H \x\|^2_2 + \mu \x^\top \cL \x
\label{eq:MAP_general}
\end{align}
where $\y \in \mathbb{R}^M$ is an observation vector, $\H \in \mathbb{R}^{M \times N}$, $M \leq N$, is the signal-to-observation mapping matrix, and $\mu >0 $ is a tradeoff parameter (to be discussed in details in Section\;\ref{sec:optPara}). 
For example, $\H$ is an identity matrix if \eqref{eq:MAP_general} is a denoising problem \cite{pang17}, $\H$ is a long $0$-$1$ sampling matrix if \eqref{eq:MAP_general} is an interpolation problem \cite{chen21}, and $\H$ is a low-pass blur filter if \eqref{eq:MAP_general} is a deblurring problem \cite{bai19}. 
We discuss specific applications of \eqref{eq:MAP_general} in Section\;\ref{sec:results}.

\eqref{eq:MAP_general} has an easily computable solution if $\H$ satisfies two conditions, which we state formally below.
%Denote by $\H^\#$ the right pseudo-inverse of $\H$. Then, 

\begin{theorem}
Optimization \eqref{eq:MAP_general} has a closed-form solution: 
\begin{align}
\x^* = { \underbrace{ \left( \H^\top \H + \mu \cL \right) }_{\bPhi} }^{-1} \H^\top \y 
\label{eq:MAP_sol_general}
\end{align}
if 
%i) $\H$ is full row-rank, 
i) $\H\u_k \neq \0, \forall k$, 
and ii) $\{\H \u_k\}_{k=0}^{K}$ are mutually linear independent, where $\{\u_k\}_{k=0}^{K}$ are eigenvectors of $\cL$ spanning the \red{$K+1$-dimensional eigen-subspace} for eigenvalue $0$. 
\label{thm:closedForm}
\end{theorem}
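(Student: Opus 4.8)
The plan is to show that the system matrix $\bPhi = \H^\top\H + \mu\cL$ is invertible, since then \eqref{eq:MAP_sol_general} follows immediately by differentiating the strictly convex (once invertibility is known) objective in \eqref{eq:MAP_general} and setting the gradient to zero: $2\H^\top(\H\x - \y) + 2\mu\cL\x = \0$, i.e. $\bPhi\x = \H^\top\y$. Both $\H^\top\H$ and $\cL$ are PSD (the latter by Theorem\;\ref{thm:GGLR}), so $\bPhi$ is PSD, and invertibility is equivalent to $\bPhi$ being positive definite, equivalently $\ker(\H^\top\H)\cap\ker(\cL) = \{\0\}$. So the crux is a nullspace-intersection argument.

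First I would characterize the two nullspaces. Since $\H$ is full row-rank, $\ker(\H^\top\H) = \ker(\H)$ has dimension $N-M$. For $\cL = \G^\top\R^\top\bar\L^o\R\G$, a vector $\v$ lies in $\ker(\cL)$ iff $\bar\L^o\R\G\v = \0$ (because $\x^\top\cL\x = \|(\bar\L^o)^{1/2}\R\G\v\|_2^2$), i.e. iff $\R\G\v$ lies in $\ker(\bar\L^o)$; equivalently, the gradient vector $\balpha = \R\G\v$ must be ``blockwise constant'' in the sense dictated by the connected components of the gradient graph $\bar\cG$. In the connected case this means every computed gradient $\balpha^i$ equals a common $\balpha^*\in\mathbb{R}^K$; combined with $\G\1 = \0$ (Lemma on $\G$, eq.\;\eqref{eq:grad1}), such $\v$ is exactly a planar signal with gradient $\balpha^*$ plus a constant offset, a space of dimension $K+1$ spanned by a basis $\{\v_i\}_{i=1}^{K+1}$ — these are the eigenvectors of $\cL$ for eigenvalue $0$ named in the hypothesis. (I would remark that the case $\H\v_i = \0$ for some $i$ is handled by only requiring the hypothesis for those $\v_i$ with $\H\v_i\neq\0$, since a $\v_i$ already in $\ker(\H)$ contributes no obstruction beyond what the others do — I will need to be a little careful here and may instead phrase the argument in terms of the full $(K+1)$-dimensional space.)

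Next I would suppose $\v\in\ker(\H)\cap\ker(\cL)$ and derive a contradiction with hypothesis (ii). Write $\v = \sum_{i=1}^{K+1} c_i\v_i$. Applying $\H^\#\H$ (where $\H^\# = \H^\top(\H\H^\top)^{-1}$ is the right pseudo-inverse, well defined since $\H$ is full row-rank) and using $\H\v = \0$, we get $\H^\#\H\v = \sum_i c_i\,\H^\#\H\v_i = \0$. By hypothesis (ii) the vectors $\{\H^\#\H\v_i\}_{i=1}^{K+1}$ are linearly independent (restricting to the indices with $\H\v_i\neq\0$, and noting $\H^\#\H\v_i = \0$ exactly when $\H\v_i = \0$), which forces $c_i = 0$ for all relevant $i$; the remaining $\v_i$ are in $\ker(\H)$, so one re-examines whether they can contribute — here I expect to argue that if any $\v_i$ lies in $\ker(\H)$ then hypothesis (ii) as stated already fails unless that $\v_i$ is itself zero in the relevant quotient, so in fact $\v = \0$. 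Hence $\ker(\H^\top\H)\cap\ker(\cL) = \{\0\}$, $\bPhi\succ 0$, and $\bPhi^{-1}$ exists, giving \eqref{eq:MAP_sol_general}.

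The main obstacle I anticipate is the bookkeeping around eigenvectors $\v_i$ with $\H\v_i = \0$: the hypothesis is stated only for those $\v_i$ with $\H\v_i\neq\0$, so I must make sure the linear-independence condition genuinely rules out all of $\ker(\cL)\cap\ker(\H)$ and not merely the part transverse to $\ker(\H)$. The clean way is probably to observe that $\{\H^\#\H\v_i : \H\v_i\neq\0\}$ being linearly independent and of the ``right'' size means no nontrivial combination of the $\v_i$'s lies in $\ker(\H)$ except the trivial one supported on the already-killed indices, and then a short separate step shows those cannot combine to something nonzero in $\ker(\cL)$ either — or, more simply, to note that if some $\v_i\in\ker(\H)$ then $\H^\#\H\v_i=\0$, so ``mutually linear independent'' among the listed vectors is only meaningful after discarding zeros, and the statement implicitly assumes none of the listed ones is zero, i.e. implicitly assumes $\H\v_i\neq\0$ was a genuine restriction picking out a spanning-enough subset. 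I would flag this and state the hypothesis in the form that makes the argument go through (namely that the images $\H\v_i$ span a $(K+1)$-dimensional — equivalently full — independent set, so that $\ker(\H)$ meets $\mathrm{span}\{\v_i\}$ trivially). A secondary, minor obstacle is the disconnected-$\bar\cG$ case, where $\ker(\cL)$ is larger than $K+1$ dimensional; I would either assume $\bar\cG$ connected (consistent with the manifold-graph setup) or note the argument extends componentwise.
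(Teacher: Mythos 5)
Your proposal is correct and follows essentially the same route as the paper's proof: reduce to positive definiteness of $\bPhi$, identify the $(K+1)$-dimensional nullspace of $\cL$, and use linear independence of the projections $\H^\# \H \v_i$ onto the row space $\cR(\H)$ to rule out a nontrivial intersection with $\cN(\H)$. Your worry about eigenvectors with $\H\v_i = \0$ resolves itself, since linear independence of the $K+1$ projections already forces each projection to be nonzero and the $\v_i$ to form a basis of the nullspace of $\cL$ (assuming, as the paper implicitly does, that this nullspace is exactly $(K+1)$-dimensional).
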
\noindent
See Appendix\;\ref{append:invertible} for a proof.

\vspace{0.05in}
\noindent
\textbf{Remarks}: 
\eqref{eq:MAP_sol_general} can be solved as a linear system $\bPhi \x^* = \H^\top \y$ efficiently without matrix inverse using \textit{conjugate gradient} (CG) \cite{axelsson1986rate}.  
%The first condition means that $M$ rows of $\H$ ($M$ observations) are linearly independent. The second condition $\H \u_k \neq \0$ means that eigenvectors $\{\u_k\}$ of $\cL$ for eigenvalue $0$ must be observable. The third condition means that the eigenvectors $\{\u_k\}$ must remain mutually linearly independent after multiplying by $\H$.
For denoising $\H = \I$, and the conditions are trivially true.
For interpolation $\H$ is rectangular with a single $1$ in each row, and $\H\u_k \neq \0$ is satisfied if $\u_k$ has no zero entries.  
For deblurring $\H$ is typically full row-rank as a low-pass filter centered at different samples.
Since $\H$ is square and full row-rank, the nullspace has dimension $0$, \ie, $\H \u_k \neq \0, \forall k$. 
Thus, the key condition is the second one, which can be checked given known $\H$, $\u_0 = \1$ and $\u_k = \P_k, k \geq 1$. 

\section{Computation of Optimal Tradeoff Parameter}
\label{sec:optPara}
Parameter $\mu$ in \eqref{eq:MAP_general} must be carefully chosen for best performance.
Analysis of $\mu$ for GLR-based signal denoising \cite{chen2017bias} showed that the best $\mu$ minimizes the \textit{mean square error} (MSE) by optimally trading off bias of the estimate with its variance. 
Here, we design a new method from a theorem in \cite{chen2017bias} to compute a near-optimal $\mu$ for \eqref{eq:MAP_general}. 

For simplicity, consider the denoising problem, where $\H = \I$, and hence $\bPhi = (\I + \mu \cL)$.  
Assume that observation $\y = \x^o + \z$ is corrupted by zero-mean iid noise $\z$ with covariance matrix $\bSigma = \sigma^2_z \I$.
Denote by $(\lambda_i,\v_i)$ the $i$-th eigen-pair of matrix $\cL$; the first $K+1$ eigen-pairs corresponding to eigenvalue $0$, $\lambda_1 = \ldots \lambda_{K+1} = 0$, are not important here.
We restate Theorem 2 in \cite{chen2017bias} as follows.
\begin{theorem}
MSE of solution \eqref{eq:MAP_sol_general} for $\H = \I$ is
\begin{align}
\text{MSE}(\mu) = \sum_{i=K+2}^{N} \psi_i^2(\v_i^{\top} \x^o)^2 + \sigma^2_z \sum_{i=1}^N \phi_i^2 
\label{eq:MSE}
\end{align}
where $\psi_i = \frac{1}{1 + \frac{1}{\mu \lambda_i}}$ and $\phi_i = \frac{1}{1 + \mu \lambda_i}$. 
\label{thm:MSE}
\end{theorem}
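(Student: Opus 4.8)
The plan is to carry out the standard bias--variance decomposition of the linear estimator $\hat{\x} \triangleq \bPhi^{-1}\H^\top\y = (\I + \mu\L^1)^{-1}\y$ and then evaluate the two resulting terms in the eigenbasis of $\L^1$. First I would record that $\bPhi = \I + \mu\L^1$ is invertible for every $\mu > 0$: since $\L^1$ is PSD (Theorem~\ref{thm:GGLR}), its eigenvalues $\lambda_i \ge 0$, so $\bPhi$ has eigenvalues $1 + \mu\lambda_i \ge 1$. Diagonalizing $\L^1 = \V\,\text{diag}(\{\lambda_i\})\,\V^\top$ with orthonormal $\V$, we obtain $\bPhi^{-1} = \V\,\text{diag}(\{\phi_i\})\,\V^\top$ with $\phi_i = 1/(1+\mu\lambda_i)$; i.e.\ the estimator acts as a low-pass graph filter with frequency response $\phi_i$.

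Next I would substitute $\y = \x^o + \z$ to get $\hat{\x} - \x^o = (\bPhi^{-1}-\I)\x^o + \bPhi^{-1}\z$. Because $\z$ is zero mean, the cross term vanishes under $\exE$, leaving
\[
\text{MSE}(\mu) = \underbrace{\big\|(\bPhi^{-1}-\I)\x^o\big\|_2^2}_{\text{squared bias}} \;+\; \underbrace{\exE\big\|\bPhi^{-1}\z\big\|_2^2}_{\text{variance}} .
\]
For the bias term, expand in the eigenbasis: $\bPhi^{-1}-\I = \V\,\text{diag}(\{\phi_i-1\})\,\V^\top$, and the key algebraic identity is $1 - \phi_i = \mu\lambda_i/(1+\mu\lambda_i) = 1/(1 + 1/(\mu\lambda_i)) = \psi_i$ whenever $\lambda_i \neq 0$; at the two zero eigenvalues $\lambda_1 = \lambda_2 = 0$ (guaranteed by Theorem~\ref{thm:GGLR2}) one has $\phi_i - 1 = 0$, which is exactly $\psi_1 = \psi_2 = 0$ read off the convention $1/(\mu\cdot 0) = \infty$. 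Hence $\|(\bPhi^{-1}-\I)\x^o\|_2^2 = \sum_{i=1}^N \psi_i^2(\v_i^\top\x^o)^2 = \sum_{i=3}^N \psi_i^2(\v_i^\top\x^o)^2$.

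For the variance term I would use $\exE\|\bPhi^{-1}\z\|_2^2 = \Tr\!\big(\bPhi^{-1}\bSigma\,\bPhi^{-1}\big) = \sigma_z^2\,\Tr(\bPhi^{-2}) = \sigma_z^2 \sum_{i=1}^N \phi_i^2$, invoking $\bSigma = \sigma_z^2\I$ and the symmetry of $\bPhi$. Adding the two contributions yields \eqref{eq:MSE}.

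The computation is essentially mechanical, and indeed this is a restatement of Theorem~2 of \cite{chen2017bias}, which I would cite for the detailed algebra. The only point that genuinely needs care is the bookkeeping at the two zero eigenvalues of $\L^1$: one must check that $\psi_i$ vanishes there so the bias sum may legitimately start at $i = 3$, whereas the variance sum still runs over all $i$ (with $\phi_1 = \phi_2 = 1$), reflecting that the two unpenalized signal components --- constant and linear --- are passed through the filter untouched, contributing full-strength noise but no bias. This asymmetry between the two sums is the crux, and correctly attributing each zero-mode to ``no bias but full variance'' is where I expect a careless argument to slip.
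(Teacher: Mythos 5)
Your proof is correct: the bias--variance decomposition, the eigenbasis diagonalization giving $\phi_i = 1/(1+\mu\lambda_i)$ and $1-\phi_i = \psi_i$, and the careful bookkeeping at the two zero eigenvalues (zero bias, full variance) all check out and yield exactly \eqref{eq:MSE}. The paper itself supplies no proof --- it simply restates Theorem~2 of \cite{chen2017bias} --- so your argument is the standard derivation that the paper defers to that reference, and there is nothing substantive to contrast.
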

%See Appendix\;\ref{append:proofMSE} for a proof \red{do we need the proof if the Theorem originated from another paper? } \blue{If the paper is too long, I think we can remove it.}
The two sums in \eqref{eq:MSE} correspond to bias square and variance of estimate $\x^*$, respectively.
In general, a larger $\mu$ entails a larger bias but a smaller variance; the best $\mu$ optimally trades off these two quantities for a minimum MSE. 

In \cite{chen2017bias}, the authors derived a corollary where $\text{MSE}(\mu)$ in \eqref{eq:MSE} is replaced by an upper bound function $\text{MSE}^+(\mu)$ that is strictly convex.
The optimal $\mu^*$ is then computed by minimizing convex $\text{MSE}^+(\mu)$ using conventional optimization methods. 
However, this upper bound is too loose in practice to be useful.
%\red{in the experiments, it might be good to show how far-of the compute $\mu$ is using this convex envelope.}
As an example, Fig.\;\ref{fig:mse} illustrates $\text{MSE}(\mu)$ for a 2D image \texttt{House} corrupted by iid Gaussian noise with standard deviation $10$, then GNG Laplacian $\cL$ is obtained as described previously.  
The computed $\mu^*=10^{-4}$ by minimizing the convex upper bound $\text{MSE}^+(\mu)$ is far from the true $\mu^* \approx 0.15$ by minimizing $\text{MSE}(\mu)$ directly (found by exhaustive search).
%\red{can you illustrate the minima $\mu^*$'s for the three curves?}
%\blue{By exhaustive search on $\mu\in[0, 1]$, the best $\mu$ by minimizing $\text{MSE}(\mu) $ is near 0.15. The computed $\mu$ from $\text{MSE}^+(\mu)$ and $\text{MSE}^a(\mu)$ are $10^{-4}$ and 0.163, respectively. }

%Note that each term in the first sum in \eqref{eq:MSE} is the square of the $i$-th GFT coefficient $\alpha_i = \v_i^\top \x^o$ of signal $\x^o$ scaled by $\psi_i^2$. Hence, $\sum_{i=3}^N \alpha_i^2$ is the sum of energy for GFT coefficient 3 upwards.

%In the general case when $K>1$, we compute the optimal $\mu_k$ for each latent dimension $k$, then compute the average $\bar{\mu} = \frac{1}{K} \sum_k \mu_k$. For restoration problem other than denoising, we compute $\mu$ as if the problem is denoising. 

\begin{figure}
\centering
\begin{tabular}{cc}
\includegraphics[width=2.8in]{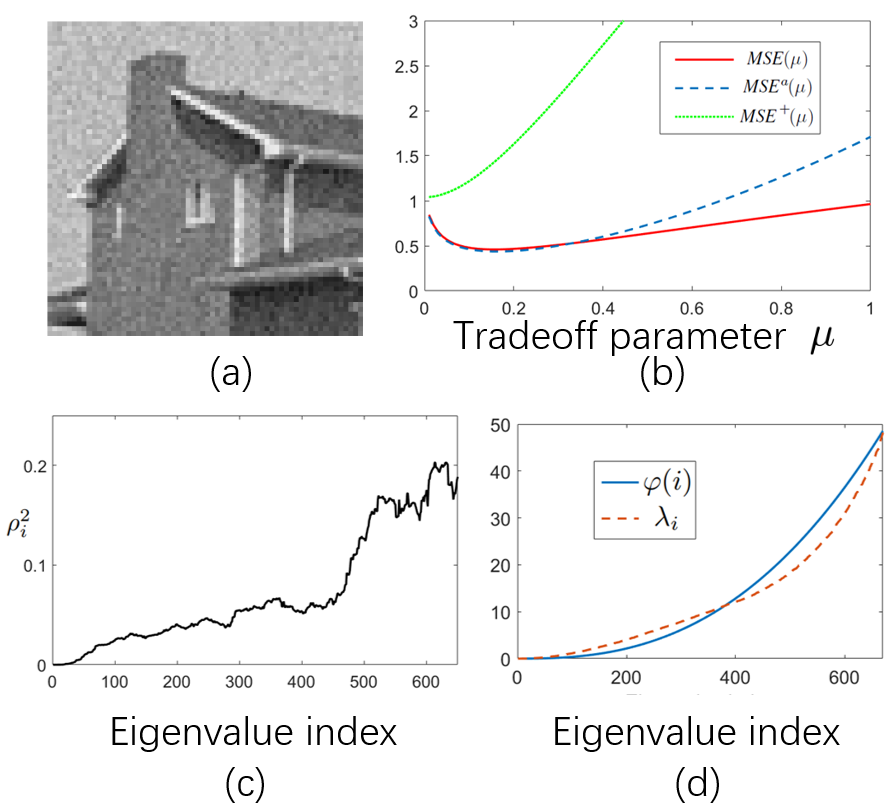}
\vspace{-0.1in}
\end{tabular}
\caption{\small (a) 2D image \texttt{House}. (b) $\text{MSE}(\mu)$, $\text{MSE}^+(\mu)$ and $\text{MSE}^a(\mu)$ as functions of weight parameter $\mu$ for \texttt{House}. 
(c) $\rho_i^2$ as function of eigenvalue index $i$.
(d) Modeling $\lambda_i$'s as exponential function $\varphi(i)$.}
\label{fig:mse}
%\vspace{-0.1in}
\end{figure}

Instead, we take a different approach and approximate $\text{MSE}(\mu)$ in \eqref{eq:MSE} with a \textit{pseudo-convex} function \cite{Mangasarian_PseudoConvex1965}---one amenable to fast optimization. 
Ignoring the first $K+1$ constant terms in the second sum, we first rewrite \eqref{eq:MSE} as 
\begin{align}
MSE(\mu) \geq \sum_{i=K+2}^N \frac{\mu^2(\lambda_i\v_i^{\top}\x^o)^2+\sigma_z^2}{(1+\mu\lambda_i)^2}= \sum_{i=K+2}^N \frac{\mu^2 \rho_i^2+\sigma_z^2}{(1+\mu \lambda_i)^2}
\label{eq:MSE_1}
\end{align}
where $\rho_i= \lambda_i\v_i^{\top}\x^o$. 
\blue{We obtain an approximate upper bound by replacing each $\rho_i$ with $\max_i\rho_i = \rho_N \approx \lambda_N \v_N^\top \y$ in \eqref{eq:MSE_1}, assuming $\rho_i$ is monotonically increasing with $i$.} 
Extreme eigen-pair $(\lambda_N,\v_N)$ can be computed efficiently using LOBPCG \cite{lobpcg}.
See Fig.\;\ref{fig:mse}(c) for an illustration of $\rho_i^2$. 
Next, we model $\lambda_i$ as an exponential function of $i$,  $\varphi(i)=ai^b$, where $a$ and $b$ are parameters.
%\red{why do u think this exponential function of $i$ is a good model?}
%\blue{I have tried Denish's method, the obtained results are very close. Here, I just use a different fitting function.}
We thus approximate MSE as
\begin{align}
\text{MSE}^a(\mu)= \sum_{i=K+2}^N \frac{\mu^2 \rho_N^2+\sigma_z^2}{(1+\mu \varphi(i))^2} .
\label{eq:MSE_a}
\end{align}

In practice, we fit parameters of $\varphi(i)$, $a$ and $b$, using only extreme eigenvalues $\lambda_{K+2}$ and $\lambda_N$ (computable in linear time using LOBPCG \cite{lobpcg}) as $b=\frac{\ln{({\lambda_{K+2}}/{\lambda_N})}}{\ln{({K+2}/{N})}}$ and $a={(K+2)^b}/{\lambda_{K+2}}$.
%\blue{Gaussian white noise with standard deviation of 10 is added for experiment. I first compute 4-connected horizontal gradient graph $\bar{\L}^1$ and vertical gradient graph  $\bar{\L}^2$, and then obtain Laplacian matrix $\L^1+\L^2$. I compute all the eigenvalues and eigenvectors of $\L^1+\L^2$, and then obtain MSE for comparison by \eqref{eq:MSE_1}. }
We observe in Fig.\;\ref{fig:mse}(b) that $\text{MSE}^a(\mu)$ is a better upper bound than $\text{MSE}^+(\mu)$, leading to better approximate $\mu^* = 0.162$.
%\red{just state the computed $\mu^*$ for all three curves.}
%\blue{I also compute PSNR by GGLR with different $\mu\in[0,1]$ for comparison. As we can see in Fig.\;\ref{fig:mse} (c), the computed $\mu$ by minimizing $MSE^a(\mu)$ is very close to the optimal $\mu$ and very approximate to the maximum PSNR.}
Since $\text{MSE}^a(\mu)$ is a differentiable and pseudo-convex function for $\mu>0$, \eqref{eq:MSE_a} can be minimized efficiently using off-the-shelf gradient-decent algorithms such as \textit{accelerated gradient descent} (AGD) \cite{Nesterov2013GradientMF}.

\section{GGLR for Graphs without Coordinates}
\label{sec:general}
We now consider a more general setting, where each node $i \in \cV$ in a manifold graph $\cG$ is not conveniently endowed with a coordinate \textit{a priori}.
To properly compute gradients, we present a parameter-free \textit{graph embedding} method that computes a vector $\p_i \in \mathbb{R}^K$ in a manifold space of dimension $K$, where $K \ll N$, for each node $i \in \cV$.
In essence, the computed graph embedding is a fast discrete approximation of the assumed underlying low-dimensional continuous manifold model.
Gradients can then be computed using obtained coordinates $\p_i$'s as done previously in Section\;\ref{sec:gglr}.

\subsection{Manifold Graphs} 
\label{subsec:detectManifold}

We assume that the input to our embedding is a \textit{manifold graph} defined in Section\;\ref{subsubsec:manifold}.
%This is the commonly held \textit{manifold hypothesis} \cite{carey2017graph}: though points in a dataset are observed in a high-dimensional input space, they intrinsically reside in a lower-dimensional manifold space.
%See \cite{wright22} for an exposition of general low-dimensional models for high-dimensional data.
%
%See Fig.\;\ref{fig:finite_graph2} for an illustration of an example finite graph with nodes uniformly distributed on a flat 2D plane. We see that on the plane, there exists a 2D Euclidean coordinate for each node that respects inter-node relationships of the original graph---\ie, connected / disconnected node pairs have similar / dissimilar coordinates. 
%We see also that \textit{interior nodes} have similar degrees, while \textit{boundary nodes} have smaller degrees. We will revisit this point later in Section\;\ref{subsec:orthogonal}.
%
In the manifold learning literature \cite{carreira2005proximity, liu2011mixture,carey2017graph}, there exist many graph construction algorithms selecting node samples that closely approximate the hypothesized manifold. 
To evaluate quality of a constructed graph, \cite{carey2017graph} proposed several metrics; one example is \textit{betweenness centrality}, which measures how often a node $i$ appears in a shortest path between two nodes in the graph. 
Mathematically, it is defined as
\begin{align}
C_B(i)=\sum_{s,t\neq i} \frac{\sigma_{st}(i)}{\sigma_{st}}
\end{align}
where $\sigma_{st}$ is the number of shortest paths from nodes $s$ to $t$, and $\sigma_{st}(i)$ is the number of those paths that pass through node $i$.
Given a graph composed of nodes uniformly sampled from a smooth manifold, the betweenness centrality of nodes should be similar, \ie, all nodes are equally likely to appear in a given shortest path.
Thus, we first divide each $C_B(i)$ by the number of $(s,t)$ pairs, $(N-1)(N-2)$, and then employ the \textit{variance of betweenness centrality} (VBC) as a metric to evaluate the quality of a manifold graph. 
Only qualified manifold graphs are inputted to our algorithm to compute embeddings. 
As shown in Table\;\ref{tab:vbc}, the first four graphs with smaller VBCs are considered as qualified manifold graphs. 

%to include in our experiments in Section\;\ref{sec:results}. 

\begin{table}   
\begin{center}   
\footnotesize
%\scriptsize
\caption{VBCs $(\times10^{-5})$ of graphs}
\label{tab:vbc} 
\begin{tabular}{|c|c|c|c|c|c|c|} 
\hline   AT\&T  & Football & FGNet & FIFA17 & Jaffe & AUS & Karate\\   
\hline    3.40  & 0.90 & 0.53 & 0.003  & 15.93  & 50.62 & 220\\ 
\hline 
\end{tabular}   
\end{center}
\vspace{-0.5cm}
\end{table}

\subsection{Defining Objective}
\label{subsec:embeddingObj}

Recall $\P \in \mathbb{R}^{N \times K}$ in the proof for Lemma\;\ref{lemma:eVec}, where the $i$-th row of $\P$ contains the $K$-dimensional vector $\p_i \in \mathbb{R}^K$ for node $i \in \cV$.
For notation convenience, we define here also $\q_k$ as the $k$-th column of $\P$---the $k$-th coordinate of all $N$ nodes. 
To minimize the distances between connected $1$-hop neighbors $(i,j) \in \cE$ in graph $\cG$, we minimize the GLR \cite{pang17}:
\begin{align}
\min_{\P \,|\, \P^{\top} \P = \I} \text{tr} \left( \P^{\top} \L \P \right) 
&= \sum_{k=1}^K \q_k^{\top} \L \q_k 
\label{eq:obj1} \\
&= \sum_{k=1}^K \sum_{(i,j) \in \cE} w_{i,j} (q_{k,i} - q_{k,j})^2 
\nonumber 
\end{align}
where $q_{k,i}$ is the $k$-th coordinate of node $i$.
Like LLE \cite{LLE}, condition $\P^{\top} \P = \I$ is imposed to ensure $\q_i^{\top} \q_j = \delta_{i-j}$. 
%\red{need a better explanation of why this is important.}
This orthgonality condition ensures $i$-th and $j$-th coordinates are sufficiently different and not duplicates of each other. 
Minimizing \eqref{eq:obj1} would minimize the squared Euclidean distance $\|\p_i - \p_j\|^2_2$ between connected node pair $(i,j)$ in the manifold space.
This objective thus preserves the \textit{first-order proximity} of the original graph structure \cite{xu21}. %connected node pairs $(i,j) \in \cE$ should be close in latent coordinates.

\begin{figure}
\centering
\begin{tabular}{cc}
\includegraphics[width=1.45in]{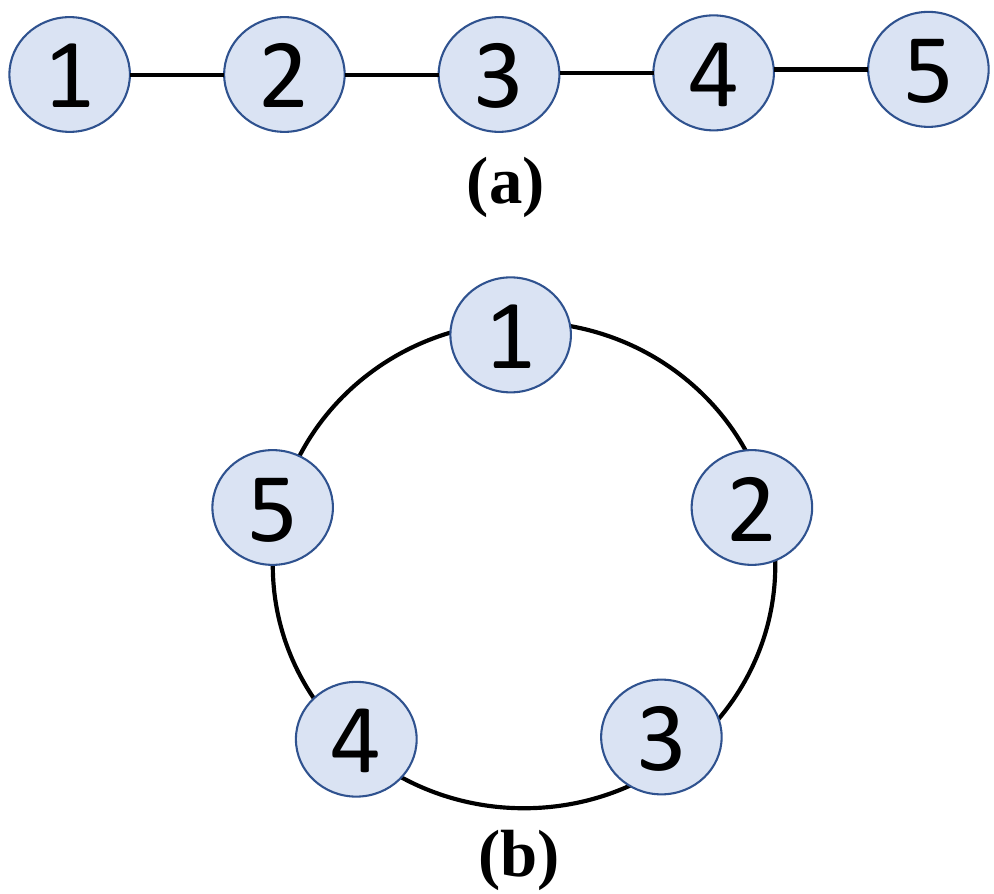}&\includegraphics[width=1.7in]{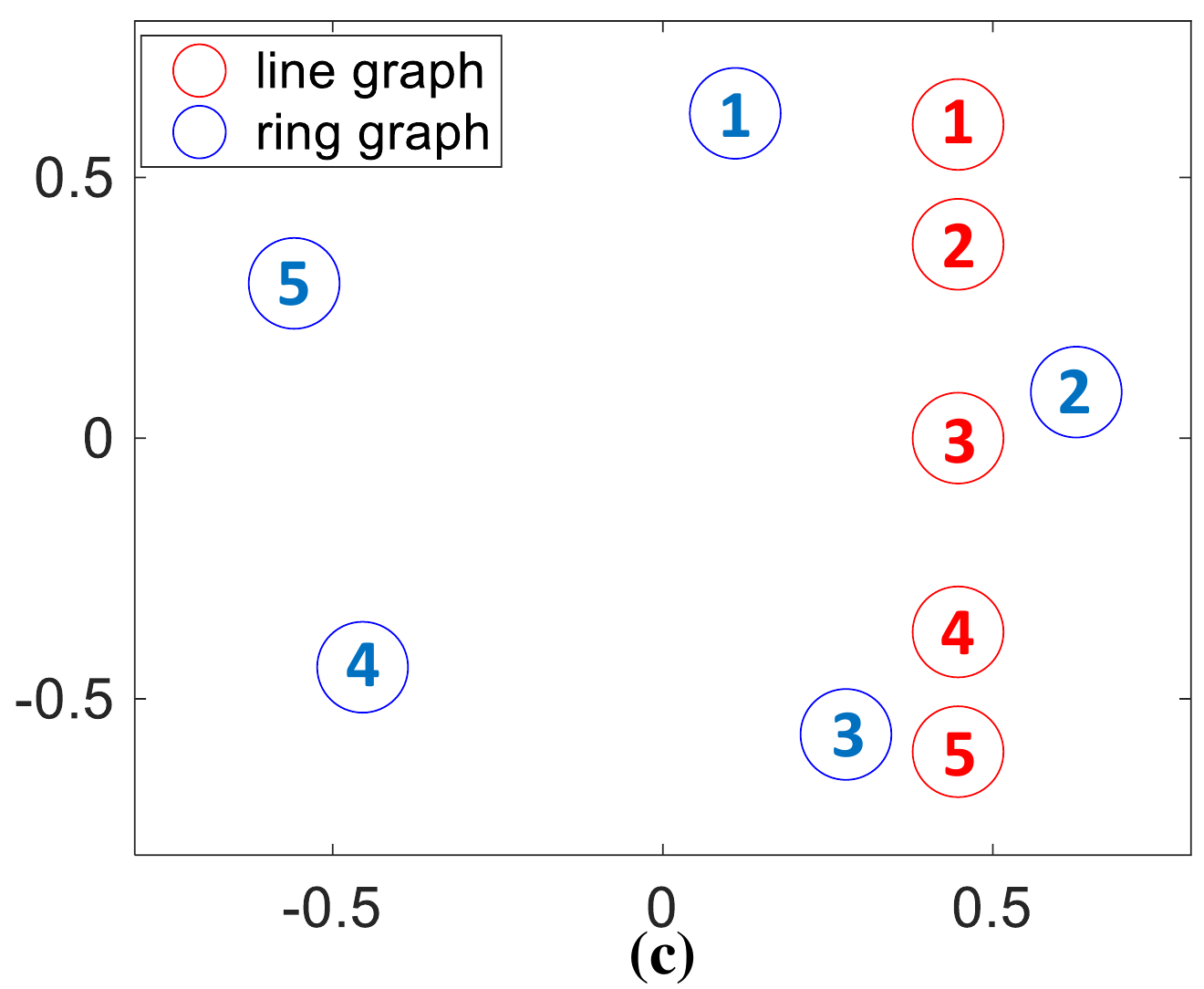}\\
\vspace{-0.25in}
\end{tabular}
\caption{\small Illustration of (a) a $5$-node line graph, (b) a $5$-node ring graph, where all nodes have the same degrees, and (c)  the first eigenvectors (\ie, 2D space vectors) of $\A$'s of (a) and (b).}
\label{fig:4node_graph}
%\vspace{-0.1in}
\end{figure}

\subsubsection{$2$-hop Neighbor Regularization}

However, objective \eqref{eq:obj1} is not sufficient---it does not consider \textit{second-order proximity} of the original graph $\cG$. 
Consider the simple $5$-node line graph example in Fig.\;\ref{fig:4node_graph}(a). 
Just requiring each connected node pair to be in close proximity is not sufficient to uniquely result in a straight line solution (and thus in lowest dimensional manifold space). 
For example, a zigzag line in 2D space is also possible.

Thus, we regularize the objective \eqref{eq:obj1} using our second manifold graph assumption discussed in Section\;\ref{subsubsec:manifold} for $h=1$; 
%\textit{sparsity of the manifold graph connectivity is determined based on point-to-point distance on the underlying manifold}. 
specifically, if $(i,j) \in \cE$ but $(i,l) \not\in \cE$, then manifold distance $d_{i,j}$ between $(i,j)$ must be strictly smaller than distance $d_{i,l}$ between $(i,l)$, \ie, $d_{i,j} < d_{i,l}$. 

Based on this assumption, we define our regularizer $g(\P)$ as follows.
Denote by $\cT_i$ the \textit{two-hop neighbor} node set from node $i$; \ie, node $j \in \cT_i$ is reachable in two hops from $i$, but $(i,j) \not\in \cE$.
The aggregate distance between each node $i$ and its 2-hop neighbors in $\cT_i$ is
$\sum_{i \in \cV} \sum_{j \in \cT_i} \|\p_i - \p_j \|^2_2$.

We write this aggregate distance in matrix form.
For each $\cT_i$, we first define matrix $\bTheta_i \in \{0, 1\}^{N \times N}$ with entries

\vspace{-0.1in}
\begin{small}
\begin{align}
\Theta_{m,n} = \left\{ \begin{array}{ll}
\frac{1}{T_i} & \mbox{if}~ m = n = i ~~~\mbox{or}~~~ m = n \in \cT_i \\
-\frac{1}{T_i} & \mbox{if}~ m = i, n \in \cT_i ~~~\mbox{or}~~~ m \in \cT_i, n = i \\
0 & \mbox{o.w.}
\end{array} \right. 
\label{eq:Q}
\end{align}
\end{small}\noindent
where $T_i = |\cT_i|$ is the number of disconnected 2-hop neighbors.
We then define $\Q = \sum_{i \in \cV} \bTheta_i$. 
Finally, we define the regularizer as $g(\P) = - \gamma \, \text{tr}(\P^{\top} \Q \P) + \epsilon \I$.
Parameters $\gamma, \epsilon > 0$ are chosen to ensure matrix PSDness (to be discussed).
The objective becomes
\begin{align}
&~ \text{tr}(\P^{\top} \L \P) - \gamma \, \text{tr}(\P^{\top} \Q \P) + \epsilon \I 
\nonumber \\ 
&= \text{tr} (\P^{\top} \underbrace{\left( \L - \gamma \Q + \epsilon \I \right)}_{\A} \P ) .
\label{eq:obj2}
\end{align}
Note that objective \eqref{eq:obj2} remains quadratic in variable $\P$. 
Solution $\P^*$ that minimizes \eqref{eq:obj2} are first eigenvectors of $\A$, which can be computed in roughly linear time using LOBPCG \cite{lobpcg} assuming $K \ll N$.  
See Fig.\;\ref{fig:4node_graph}(c) for examples of computed first eigenvectors of $\A$ as graph embeddings for line and ring graphs shown in (a) and (b), respectively.

%The matrix version of the orthonormal requirement $\q_i^{\top} \q_j = \delta_{i-j}, \forall i,j$ is $\P \P^{\top} = \I$. The solution to the optimization is the first $K$ eigenvectors $\{\v_i\}_{i=1}^K$ of PSD matrix $\A = (\H - \bar{\W})^{\top} (\H - \bar{\W})$. Note that $\text{rank}(\A) = M$, and the dimension of null-space corresponding to eigenvalue $\lambda_1 = 0$ is $N-M$. 

\subsection{Choosing Weight Parameter $\gamma$}
\label{subsec:weightPara}

As a quadratic minimization problem \eqref{eq:obj2}, it is desirable for $\A = \L - \gamma \Q + \epsilon \I$ to be PSD, so that the objective is lower-bounded, \ie,  $\q^{\top} \A \q \geq 0, \forall \q \in \mathbb{R}^N$. 
We set $\epsilon = \lambda^{(2)}_{\min}(\Q)$ to be the \textit{second} smallest eigenvalue---the Fiedler number---of $\Q$ (Laplacian has $\lambda^{(1)}_{\min}(\Q) = 0$); larger $\lambda^{(2)}_{\min}(\Q)$ means more disconnected 2-hop neighbors, and a larger $\gamma$ is desired. 
We compute $\gamma > 0$ so that $\A$ is guaranteed to be PSD via GCT \cite{varga04}.
Specifically, we compute $\gamma$ such that left-ends of all Gershgorin discs $i$ corresponding to rows of $\A$ (disc center $A_{i,i}$ minus radius $\sum_{j \neq i} |A_{i,j}|$) are at least 0, \ie,
%similarity transform $\S \A \S^{-1}$ of $\A$\footnote{Similarity transform $\S \A \S^{-1}$ and $\A$ have the same set of eigenvalues.}, where $\S = \text{diag}(\{s_i\})$ and $s_i > 0, \forall i$, are at least $0$. 
%That means disc center $A_{i,i} \geq \sum_{j \neq i} |s_i A_{i,j} / s_j|$), or 
\begin{align}
L_{i,i} - \gamma Q_{i,i} + \epsilon - \sum_{j | j \neq i} \left| L_{i,j} - \gamma Q_{i,j} \right| & \geq 0, ~~\forall i .
\label{eq:discLE1}
\end{align}

Note that $L_{i,j} = -W_{i,j} \leq 0$, and $Q_{i,j} \leq 0$.
Note further that node $j$ cannot both be a $1$-hop neighbor to $i$ and a disconnected $2$-hop neighbor at the same time, and hence either $L_{i,j}=0$ or $Q_{i,j}=0$. 
Thus, we remove the absolute value operator as
\begin{align}
L_{i,i} - \gamma Q_{i,i} + \epsilon - \sum_{j | j \neq i} \left( -L_{i,j} - \gamma Q_{i,j} \right) &\geq 0 .
\label{eq:discLE2}
\end{align}
We set the equation to equality and solve for $\gamma_i$ for row $i$, \ie, 

\vspace{-0.1in}
\begin{small}
\begin{align}
\gamma_i = \frac{L_{i,i} + \sum_{j|j\neq i} L_{i,j} + \epsilon}{Q_{ii} - \sum_{j|j\neq i} Q_{i,j}} 
= \frac{\epsilon}{Q_{ii} - \sum_{j|j\neq i} Q_{i,j}} ,
\label{eq:gamma}
\end{align}
\end{small}\noindent
where $L_{i,i} = - \sum_{j\neq i} L_{i,j}$.
Finally, we use the smallest non-negative $\gamma = \min_{i} \gamma_i$ for \eqref{eq:obj2} to ensure all disc left-ends are at least $0$, as required in \eqref{eq:discLE1}.

\begin{algorithm}[htb]
\caption{ Embedding for graph without coordinates}
\label{alg:alg_embedding}
\begin{small}
\begin{algorithmic}[1]
\REQUIRE~Graph $\cG$, manifold space dimension $K$. \\
\STATE Compute Laplacian $\L$ and 2-hop matrix $\Q$ via \eqref{eq:Q}. \\
\STATE Compute $\epsilon=\lambda_{\min}^{(2)}(\Q)$ using LOBPCG.\\ 
\STATE Compute $\gamma$ by solving \eqref{eq:gamma}.\\
\STATE Compute $\A=\L-\gamma\Q+\epsilon\I$.\\
\STATE Compute the $2$nd to the $K+1$-th eigenvector of $\A$ as $\P$.
\ENSURE~Embedding coordinates $\P$.  
\end{algorithmic}
\end{small}
\end{algorithm}

\begin{algorithm}[htb]
\caption{ SDGGLR for graph signal restoration}
\label{alg:alg_gglr}
\begin{small}
\begin{algorithmic}[1]
\REQUIRE~Observed signal $\y$, graph $\cG$, manifold space dimension $K$, observation matrix $\H$. \\
\STATE Initialize $\x=\y$.
\STATE If $\cG$ has no coordinates, compute embedding using Alg.\;\ref{alg:alg_embedding}.
\STATE Construct DAG $\cG^d$ from coordinates $\{\p_i\}$ and $\cG$. 
\STATE Construct gradient operators $\{\F^{i}\}$ from DAG $\cG^d$ via \eqref{eq:gradOp}.\\
\WHILE{not converge}
\STATE Compute manifold gradients $\balpha^i$ via \eqref{eq:alphai}. \\
\STATE Compute gradient graph adjacency matrix $\bar{\W}$ via \eqref{eq:gradWeight}.\\
\STATE Compute GNG Laplacian $\cL$ via \eqref{eq:GGLR}.\\
\STATE Update $\bPhi$, then compute $\x^*$ by solving \eqref{eq:MAP_sol_general}.
\ENDWHILE
\ENSURE ~~Restored signal $\x^*$.
\end{algorithmic}
\end{small}
\end{algorithm}

%For we can compute an optimal $\mu$ based on MSE analysis in Section\;\ref{subsec:MAP}. 

%\begin{figure}[t]
%\begin{center}
%   \includegraphics[width=0.99\linewidth]{long version/}
%\end{center}
%\vspace{-0.2in}
%\caption{An example relationship between the dimension $K$ and sum of absolute values of coordinates $\sum |\q_k|$ on two data sets: JAFFE and Cora.}
%\label{fig:dim}
%\end{figure}

%\blue{For manifold graphs, we propose a simple method to estimate the intrinsic dimension of datasets. 
%As we can see in Fig.\ref{fig:dim}, (a) is a manifold graph, and its sum of absolute values of coordinates is approximately decreased with the dimensions. However, for non-manifold graph (b), its intrinsic dimension is high and has no such phenomenon. If the absolute coordinate values of this dimension are very small, we can consider this dimension is not significant for embedding.}
%\red{this looks out of place here. I suggest we move this to Section 7-C when discussing graphs without coordinates, if it makes sense at all. I am not sure this actually provides a methodology to select $K$.}

\section{Experiments}
\label{sec:results}
We present a series of experiments to test our proposed GGLR as a regularizer for graph signal restoration. 
We considered four applications: 2D image interpolation, 3D point cloud color denoising, kNN graph based age estimation, and FIFA player rating estimation.
We first demonstrate SDGGLR's ability to reconstruct PWP signals in example image patches.

\subsection{PWC vs. PWP Signal Reconstruction}

\begin{figure}[t]
\begin{center}
   \includegraphics[width=0.99\linewidth]{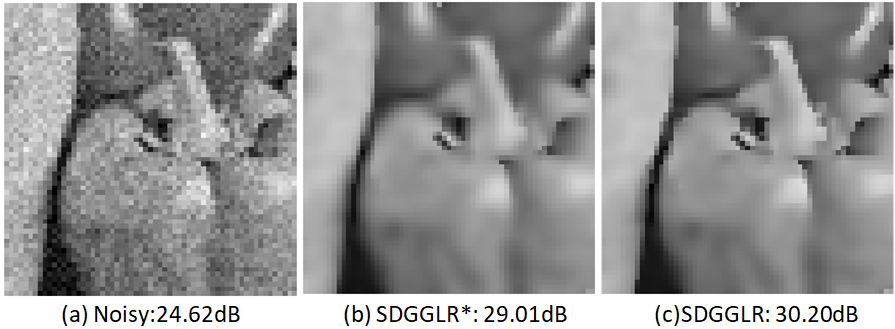}
\end{center}
\vspace{-0.2in}
\caption{
To denoise image in (a), SDGGLR without / with detecting and removing false gradients, promoting continuous / non-continuous PWP signal reconstruction, are shown in (b) and (c), respectively. 
(b) is visually more blurry than (c) with a noticeably worse reconstruction PSNR.
}
\label{fig:visualResult0}
\end{figure}
We constructed a 4-connected 2D grid graph for an image patch, where each pixel was a node, and edge weights were initially assigned $1$.
The collection of greyscale pixel values was the graph signal.
We first observe in Fig.\;\ref{fig:visualResult0} that, for image denoising, detection and removal of \textit{false gradients} (as discussed in Section\;\ref{subsec:PWP}) to promote non-continuous PWP signal reconstruction is important; continuous PWP signal reconstruction in (b) is noticeably more blurry than (c). 
We next visually compare reconstructed signals using SDGLR versus SDGGLR for regularization for image denoising and interpolation.  
Fig.\;\ref{fig:visualResult1}(a) shows a synthetic PWP image patch corrupted by white Gaussian noise of variance $6$.
%\blue{To compute GGLR, we first construct DAG for image patch. Clearly, 4-connected 2D grid graph only has the horizontal and vertical gradients, and then defined the corresponding gradient graph. }
Parameter $\sigma_{\alpha}$ in \eqref{eq:gradWeight} to compute gradient edge weights was set to $1.5$.
We estimated tradeoff parameter $\mu$ by minimizing \eqref{eq:MSE_a}. 
We solved \eqref{eq:MAP_general} iteratively---updating gradient edge weights via \eqref{eq:gradWeight} in each iteration---until convergence. 

We observe that SDGLR reconstructed an image with PWC (``staircase") characteristic in (b).
In contrast, SDGGLR reconstructed an image with PWP characteristic in (c), resulting in nearly 8dB higher PSNR.
Fig.\;\ref{fig:visualResult1}(d) is another synthetic PWP image patch corrupted by white Gaussian noise with variance $20$. 
We observe that SDGLR and SDGGLR reconstructed image patches with similar PWC and PWP characteristics, respectively.
Fig.\;\ref{fig:visualResult1}(g) shows a \texttt{Lena} image patch with $90\%$ missing pixels. 
Similar PWC and PWP characteristics in restored signals can be observed in (h) and (i), respectively.
%In particular, SDGGLR better preserved image gradients, resulting in a more natural reconstruction consistent with surrounding textures. 

\begin{figure}[t]
\begin{center}
   \includegraphics[width=0.99\linewidth]{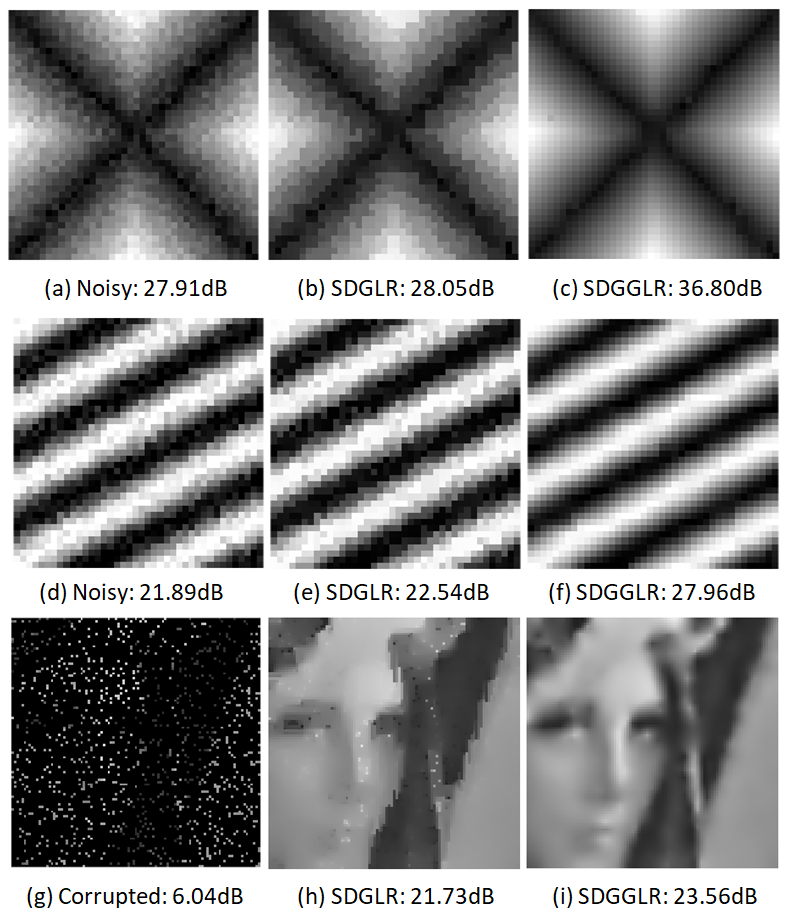}
\end{center}
\vspace{-0.2in}
\caption{Image patch denoising and interpolation results (PSNR) using SDGLR versus SDGGLR. Left column: Corrupted images. Middle column: Reconstructions using SDGLR for regularization. Right column: Reconstructions using SDGGLR for regularization.
%\red{I don't think u need to state `PSNR' for each sub-figure. see Fig.9}
}
\label{fig:visualResult1}
\end{figure}

\subsection{Performance on Graphs with Coordinates}
\subsubsection{Settings}

Image interpolation aims to estimate random missing grayscale pixel values in a 2D image. 
We constructed a 4-connected graph with initial edge weights set to $1$. 
Missing pixel values were set initially to $0$.
We can then compute gradients $\balpha^i$ via \eqref{eq:alphai} and gradient graph $\bar{\cG}$, where edge weights were updated via \eqref{eq:gradWeight}. 
We set tradeoff parameter $\mu = 0.01$ for the MAP formulation \eqref{eq:MAP_general}, which was solved iteratively till convergence. 

%maximum iteration number to 5 \red{do u mean the number of times you solve the MAP optimization problem regularized by SDGGLR? but u didn't mention iteration number in previous part}. \blue{Yes. I mean the number of times for solving the MAP optimization problem regularized by SDGGLR. For a typical experiment, I can repeat the iteration until convergence, as Algorithm 2. Actually, I use the maximum iteration number for fast computation in quantity comparison experiments. Should I add the iteration time in the previous part to keep the same? Thanks.}
Three Middlebury depth images, \texttt{Cones}, \texttt{Teddy}, and \texttt{Dolls}\footnote{https://vision.middlebury.edu/stereo/data/}, were used.  
We compared SDGGLR against existing schemes: i) graph-based SDGLR\,\cite{pang17}, and ii) non-graph-based  TGV\,\cite{bredies2015tgv}, EPLL\,\cite{EPLLZoran2011}, IRCNN\,\cite{Zhang2017}, IDBP\,\cite{tirer19}, and GSC\,\cite{Zha2020}. 
For SDGLR, we set $\mu = 0.01$ for optimal performance. 
For TGV, EPLL, IDBP, and GSC, we set their respective parameters to default. 
IRCNN is a deep learning-based method, and we used the parameters trained by the authors.

For 3D point cloud color denoising (assuming that the point coordinates are noise-free), we conducted simulation experiments using datasets from  \cite{pointcloud_dataset}. 
%We first reordered the 3D points by using the sum of 3D coordinates of the point cloud \red{I don't understand}. \blue{Sorry for my unclearly writing. Since 3D points in a point cloud are irregularly spaced apart and neighbor points may be very far in the index set, we cannot obtain a good result by using KNN directly. Therefore, I select the 3D coordinates of the point cloud as the features to reorder the 3D points. After reordering, the neighbor points are not too far in the index set.}
We selected a voxel of $1200$ 3D points for testing and connected each point to its $20$ nearest neighbors in Euclidean distance to compose a kNN graph. 
Edge weights were computed using \eqref{eq:edgeWeight0}, where the feature and signal vectors were position and color vectors, respectively. 
Parameters $\sigma_f$ and $\sigma_x$ in \eqref{eq:edgeWeight0} and $\sigma_g$ in \eqref{eq:gradWeight} were set to $1$, $0.1$ and $10$, respectively. 
White Gaussian noise with standard deviation $\sigma_n=25, 50$ and $75$ was added to the luminance component. 
We compared our proposed SDGGLR against SDGLR\,\cite{pang17}, GTV\,\cite{Dinesh20193DPC}, and GSV\,\cite{chen15}. 
For fair comparison, GTV and GSV used the same kNN graph. 
The weight parameter for GTV and GSV was set to $0.5$. 

\subsubsection{Image Interpolation}

Table\;\ref{tab:numResult1} shows the resulting PSNR and \textit{structure similarity index measure} (SSIM) \cite{wang2004image} of the reconstructed images using different methods. 
The best results of each criterion are in boldface.
We first observe that SDGGLR was better than SDGLR in PSNR for all images and all missing pixel percentages, with a maxium gain of 2.45dB.
More generally, we observe that SDGGLR outperformed competing methods when the fraction of missing pixels was large. 
An example visual comparison is shown in Fig.\;\ref{fig:visualResult2}. SDGGLR preserved image contours and mitigated blocking effects, observable in reconstruction using GSC in (f).
%\red{what do u mean by this?}. \blue{Image interpolation by using sparse representation will cause blocking effects and false edges.}
SDGGLR achieved the highest PSNR and SSIM when the fraction of missing pixels was above $98\%$. 

\begin{figure}[t]
\begin{center}
   \includegraphics[width=0.99\linewidth]{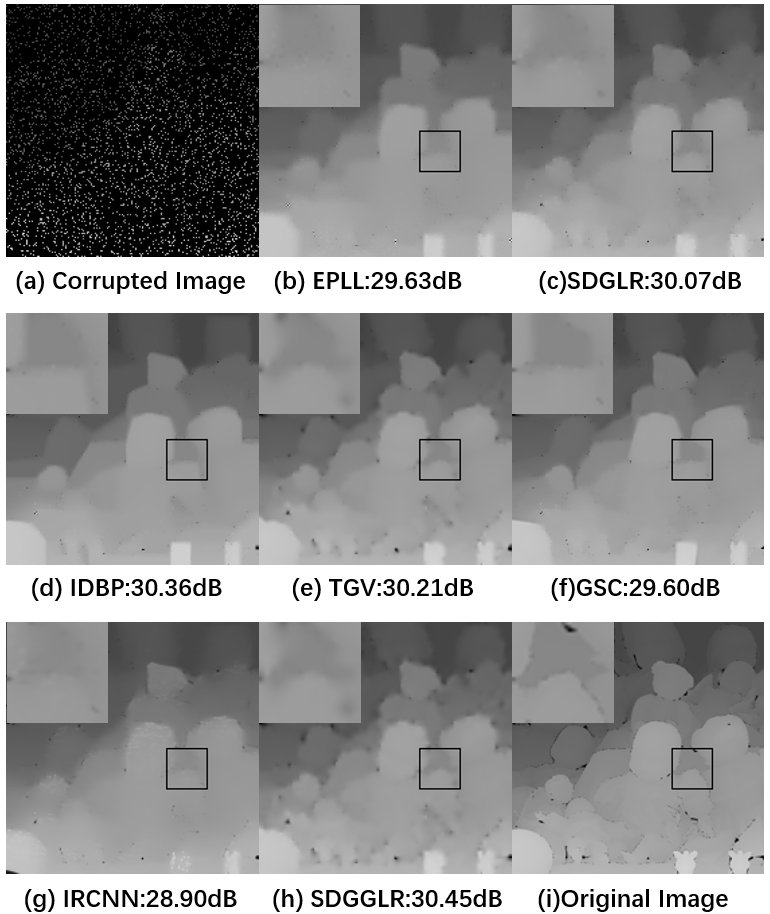}
\end{center}
\vspace{-0.2in}
\caption{Interpolation results (PSNR) using different methods on the corrupted image with $90\%$ missing pixels, selected at random. The resulting image using SDGGLR looks less blocky and more natural.} 
\label{fig:visualResult2}
\end{figure}

\begin{table}
\caption{\footnotesize Interpolation results by different methods on three depth images with different fraction of missing pixels.}
\label{tab:numResult1}
\vspace{-0.1in}
\tiny
\begin{center}
\begin{tabular}{|l|c|c|c|c|c|c|c|c|}
\hline
 \multicolumn{9}{|c|}{ 30\% missing pixels } \\ 
\hline
 Image & Metric&EPLL&SDGLR&IDBP &TGV& GSC&IRCNN&SDGGLR\\
\hline
 \multirow{2}{*} {Cones}& PSNR &  39.43  & 38.29 & 	39.93   &  40.09 & 	41.23  & \textbf{42.11}  & 40.74\\
\cline{2-9}  & SSIM   & 0.988   & 0.987 & 0.991   &  0.991  &\textbf{ 0.994} &\textbf{0.994}  &0.992 \\
\hline
 \multirow{2}{*} {Teddy}& PSNR & 38.89  & 37.84 & 38.35  & 38.26  & \textbf{40.06}  &  39.64 &38.35 \\
\cline{2-9}  & SSIM   &  0.994  & 0.990 &  0.994  & 0.994   & \textbf{0.996} &0.995  &0.994 \\
\hline
 \multirow{2}{*} {Dolls}& PSNR & 38.77  & 37.57 & 38.24  & 39.29  & 39.04  &  \textbf{40.07} & 39.36\\
\cline{2-9}  & SSIM   & 0.977   &0.979  & 0.979  & 0.983   & 0.982 & \textbf{0.984} & 0.983\\
\hline
 \multicolumn{9}{|c|}{ 60\% missing pixels } \\ 
\hline
 Image & Metric&EPLL&SDGLR&IDBP &TGV& GSC&IRCNN&SDGGLR\\
\hline
 \multirow{2}{*} {Cones}& PSNR & 34.87   & 33.41 & 35.70    & 35.14    & 	\textbf{36.19 }   &35.74  & 	35.59 \\
\cline{2-9}  & SSIM   & 0.971   & 0.968  & 0.977   & 0.974    & 0.980  & \textbf{0.981}  & 0.978 \\
\hline
  \multirow{2}{*} {Teddy}& PSNR & 32.78   & 31.80  &33.12    & 32.71   & \textbf{33.89}  & 	32.68   & 	33.19 \\
\cline{2-9}  & SSIM   &  0.978   & 0.972  &   0.982   &  0.979   & \textbf{0.986}  &  0.982 & 0.982 \\
\hline
 \multirow{2}{*} {Dolls}& PSNR & 33.20  & 33.43 & 	33.13  &34.32    &  33.52  &  34.02 & 	\textbf{34.35}\\
\cline{2-9}  & SSIM   & 0.932   & 0.944 & 0.938  & \textbf{0.949}  & 0.944 &  0.948 & \textbf{0.949}\\
\hline
 \multicolumn{9}{|c|}{ 90\% missing pixels } \\
\hline
 Image & Metric&EPLL&SDGLR&IDBP &TGV& GSC&IRCNN&SDGGLR\\
\hline
 \multirow{2}{*} {Cones}& PSNR &  27.98  &  	27.54&	29.31   & 	29.04  &  	29.05  &  	26.35& \textbf{29.78}\\
\cline{2-9}  & SSIM  &   0.888&	0.894&	0.917&	0.909&	\textbf{0.920}&	0.878&0.914 \\
\hline
 \multirow{2}{*} {Teddy}& PSNR &   26.28&	25.66&	26.92 &   	26.85&    	26.67&    	25.56 &   	\textbf{27.58} \\
\cline{2-9}  & SSIM   &   0.911&	0.910&	\textbf{0.928}&	0.921&	0.924&	0.909&	0.925 \\
\hline
 \multirow{2}{*} {Dolls}& PSNR &  29.63&    	30.07&	30.36 &   	30.21&    	29.60&	28.90 &   	\textbf{30.45} \\
\cline{2-9}  & SSIM   &    0.856&	0.870&	0.874&	0.876&	0.873&	0.857&	\textbf{0.876} \\
\hline
 \multicolumn{9}{|c|}{ 99\% missing pixels } \\ 
\hline
 Image & Metric&EPLL&SDGLR&IDBP &TGV& GSC&IRCNN&SDGGLR\\
\hline
 \multirow{2}{*} {Cones}& PSNR  &  10.94&    	24.04&	23.63&    	22.69 &  	23.02&    	7.64 & \textbf{24.57}
\\
\cline{2-9}  & SSIM   &  0.508&	0.801&	0.802&	0.775&	0.774&	0.118&	\textbf{0.803} \\
\hline
 \multirow{2}{*} {Teddy}& PSNR  &  13.09&    	22.086&	21.93&    	21.67  &  	21.49 &  	9.72&	\textbf{23.12} \\
\cline{2-9}  & SSIM    &  0.584&	0.824&	0.815&	0.805&	0.803&	0.234&	\textbf{0.830 }\\
\hline
 \multirow{2}{*} {Dolls}& PSNR &  11.83 &   	26.52&	26.42 &   	25.58 &   	25.09 &  	8.33 &  \textbf{26.62} \\
\cline{2-9}  & SSIM   & 0.509&	0.795&	0.791&	0.772&	0.787&	0.127&\textbf{0.796} \\
\hline
\end{tabular}
\end{center}
\end{table}

\begin{table}
\caption{\footnotesize Average runtime (in sec) on $128\times 128$ images.}
\label{tab:numResult2}
\vspace{-0.1in}
%\scriptsize
\footnotesize
\begin{center}
\begin{tabular}{|c|c|c|c|c|c|c|c|}
\hline
Methods&GSC&EPLL&IDBP&TGV &SDGGLR& SDGLR\\
\hline
Time (sec.) & $>60$  & 21.95 & 9.58& 9.27  & 2.53 & 1.81 \\
\hline
\end{tabular}
\end{center}
\end{table}

Table\;\ref{tab:numResult2} shows the runtime of different methods for a $128\times 128$ image. 
All experiments were run in the Matlab2015b environment on a laptop with Intel Core i5-8365U CPU of 1.60GHz.
TGV employed a primal-dual splitting method \cite{Condat2013} for $\ell_2$-$\ell_1$ norm minimization, which required a large number of iterations until convergence, especially when the fraction of missing pixels was large. 
In contrast, SDGGLR iteratively solved linear systems \eqref{eq:MAP_sol_general} and required roughly $2$sec. 
SDGLR's runtime was comparable, but its performance for image interpolation was in general noticeably worse.
%Note that the proposed method is implemented purely in Matlab.

\subsubsection{Point Cloud Color Denoising}

For point cloud color denoising, we computed the optimum tradeoff parameter $\mu$ for GGLR using methdology described in Section\;\ref{sec:optPara}. 
%Since the number of the point cloud is quite large, we choose a big part of points for estimating $\mu$ rather than using the whole dataset. 
Fig.\;\ref{fig:psnr_mu} shows an example of 3D point cloud color denoising from dataset \texttt{Long Dinosaur}. 
We plotted the PSNR results versus chosen tradeoff parameter $\mu\in [0, 0.1]$. 
We see that there exists an optimum $\mu$ near $0.03$ for denoising. 
The computed $\mu$ using our proposed method \eqref{eq:MSE_a} is $0.0361$, which is quite close.
In contrast, minimizing a convex upper bound $\text{MSE}^+(\mu)$ as done in \cite{chen2017bias} resulted in $3.66\text{e}^{-6}$, which is far from optimal. 

\begin{figure}[t]
\begin{center}
   \includegraphics[width=0.5\linewidth]{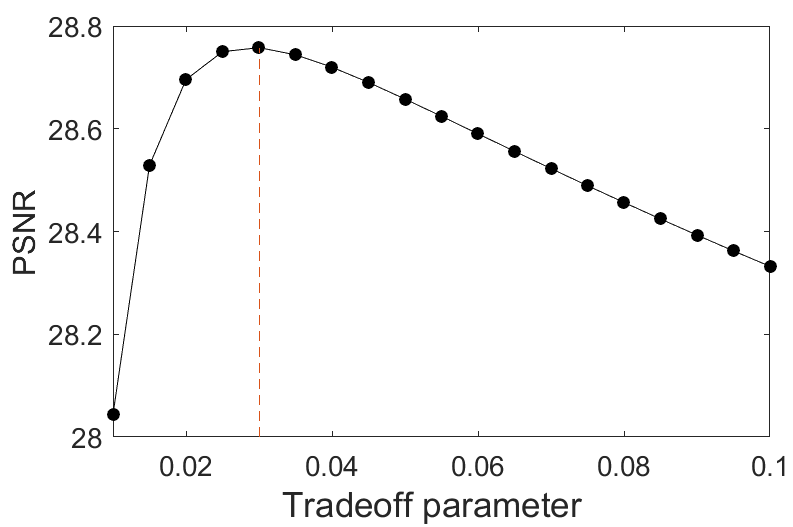}
\end{center}
\vspace{-0.2in}
\caption{Denoising results (PSNR) versus tradeoff parameter $\mu\in[0,0.1]$ for 3D point cloud \texttt{Long Dinosaur}.} 
\label{fig:psnr_mu}
\end{figure}

To quantitatively evaluate the effect of GGLR for denoising, Table\;\ref{tab:numResult_pc} shows the results in PSNR and PSSIM\;\cite{PSSIM2020}.
We see that PSNR of SDGGLR was better than SDGLR by roughly $0.5$dB on average. 
%In addition, the average execution time (AET) per noise level for related methods is also shown in Table \ref{tab:numResult_pc}. The SDGLR-based method runs fastest, but its accuracy is lower than the competing methods.
Visual results for point cloud \texttt{Long Dinosaur} are shown in Fig.\;\ref{fig:visualResult_pc1}. 
We see that GTV promoted PWS slightly better than SDGLR, resulting in fewer artifacts in the restored images. 
However, they over-smoothed local details of textures due to the PWC characteristic.
Since GGLR promoted PWP signal reconstruction, the restored point cloud color looks less blocky and more natural.

\begin{figure}[t]
\begin{center}
  \includegraphics[width=0.99\linewidth]{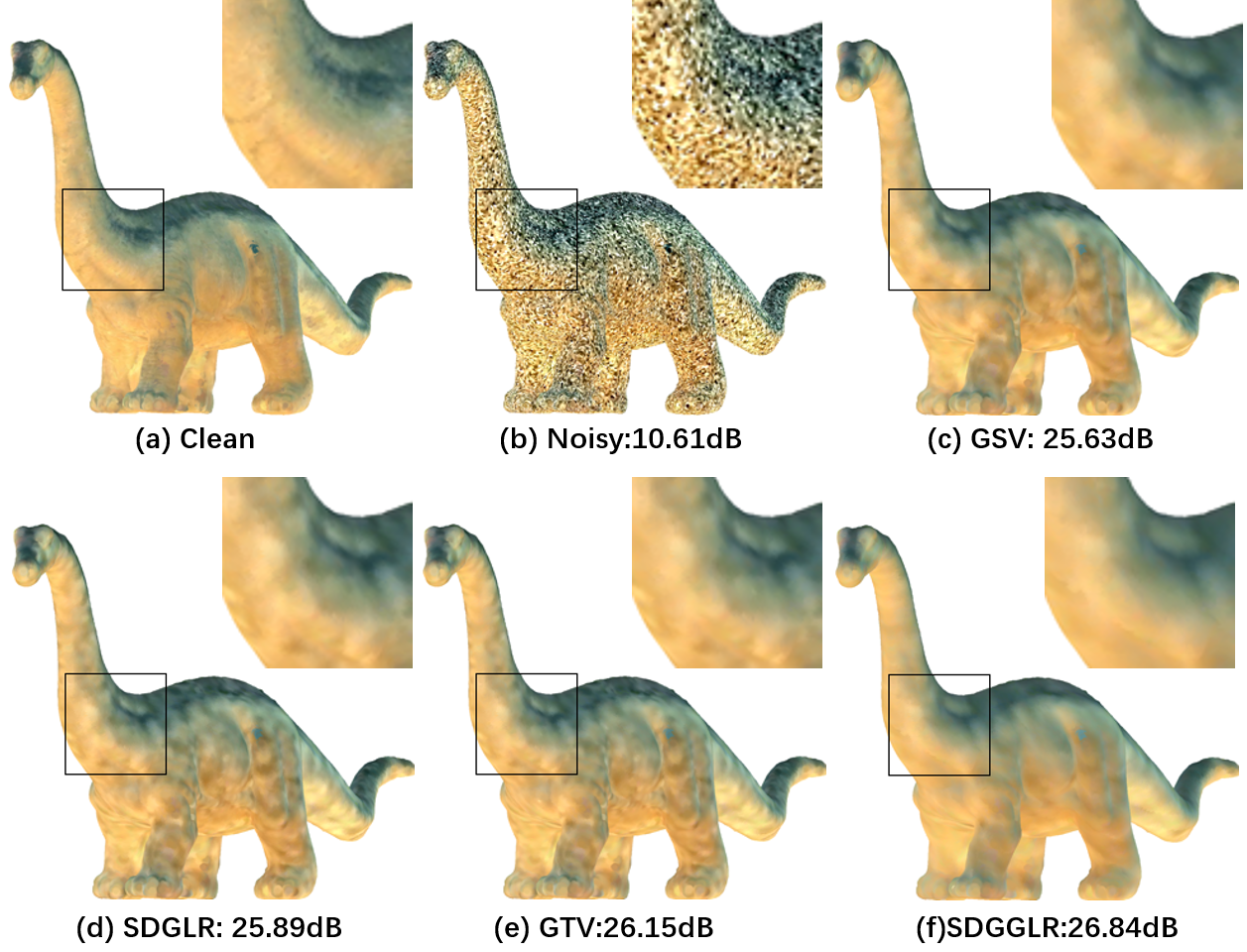}
\end{center}
\vspace{-0.2in}
\caption{Denoising results (PSNR) using different methods on the corrupted 3D point cloud \texttt{Long Dinosaur} with noise variance $\sigma_n=75$.} 
\label{fig:visualResult_pc1}
\end{figure}

%\begin{figure}[t]
%\begin{center}
  
%\end{center}
%\vspace{-0.2in}
%\caption{Denoising results (PSNR) by different methods on the corrupted point cloud \textit{long dinosaur}  with $\sigma=75$.} 
%\label{fig:visualResult_pc2}
%\end{figure}

\begin{table}
\caption{\footnotesize Color denoising  PSNR value(dB) for Gaussian noise with $\sigma_n=25, 50, 75$.}
\label{tab:numResult_pc}
\vspace{-0.1in}
\tiny
\begin{center}
\begin{tabular}{|l|c|c|c|c|c|c|c|c|c|}
\hline
 \multicolumn{9}{|c|}{ $\sigma_n=25$ } \\ 
\hline
Model & Metric & 4-arm &Asterix&Dragon&Green dino& Man&Statue&Long dino\\
 \hline
 noise &PSNR&20.15&	20.14&	20.16&	20.15&	20.09&	20.15&	20.15 \\
\hline
\multirow{2}{*} {GSV}&PSNR& 29.37&26.29&29.07&29.06&29.03 &28.91&29.22
 \\
      \cline{2-9}  &PSSIM &0.966&0.949&0.943&0.939&0.866& 0.953&0.967\\
\hline
\multirow{2}{*} {SDGLR}&PSNR&29.60&26.53&29.16&29.35& 29.73&29.01&29.85
   \\
\cline{2-9}  &PSSIM&0.967&0.946&0.946&0.946&0.894&0.953& 0.967\\
\hline
\multirow{2}{*} {GTV}&PSNR&\textbf{30.08}&27.30&29.20 & 29.75&\textbf{29.90}&29.06&\textbf{30.22}
   \\
   \cline{2-9}  & PSSIM&\textbf{0.968}&0.945&0.944&0.945 &0.893&0.954&\textbf{0.969}\\
\hline
\multirow{2}{*} {SDGGLR}&PSNR&29.70&\textbf{28.62}&\textbf{29.57} &\textbf{29.95}&29.77&\textbf{29.09}&30.17
\\
   \cline{2-9}&PSSIM&0.967&\textbf{0.960}&\textbf{0.947}&\textbf{0.946}&\textbf{0.920}&\textbf{0.955}&\textbf{0.969}  \\
\hline
 \multicolumn{9}{|c|}{ $\sigma_n=50$ } \\ 
\hline
Model & Metric & 4-arm &Asterix&Dragon&Green dino& Man&Statue&Long dino\\
 \hline
 noise &PSNR&14.13&	14.12&14.14&14.13&14.07&14.13&14.13\\
\hline
\multirow{2}{*} {GSV}&PSNR&26.65&25.03&26.26&26.48&26.39& 26.15&26.52 
\\
   \cline{2-9}  &PSSIM&0.944&0.930&0.907&0.902&0.790& 0.924&0.945\\
\hline
\multirow{2}{*} {SDGLR}&PSNR&27.12&25.18&26.57&26.80&  \textbf{27.41}&26.36&27.08
  \\
     \cline{2-9}  &PSSIM&0.948&0.930&0.917&0.908&0.830&0.928&0.952\\
\hline
\multirow{2}{*} {GTV}&PSNR& \textbf{27.30}  & 25.36 &  \textbf{26.89} &  27.10 & 27.03 & \textbf{26.63} &  27.36
  \\
     \cline{2-9}  & PSSIM &\textbf{0.950} & 0.934 & \textbf{0.918} &0.913 & 0.831  &\textbf{0.933} &0.954\\
\hline
\multirow{2}{*} {SDGGLR}&PSNR& 27.05 &  \textbf{25.74} &  26.82 & \textbf{27.42} &  26.97 & 26.42 &  \textbf{27.97} \\
   \cline{2-9}  & PSSIM&\textbf{0.950} & \textbf{0.935} & \textbf{0.918} & \textbf{0.915}& \textbf{0.850} & 0.933 & \textbf{0.955}  \\
\hline
 \multicolumn{9}{|c|}{ $\sigma_n=75$ } \\ 
\hline
  Model & Metric& 4-arm &Asterix&Dragon&Green dino& Man&Statue&Long dino\\
 \hline
 noise &PSNR&10.61&
	10.60&10.61&	10.61&	10.55&	10.61&	10.61
 \\
\hline
\multirow{2}{*} {GSV}&PSNR& 25.25 & 22.94 & 24.79 & 25.13 & 24.91 & 24.66 & 25.63
   \\
      \cline{2-9}  & PSSIM &0.933 &   0.912  &  0.887 &   0.883   & 0.753  &  0.908  &  0.940\\
\hline
\multirow{2}{*} {SDGLR}&PSNR&26.03  & 23.31 & 25.29 &  25.47 & \textbf{25.83} & 24.80 &  25.89
 \\
    \cline{2-9}  & PSSIM&0.935  &0.914 & 0.896 & 0.885 &  0.777 &  0.911  & 0.940 \\
\hline
\multirow{2}{*} {GTV}&PSNR&   \textbf{26.34} & 23.58 &  25.38 &  25.96 &  25.61  & 24.88  & 26.15
  \\
     \cline{2-9}  & PSSIM & 0.938   & 0.915  &0.897&0.893 & 0.772& 0.915 &  0.939\\
\hline
\multirow{2}{*} {SDGGLR}&PSNR& 26.04 &  \textbf{23.69} & \textbf{25.62} &  \textbf{26.21} & 25.54 & \textbf{25.32} &  \textbf{26.84}
 \\
    \cline{2-9}  &PSSIM &\textbf{0.940} & \textbf{0.915} &  \textbf{0.906} &\textbf{0.896} & \textbf{0.803} & \textbf{0.920} & \textbf{0.945}\\
\hline
% \multicolumn{9}{|c|}{ AET(s) } \\ 
%\hline
% Model & 4arms &Asterix&Dragon&Green dino& Man&Statue&Long dino&Average\\
%\hline
%SDGLR&\textbf{100}&\textbf{38}&\textbf{135}&\textbf{106}&\textbf{14}&\textbf{109}&\textbf{165}&\textbf{95}\\
%\hline
%GTV& 715&193&1115&809&55&840&1387&731 \\
%\hline
%GSV&148&54&205&153&16&166&262& 143 \\
%\hline
%SDGGLR& 160&65&269&180&16&188&301& 168 \\
%\hline
\end{tabular}
\end{center}
\end{table}

\subsection{Performance on Graphs without Coordinates}

\subsubsection{Settings}
To evaluate GGLR for graphs without coordinates, we consider two datasets: \texttt{FGNet} for age estimation and \texttt{FIFA17} for player-rating estimation. 
\texttt{FGNet} is composed of a total of $1002$ images of $82$ people with age range from $0$ to $69$ and an age gap up to $45$ years. 
%Since the feature dimension of face image is particularly high, 
We constructed kNN graph ($k=30$) for \texttt{FGNet} based on Euclidean distances between facial image features. 
Here, face images were nodes, and the graph signal was the age assigned to the faces. 
The VBC of kNN graph is generally very small and qualified as a manifold graph; in this case, VBC is $5.3e^{-6}$. 
Here, we considered an age interpolation problem: 
some nodes had missing age information, and the task was to estimate them. 
Specifically, we randomly removed age information from $90\%$, $70\%$, $40\%$, and $10\%$ of nodes---we call this the \textit{signal missing ratio}.
We first conducted experiments to test our embedding method, and then evaluated our proposed GGLR on the embedded latent space. 
%For manifold graph embedding, we first assume that the latent space is two-dimension. 

\texttt{FIFA17} includes the statistics for all players according to FIFA records. 
It was downloaded from the Kaggle website. 
We collected club, age, and rating information from $1618$ England players for our experiments.
The players were represented by a graph where nodes represented players, and edges connected players with the same club or age. 
The graph has a total of $94092$ edges.
VBC for this graph is  $3e^{-8}$---small enough to qualify as a manifold graph.
The graph signal here was the player rating. 
Similarly, we computed a graph embedding and then evaluated GGLR on the latent space. 

For manifold graph embedding, we computed eigenvector matrix $\P$ for matrix $\A$ using Algorithm\,1, and checked how many eigenvectors were required to significantly reduce a normalized variant of cost function $\text{tr}(\P^\top \A \P)$ in \eqref{eq:obj2}.
For both \texttt{FGNet} and \texttt{FIFA17}, the latent spaces were chosen as two dimensions.

\subsubsection{Age Estimation}

We applied our proposed embedding method and GGLR to age estimation. 
We first compared our embedding using general eigenvectors (GE) against LLE \cite{LLE} and LE \cite{LE}. 
From Table\;\ref{tab:numResult_age_rating}, we observe that GGLR on the coordinate vector provided by LLE and LE were not stable. 
They could not obtain better interpolation results, compared to the results obtained using GE. 
%\blue{I do not show the embedding results of different methods, because they are very similar. we can not see which one is the best from their plots.}
We also compared SDGGLR against SDGLR and GTVM. 
\blue{ Quantitative results in peak signal-to-noise ratio (PSNR) and accuracy (Acc) are shown in Table\;\ref{tab:numResult_age_rating}. PSNR is a measure computed from mean squared error; a higher PSNR means a higher signal quality.
Accuracy is the number of correct predictions divided by the total number of predictions. }
We observe that PSNR and accuracy of SDGGLR were better than GTVM, and slightly better than SDGLR when the missing ratio was smaller than 70\%. 
%\blue{If we need a visual comparison, I will show the results of different methods.}

\begin{table}
\caption{\footnotesize PSNR value(dB) and accuracy by different methods on dataset \texttt{FGNet} and \texttt{FIFA17} with different missing ratio.}
\label{tab:numResult_age_rating}
\vspace{-0.1in}
\tiny
\begin{center}
\begin{tabular}{|l|c|c|c|c|c|c|c|c|c|}
\hline
\multirow{3}{*} {Dataset}&\multirow{3}{*} {Method}& \multicolumn{8}{|c|}{Signal missing ratio} \\
\cline{3-10}
   & & \multicolumn{2}{|c|}{90\%}&\multicolumn{2}{|c|}{70\%}&\multicolumn{2}{|c|}{40\%}&\multicolumn{2}{|c|}{10\%}\\
\cline{3-10}
   & & PSNR&\blue{Acc}&PSNR&\blue{Acc}&PSNR&\blue{Acc}&PSNR&\blue{Acc}\\
\hline
 \multirow{5}{*} {FGNet}&SDGLR& \textbf{33.71} & \blue{\textbf{0.480}} &  37.82  & \blue{0.752} & 39.42 & \blue{0.898} & 44.46 & \blue{0.937}\\
\cline{2-10}  & GTVM&32.57 &   \blue{0.457 }& 37.56 &  \blue{0.670} & 39.33  &  0.833 & 43.67 &  \blue{0.923}\\
\cline{2-10}  & LE+GGLR&32.21& \blue{0.471} &  37.80 &\blue{0.756} & 40.48& \blue{0.904} &  45.40 &\blue{0.964}\\
\cline{2-10}  & LLE+GGLR&32.78 & \blue{0.467}& 38.38 &\blue{0.759} &    40.88& \blue{0.906} &   45.15& \blue{0.968}\\
\cline{2-10}  & GE+GGLR& 33.63& \blue{0.476} &  \textbf{39.19}& \blue{\textbf{0.761}} & \textbf{43.81}& \blue{\textbf{0.920}} &  \textbf{48.30} &\blue{\textbf{0.976}}\\
\hline
 \multirow{5}{*} {FIFA17}&SDGLR&25.65&\blue{0.167}&28.71&\blue{0.468}&33.48&\blue{0.748}&38.15&\blue{0.915} \\
\cline{2-10}  & GTVM&24.58&\blue{0.159}&28.65&\blue{0.453}&32.96&\blue{0.721}&37.58&\blue{0.908}\\
\cline{2-10}  & LE+GGLR&25.51 & \blue{0.164} &  \textbf{29.85}& \blue{\textbf{0.484}}&  33.73 &\blue{0.753} & 38.54& \blue{0.917}\\
\cline{2-10}  & LLE+GGLR&24.64 & \blue{0.171}& 28.93& \blue{0.478} &  33.25 &\blue{0.750}  &  38.63& \blue{0.921}\\
\cline{2-10}  & GE+GGLR&\textbf{25.74}& \blue{\textbf{0.172}} &  29.57& \blue{0.480} &   \textbf{34.20}& \blue{\textbf{0.767}} &   \textbf{39.01}& \blue{\textbf{0.923}} \\
\hline
\end{tabular}
\end{center}
\end{table}

\subsubsection{Player Rating Estimation}

From Table\;\ref{tab:numResult_age_rating}, we observe that GGLR on three embedding methods achieved better interpolation results compared to the results using GLR and GTVM. 
We also compared SDGGLR against SDGLR and GTVM. 
Quantitative results in PSNR are shown in Table\;\ref{tab:numResult_age_rating}. 
We observe that PSNR and accuracy of SDGGLR were better than SDGLR by nearly $1$dB on average.
%\blue{It is not easy to find a suitable dataset for our case. It takes me a lot of time. Nearly all the graph datasets are used for classification or without signal information, or not manifold graphs. So I choose to create the graph edges for our case. It is easier.}

\section{Conclusion}
\label{sec:conclude}
Unlike graph Laplacian regularizer (GLR) that promotes piecewise constant (PWC) signal reconstruction, we propose gradient graph Laplacian regularizer (GGLR) that promotes piecewise planar (PWP) signal reconstruction. 
For a graph signal endowed with sampling coordinates, we construct a gradient graph on which to define GLR, which translates to gradient-induced nodal graph (GNG) Laplacian in the nodal domain for regularization.
For a signal on a manifold graph without sampling coordinates, we propose a fast parameter-free method to first compute manifold coordinates. 
%We also propose a fast method to compute an MSE-minimizing tradeoff parameter in a MAP formulation with GGLR regularization.
Experimental results show that GGLR outperformed previous graph signal priors like GLR and graph total variation (GTV) in a range of graph signal restoration tasks.

\appendix
\ifnum\arXiv=1
    
\subsection{\blue{Computing Graph Gradients using Normalized Laplacian}}
\label{append:normL}

\blue{
Denote by $\L$ and $\L_n$ the combinatorial graph Laplacian and symmetric normalized graph Laplacian matrices of a positive connected graph $\cG$, respectively. 
We prove that the only signal $\x$ such that $(\L_n \x)^\top \L (\L_n \x) = 0$ is the first eigenvector $\v_1$ of $\L_n$ corresponding to smallest eigenvalue $0$. 
$\L_n$ is defined as $\L_n \triangleq \D^{-1/2} \L \D^{-1/2} = \I - \D^{-1/2} \W \D^{-1/2}$, where $\D$ and $\W$ are the degree and adjacency matrices, respectively, and $\I$ is the identity matrix. 
Eigenvector $\v_1$ of $\L_n$ corresponding to the smallest eigenvalue $0$ is $\d^{1/2}$, where $d_i = \sum_j W_{i,j}$ is the degree of node $i$, \ie,
\begin{align}
\L_n \v_1 &= \D^{-1/2} \L \D^{-1/2} \d^{1/2} \\
&= \D^{-1/2} \L \1 = \0 .
\end{align}
%Because $\v_1$ is in the left null space of $\L_n$, \ie, $\L_n^\top \v_1 = \L_n \v_1 = \0$, 
}

\blue{
Since $\L$ is PSD with (unnormalized) eigenvector $\u_1 = \1$ corresponding to smallest eigenvalue $0$, for $\y^\top \L \y = 0$, $\y$ must be a constant vector, \ie, $\y = c \1$ for some constant $c \neq 0$. 
In general, any vector $\x = \sum_{n=1}^N a_n \v_n$ can be expressed as a linear combination of eigenvectors $\{\v_n\}$ of $\L_n$ that are orthogonal basis vectors spanning the signal space $\cH = \mathbb{R}^{N}$, where $a_i = \langle \x, \v_i \rangle \triangleq \v_i^\top \x$. 
We now write
\begin{align}
\L_n \x &\stackrel{(a)}{=} \left( \sum_{i=2}^N \lambda_i \v_i \v_i^\top \right) \left(\sum_{n=1}^N a_n \v_n \right) \\
&\stackrel{(b)}{=} \sum_{i=2}^N \lambda_i a_i \v_i
\end{align}
where in $(a)$ we write $\L_n$ as a linear combination of rank-1 matrices that are outer-products $\v_i \v_i^\top$ of its eigenvectors $\v_i$, each scaled by eigenvalue $\lambda_i$. 
$(b)$ follows since eigenvectors of symmetric $\L_n$ are orthonormal, \ie, $\v_i^\top \v_j = \delta_{i-j}$. 
Note that there is no $\v_1 \v_1^\top$ in the summation, since $\lambda_1 = 0$. 
Thus, in order for $\y^\top \L \y = 0$, $\y$ must equal $c\1 = \L_n \x = \sum_{i=2}^N \lambda_i a_i \v_i$.
However, the inner-product $a_1 = \langle c\1, \v_1 \rangle = \v_1^\top c \1 = c \sum_i v_{1,i} = c \sum_i d_i^{1/2}$, where $d_i > 0, \forall i$ for positive connected $\cG$. 
Thus, $\sum_i d_i^{1/2} > 0$, and $a_1 = c \sum_i d_i^{1/2} \neq 0$ for $c \neq 0$. 
Thus, $\y$ cannot be written as $c \1 = \sum_{i=2}^N \lambda_i a_i \v_i$, except for the case when $c=0$ and $a_i = 0, \forall i$. 
We can thus conclude that the only signal where $(\L_x \x)^\top \L (\L_n \x) = 0$ is the first eigenvector $\v_1$ of $\L_n$ corresponding to eigenvalue $0$. 
}

\fi

\subsection{Proof of Lemma\;\ref{lemma:acyclic}}
\label{append:acyclic}

\begin{proof}
Denote each constructed directed edge by its \textit{displacement vector}, $(0, \ldots, 0, p_{j,k^o} - p_{i,k^o}, \ldots, )$, \ie, a sequence of $k^o-1$ zeros, followed by a \textit{positive} difference between node pair $[i,j]$ at coordinate $k^o$, followed by differences in the remaining coordinates (of any signs). 
A path $\cP$ is a sequence of connected edges, and it is a cycle iff the sum of corresponding displacement vectors is the zero vector. 
Denote by $l$ the first non-zero (positive) coordinate of all edges in $\cP$. 
Because coordinate $l$ of all edges has entries $\geq 0$, the sum of all entries at coordinate $l$ for $\cP$ must be strictly greater than $0$.
Hence $\cP$ cannot be a cycle.
\end{proof}

\subsection{Proof of Lemma\;\ref{lemma:localGradient}}
\label{append:locality}

We prove Lemma\;\ref{lemma:localGradient}---nodes selected to compute manifold gradient $\balpha^i$ for node $i$ are the closest $K^+$ nodes satisfying the acyclic condition.
The second property of manifold graph (Section\;\ref{subsubsec:manifold}) implies $d(i,j) < d(i,l)$ if nodes $j$ and $l$ are $h$- and $h+1$-hop neighbors of $i$ and $\exists (j,l) \in \cE$. 
Thus, prioritizing $h$-hop neighbor $j$ into candidate list $\cS$ before $h+1$-hop neighbor $l$, where $\exists (j,l) \in \cE$, means $\cS$ always contains the closer node $j$ to $i$ than $l$.
Since $h+1$-hop neighbors $l$ are next added to $\cS$ if $h$-hop neighbor $j$ is closer to $i$ than other nodes in $\cS$, $\cS$ always contains the closest node not already chosen.
Hence, by always selecting the closest node $j^*$ in list $\cS$ in each iteration, the procedure chooses the $K^+$ closest nodes to $i$ satisfying the acyclic condition.
$\Box$

\subsection{Proof of Theorem\;\ref{thm:GGLR}}
\label{append:GGLR}

\begin{proof}
By definition, $\cL = \G^\top \R^\top \bar{\L}^o \R \G$, where $\bar{\L}^o$ is a graph Laplacian matrix corresponding to a positive graph, which is provably PSD \cite{cheung18}.  
Hence, we can eigen-decompose $\bar{\L}^o$ into $\bar{\L}^o = \U \bSigma \U^\top$, where $\bSigma = \text{diag}(\{\lambda_i\})$ is a diagonal matrix of non-negative eigenvalues $\lambda_i \geq 0, \forall i$. 
\blue{Then, to show $\cL$ is also PSD is straightfoward \cite{golub12}}:
\begin{align*}
\x^\top \cL \x  &= \x^\top \G^\top \R^\top \U \text{diag}(\{\lambda_i\}) \U^{\top} \R \G \x \\
&\stackrel{(a)}{=} \y^\top \text{diag}(\{\lambda_i\}) \y  
= \sum_i \lambda_i y_i^2 \geq 0
\end{align*}
where in $(a)$ we define $\y \triangleq \U^\top \R \G \x$.
Since this is true $\forall \x$, $\cL$ is PSD, meaning $\x^\top \cL \x \geq 0, \forall \x$.

\blue{
Suppose signal $\tilde{\x}$ is a planar signal defined in \eqref{eq:planar} and only contains discrete points exactly on a hyperplane parameterized by $\balpha^*$. 
Then, using \eqref{eq:alphai}, any $\balpha^i$ evaluates to $\balpha^*$ regardless of $\W^i$, since $\C^i \balpha^* = \F^i \tilde{\x}$ minimizes the square error objective in \eqref{eq:WLS}.  
Thus, $\tilde{\balpha} = [\balpha^*; \ldots; \balpha^*]$ computed using \eqref{eq:alpha_vec} contains the \textit{same} gradient $\balpha^*$ for all $i \in \bar{\cV}$, and $\text{Pr}(\tilde{\balpha}) = \tilde{\balpha}^\top \bar{\L}^o \tilde{\balpha} = \tilde{\x}^\top \cL \tilde{\x} = 0$. 
Since $\tilde{\x}$ achieves the minimum value for GGLR $\x^\top \cL \x \geq 0$, GGLR promotes planar signal reconstruction.
}
\end{proof}

\subsection{Proof of Theorem\;\ref{thm:closedForm}}
\label{append:invertible}

\red{
To prove Theorem\;\ref{thm:closedForm}, we prove that $\bPhi \triangleq \H^\top \H + \mu \cL$ is invertible under the said conditions. 
First, $\H^\top \H$ and $\cL$ are PSD (Theorem \ref{thm:GGLR}), and $\mu > 0$; thus, $\bPhi$ is PSD by Weyl's Inequality \cite{Nathanson1996Weyls}.  
By Lemma\;\ref{lemma:eVec} and \ref{lemma:eVal}, $\cL$ has $K+1$ (non-orthogonal but linear independent) eigenvectors $\{\u_k\}_{k=0}^K$ for eigenvalue $0$.
%Denote by $\{\v_k\}_{k=1}^K$ the orthonormalized eigenvectors of $\{\u_k\}_{k=1}^K$ spanning the same $K$-dimensional eigen-subspace, unique up to a rotation matrix $\Q$. 
Define $\s \triangleq \sum_k a_k \u_k$, where $a_k \in \mathbb{R}$; we know $\s^\top \cL \s = 0$. 
We show next that, under the said conditions, $\s^\top \H^\top \H \s = \|\H \s\|_2^2 > 0$, or equivalently, $\H \s \neq \0$, if $\exists k, a_k \neq 0$.
Thus, there are no $\s \neq \0$ such that $\s^\top \H^\top \H \s$ and $\s^\top \cL \s$ are both zero, and thus $\bPhi$ is PD and invertible.
}

\red{
%We next decompose $\u_k = \y_k + \z_k$, where $\y_k \in \cN(\H)$ is the component in the \textit{nullspace} of $\H$, and $\z_k \in \cR(\H^\top)$ is the component in the orthogonal \textit{row space} of $\H$. Thus,
We first rewrite $\H\s$ as
\vspace{-0.05in}
\begin{align}
\H \s = \sum_{k = 0}^K a_k \H \u_k .
\end{align}
By the first condition, $\H \u_k \neq \0, \forall k$. 
By the second condition, $\{\H\u_k\}$ are mutually linearly independent, and thus there do not exist $\{a_k\}$ such that $\sum_k a_k \H \u_k = \0$ except $a_k = 0, \forall k$.
%Thus, a sufficient condition is that $\{\H^\# \H \u_k\}_{k=1}^K$ must be linear independent, where $\H^\#\H \u_k$ is projection of $\u_k$ onto $\cR(\H^\top)$. 
%Note that right pseudo-inverse of $\H$ is well defined, since by the first condition $\H$ is full row-rank. 
Thus, $\H \s \neq \0$ or $\|\H\s\|^2_2 > 0$ if $\exists k, a_k \neq 0$, and we conclude that $\bPhi$ is PD and invertible.
$\Box$
}

\bibliographystyle{IEEEbib}
\bibliography{ref2}

\ifnum\arXiv=0
    \begin{IEEEbiography}
[{\includegraphics[width=1in,height=1.25in,clip,keepaspectratio]{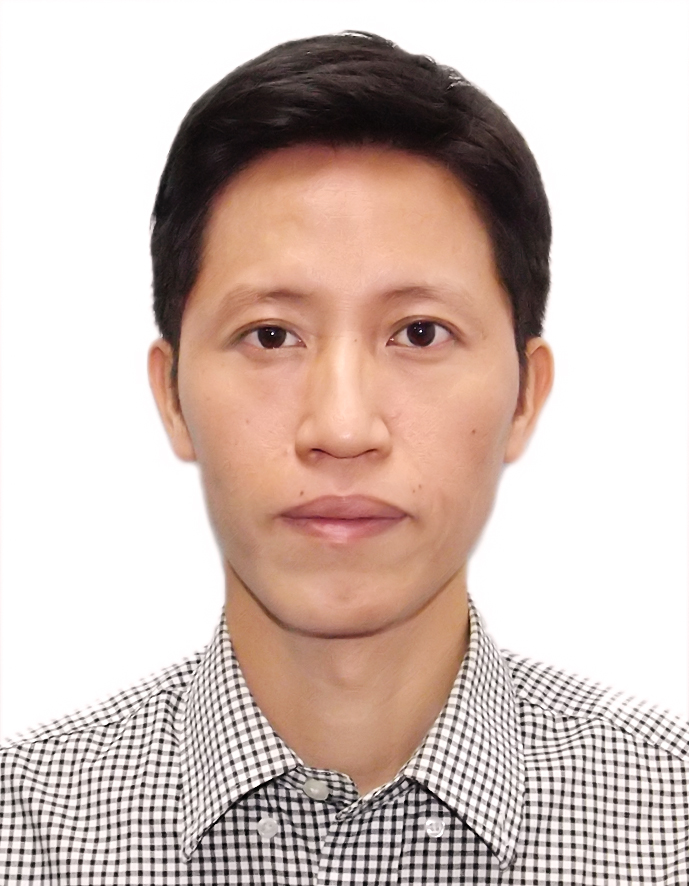}}]{Fei Chen} (S'09--M'12)  received the Ph.D. degree in signal and information processing from Zhejiang University, Hangzhou, China, in 2013. He is currently an Associate Professor with the College of Computer and Data Science, Fuzhou University, Fuzhou, China. His current research interests include machine learning, computer vision, and graph signal processing. He has served as associate editor for the Journal of Algorithms and Computational Technology (2022-present).
\end{IEEEbiography}

\begin{IEEEbiography}[{\includegraphics[width=1in,height=1.25in]{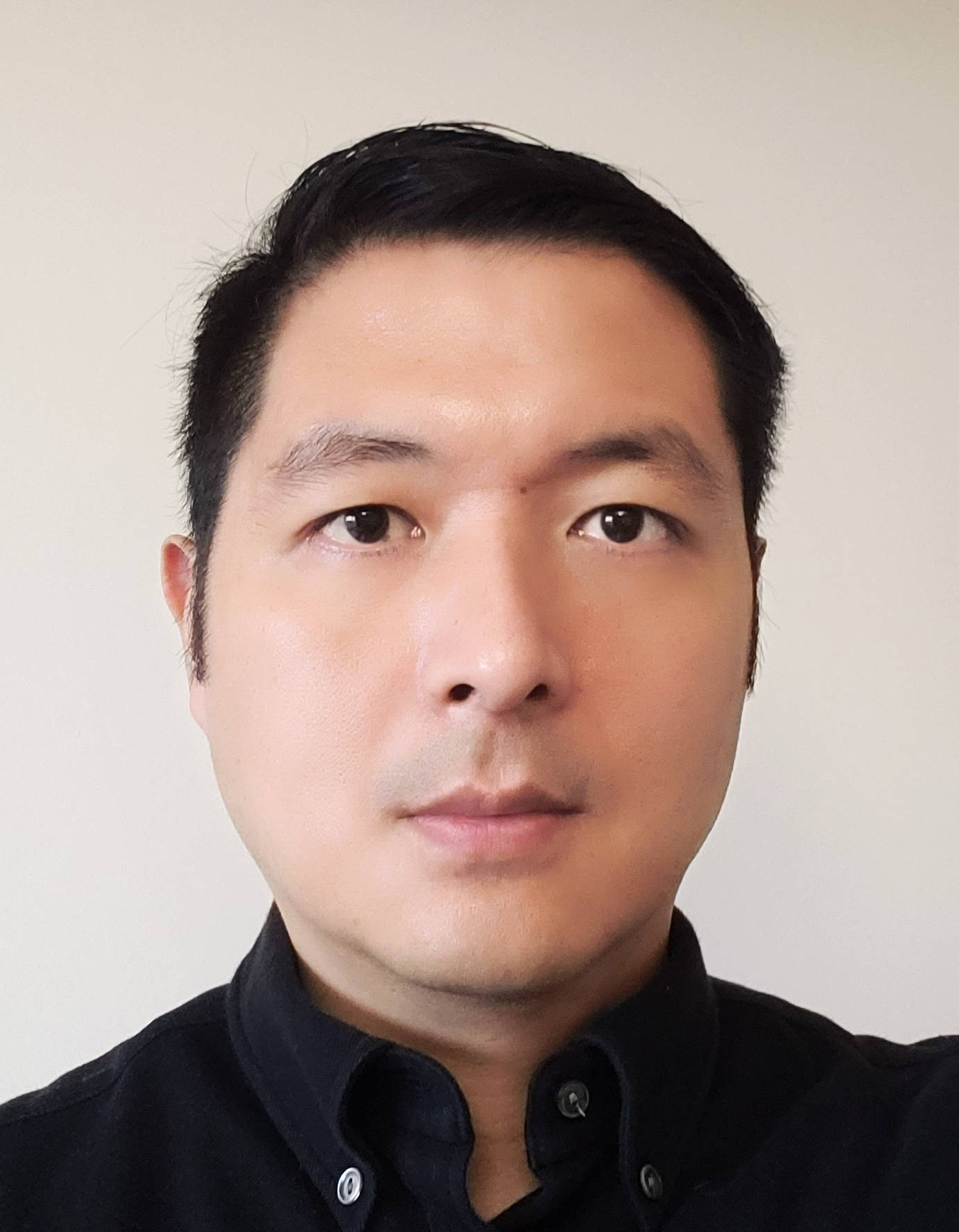}}]{Gene Cheung} (M'00--SM'07--F'21)
received the B.S. degree in electrical engineering from Cornell University in 1995, and the M.S. and Ph.D. degrees in electrical engineering and computer science from the University of California, Berkeley, in 1998 and 2000, respectively. 
He was a senior researcher in Hewlett-Packard Laboratories Japan, Tokyo, from 2000 till 2009. 
He was an assistant then associate professor in National Institute of Informatics (NII) in Tokyo, Japan, from 2009 till 2018. 
He is now a professor in York University, Toronto, Canada.
His research interests include 3D imaging and graph signal processing. 
%He has served as associate editor for multiple journals, including IEEE Transactions on Multimedia (2007--2011), IEEE Transactions on Circuits and Systems for Video Technology (2016--2017) and IEEE Transactions on Image Processing (2015--2019). He currently serves as senior associate editor for IEEE Signal Processing Letters (2021--present). He served as a member of the Multimedia Signal Processing Technical Committee (MMSP-TC) in IEEE Signal Processing Society (2012--2014), and a member of the Image, Video, and Multidimensional Signal Processing Technical Committee (IVMSP-TC) (2015--2017, 2018--2020). 
He is a co-author of several paper awards and nominations, including the best student paper finalist in ICASSP 2021, best student paper award in ICIP 2013, ICIP 2017 and IVMSP 2016, best paper runner-up award in ICME 2012, and IEEE Signal Processing Society (SPS) Japan best paper award 2016. 
He is a recipient of the Canadian NSERC Discovery Accelerator Supplement (DAS) 2019. He is a fellow of IEEE.
\end{IEEEbiography}

\begin{IEEEbiography}[{\includegraphics[width=1in,height=1.25in,clip,keepaspectratio]{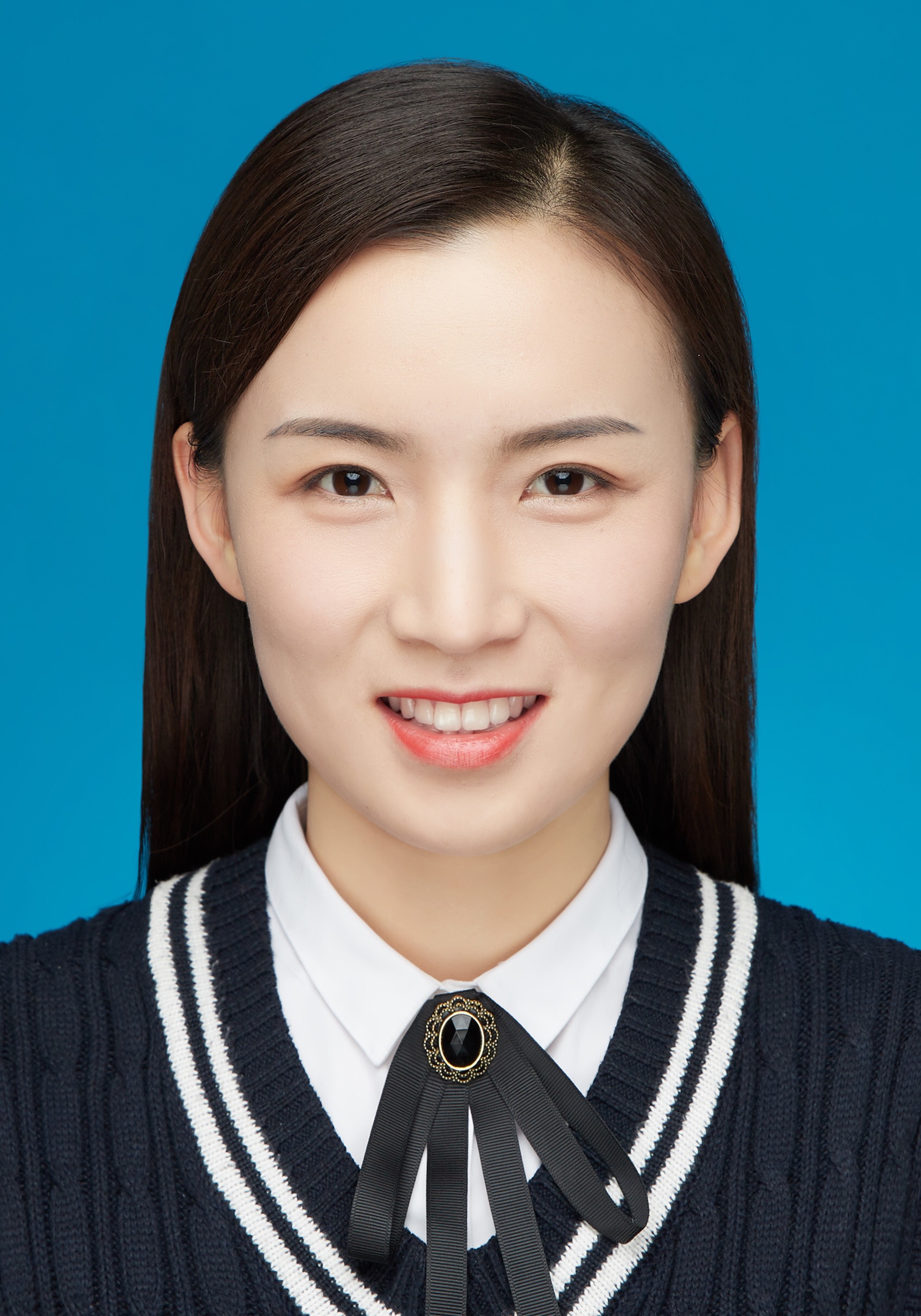}}]{Xue Zhang} (Member, IEEE) received the Ph.D. degree in signal and information processing from Beijing Jiaotong Unversity (BJTU), Beijing, China, in 2019. From 2015 to 2017, she was a Visiting Ph.D. student with the Signal Processing Laboratory (LTS4), Swiss Federal Institute of Technology (EPFL), Lausanne, Switzerland. In 2018, she was a Visiting Ph.D. student with National Institute of Informatics (NII), Tokyo, Japan. She was a Post-Doctoral Fellow with York University, Toronto, Canada, from 2019 till 2022. She is now a tenure-track Associate Professor in Shandong University of Science and Technology (SDUST), Qingdao, China. 
Her research interests include 3D image/video processing, interactive media navigation, and graph signal processing.
\end{IEEEbiography} 
\fi

\end{document}